 \theoremstyle{definition}
 \newtheorem{theorem}{Theorem}[section]%
 \newtheorem{lemma}[theorem]{Lemma}
 \newtheorem{proposition}[theorem]{Proposition}
  \numberwithin{equation}{section}
\theoremstyle{remark}
\newtheorem{remark}{Remark}
\newcommand\sgn{\mathrm{sgn}}
\newcommand\calD{\mathcal{D}}
\newcommand\C{\mathbb{C}}
\newcommand\N{\mathbb{N}}
\newcommand\R{\mathbb{R}}
\newcommand{\bra}[1]{\left \langle #1 \right \rvert}
\newcommand{\ket}[1]{\left \lvert #1 \right \rangle}
\def\vecg{\vec{\mathfrak{g}}_1}
\def\vecgg{\vec{\mathfrak{g}}_2}
\def\vecG{\vec{\mathfrak{G}}_1}
\def\vecGG{\vec{\mathfrak{G}}_2}
\DeclareMathOperator{\trace}{Tr}
\title[Entanglement entropy in a fermionic chain]{Entanglement entropy of two disjoint intervals separated by one spin in a chain of free fermion}
\author{L. Brightmore}
\address{L. Brightmore, School of Mathematics, University of Bristol, Fry Building, Woodland Road, Bristol BS8 1UG, UK}
\author{G.P. Geh\'er}
\address{G.P. Geh\'er, Department of Mathematics and Statistics, University of Reading, Whiteknights, P.O.
Box 220, Reading RG6 6AX, United Kingdom}
\email{G.P.Geher@reading.ac.uk or gehergyuri@gmail.com \newline
\hspace{1.6cm} http://www.math.u-szeged.hu/\~{}gehergy}
\author{A.R. Its}
\address{A.R. Its, Department of Mathematical Sciences, Indiana University-Purdue University Indianapolis, 402 N. Blackford St., Indianapolis, IN 46202-3267, United States of America}
\email{aits@iupui.edu}
\author{V.E. Korepin}
\address{V.E. Korepin, C.N. Yang Institute for Theoretical Physics, State University of New York at Stony Brook, Stony Brook, NY 11794-3840, USA}
\email{korepin@gmail.com}
\author{F. Mezzadri}
\address{F. Mezzadri, School of Mathematics, University of Bristol, Fry Building, Woodland Road, Bristol BS8 1UG, UK}
\email{F.Mezzadri@bristol.ac.uk}
\author{M.Y. Mo}
\address{M.Y. Mo, School of Mathematics, University of Bristol, Fry Building, Woodland Road, Bristol BS8 1UG, UK}
\author{J.A. Virtanen}
\address{J.A. Virtanen, Department of Mathematics and Statistics, University of Reading, Whiteknights, P.O.
Box 220, Reading RG6 6AX, United Kingdom}
\email{j.a.virtanen@reading.ac.uk}
\begin{document}

%\subjclass{Primary: ; Secondary: }

%\keywords{}

\thanks{Geh\'er was supported by the Leverhulme Trust Early Career Fellowship (ECF-2018-125), and by the Hungarian National Research, Development and Innovation Office (Grant no. K115383).
Its was supported by the NSF grant DMS-1700261.
Virtanen was supported in part by EPSRC grants EP/M024784/1 and EP/T008636/1.
Geh\'er and Virtanen also thank the American Institute of Mathematics and the SQuaRE program for their support.}

\begin{abstract}
We calculate the entanglement entropy of a non-contiguous subsystem of a chain of free fermions. The starting point is a formula suggested by Jin and Korepin, \texttt{arXiv:1104.1004}, for the reduced density of states of two disjoint intervals with lattice sites $P=\{1,2,\dots,m\}\cup\{2m+1,2m+2,\dots, 3m\}$, which applies to this model. As a first step in the asymptotic  analysis of this system, we consider its simplification to two disjoint intervals separated just by one site, and  we rigorously calculate the mutual information between these two blocks and the rest of the chain. In order to compute the entropy  we need to study the asymptotic behaviour of an inverse Toeplitz matrix with Fisher-Hartwig symbol using the the Riemann--Hilbert method.
\end{abstract}

\maketitle
\tableofcontents

\section{Introduction}

Quantum systems that are spatially separated can share information
that cannot be accounted for by the relativistic laws of classical
physics.  This fundamental property of quantum mechanics, which plays
a crucial role in quantum information, is known as \emph{entanglement}
and its measurement is still largely an open
problem. % and was discovered at the dawn of quantym physics,
% when Einstein, Podolsky and Rosen formulated their famous
% gedankenexperiment~\cite{EPR35}. 
There is not a unique way of quantifying entanglement; however, in
bipartite systems one of the most popular and successful measure of
entanglement is the von Neumann entropy~\cite{BBPS96}. 

Suppose that the system is in a pure state $\ket{\psi}$.  The density
matrix is simply the projection operator $\rho_{PQ} =
\ket{\psi}\mspace{-6.0 mu} \bra{\psi}$, where $P$ and $Q$ refer to the
two parts and the Hilbert space $\mathcal{H} = \mathcal{H}_P \otimes
\mathcal{H}_Q$.  The von Neumann entropy is defined as
\begin{equation}
  \label{v-n entropy}
  S\left(\rho_P\right) = S\left(\rho_Q\right) = - \trace \left(\rho_P
    \log \rho_P\right) = - \trace \left(\rho_Q
    \log \rho_Q\right),  
\end{equation}
where
\begin{equation}
  \label{eq:rhoA+rhoB}
  \rho_P = \trace_Q \rho_{PQ},  \quad \rho_Q = \trace_P
  \rho_{PQ} 
\end{equation}
and $\trace_P$ and $\trace_Q$ denote the partial traces over the
degrees of freedom of $P$ and $Q$, respectively.

In this paper we study the entropy of a two-block subsystem is a chain of free fermions.  More precisely, we consider 
the chain 
\begin{equation}\label{Fchain}
H_F = -\sum_{j=1}^{N}b_j^\dagger b_{j+1} + b_jb_{j+1}^\dagger,
\end{equation}
%\begin{equation}\label{Hxx}
%H_{XX} = -\sum_{l=1}^{N}\sigma_l^{x}\sigma^{x}_{l+1} + \sigma_l^{y}\sigma^{y}_{l+1}.
%\end{equation}
where the Fermi operators $b_j$ are defined by the anticommutation relations
\begin{equation}
\{b_j,b_k\} = 0 \quad \text{and} \quad \{b_j,b_k^\dagger\}= \delta_{jk}. 
\end{equation}
The starting point for this analysis is an integral representation for the von Neumann entropy of
the subsystem $P$ of fermions on lattice sites
\begin{equation}
\label{Ammm}
P=\{1,2,\dots , m\}\cup \{2m+1,2m+2,\dots, 3m\}.
\end{equation}
This was derived by B.-Q. Jin and the fourth co-author  in
~\cite{JK11}, and followed on from the success of this approach
to calculating the entropy of a contiguous block of spins in the XX model~\cite{JK04}.
Our goal is to compute the entanglement entropy between the subsystem~\eqref{Ammm} and the 
rest of the chain in the limit as $m \to \infty$. 

Over the past two decades the entanglement of bipartite
systems have been extensively studied in one-dimensional quantum critical
systems, in particular quantum spin chains.  Consider a spin chain with $N$ spins; at zero
temperature the Hamiltonian is in the ground state and in the
thermodynamic limit $N \to \infty$ it undergoes a phase transition
for some critical value of a parameter, \emph{e.g.}  the magnetic
field.  This quantum phase transition is characterized by an infinite
spin-spin correlation length.  Several papers have
addressed the problem of computing the entanglement of the first $L$ consecutive 
spins and the rest of the chain in various 
contexts~\cite{OAFF02,ON02,AOPFP04,VLRK03,JK04,KM04,Kor04,KM05,IJK05,IMM08}.
It is well known that the entanglement entropy grows as
\begin{equation}
  \label{eq:oneinterval}
  S\left(\rho_P\right) \propto \log L, \qquad L \to \infty.
\end{equation}
Recently, there has been considerable interest in computing
$S\left(\rho_P\right)$ in quantum spin chains when $P$ is made of disjoint regions of space. Up to now
this problem has received attention within the framework  Conformal Field Theory 
(CFT)~\cite{CCT09,CCT11,Car13,ATC10,FC10,FPS09}.  One-dimensional quantum critical systems
can be described in terms of a massless CFT. 
More general holographic descriptions are given in \cite{Tak} and  ~\cite{MVS11}.
 When $P$ is one
interval, then the coefficient of the logarithm
in~\eqref{eq:oneinterval} is proportional to the central charge $c$,
which is a characteristic of the theory~\cite{CaCa04}.     If the theory is bosonic,
\textit{i.e.} if $c$ is an integer, then in the two-interval case the von
Neumann entropy  depends on the compactification radius of the bosonic
field~\cite{FPS09}.
In the papers~\cite{CCT09,CCT11}  the moments of the density matrix were obtained for two-intervals as ratios of Jacobi theta-functions.  Unfortunately, they could not compute the analytic continuation of their formulae in terms of the exponent of the moments, which would have led them to an expression for the von Neumann entropy, except in the asymptotic limit of small intervals~\cite{CCT11}.   

A well established approach to solve quantum spin chains that goes back to Lieb \emph{et al.}~\cite{LES61} 
is to map the spin operators into Fermi operators using the Jordan-Wigner transformation.  For example, the 
XX chain 
\begin{equation}\label{Hxxb}
H_{XX} = -\sum_{j=1}^{N}\sigma_j^{x}\sigma^{x}_{j+1} + \sigma_j^{y}\sigma^{y}_{j+1}
\end{equation}
is mapped into~\eqref{Fchain}. 
This approach works well when computing the von Neuman and Reny entropies of a single contiguous interval, as 
the entropy of the first $L$ spins coincides with that of the first $L$ fermions in~\eqref{Fchain}. However, in the
case of disjoint intervals in a spin chain  there is the extra complication due to the fact that in the fermionic space the operators 
between blocks contribute to the entropy, because the Jordan-Wigner transformation is not local.
This problem was tackled using CFT by Fagotti and Calabrese~\cite{FC10}. In order to avoid this technicality, our starting 
point is the fermionic chain~\eqref{Fchain}.  In the model~\eqref{Fchain}, 
 the Fermi operators in between blocks do not appear in the computation 
of the reduced density of states; therefore, the approach adopted in~\cite{JK11} applies.  This simplification allows a rigorous computation of the asymptotic behaviour of the entanglement entropy as $m \to \infty$ while at the same time preserving the physical 
phenomenon that we want to study.  This idea is not new and was adopted by Ares \emph{et al.}~\cite{AEF14}, who performed 
a numerical study and conjectured a 
formula of the entropy of several disjoint blocks in a chain of Fermi operators.  In fact, our main result --- formula (\ref{eq:mut-inf})
--- seems to be consistent with Ares-Esteve-Falceto conjecture. We hope to address this issue in all detail in the forthcoming publication.

%where the
%blocks are separated by one spin. The motivation
%is to make progress toward the calculation of the mutual information between two disjoint
%blocks of spins when the interval between two blocks contains an arbitrary number of spins, which is a notoriously difficult problem. 
%The model is the XX chain with zero magnetic field,

One of the main features of this representation of the von Neumann entropy derived in~\cite{JK11} for the
two-blocks~\eqref{Ammm} is that the
computation of the entanglement \textcolor{black}{reduces to an integral involving the determinant of a block-matrix, whose two block-diagonal entries are Toeplitz determinants, see formulae (41), (48)--(51) in \cite{JK11}, or \eqref{eq:A1}--\eqref{eq:entr-int} below.}
This calculation would be the ultimate
goal, but at the moment it is out of our reach --- Remark 1 in Section 3. 
In this paper, instead,
we consider a simplified example of a subsystem consisting of two intervals
separated by just one lattice site.  The  asymptotic analysis of this model is already
much more difficult  than that of a single block Hamiltonian. Indeed, we not only have to evaluate the asymptotics of the Toeplitz determinant itself, but we also need to extract detailed information on the asymptotic behaviour {of the inverse} Toeplitz matrix. 

It should also be noticed that,  besides its intrinsic interest
as a physical problem, the study of the asymptotics of Toeplitz
determinants has a long history going back to Szeg\H{o}~\cite{Sze15,Sze52} as such matrices are ubiquitous in
mathematics and physics.  Indeed, starting from the seminal works of Szeg\H{o}, Kaufman and Onsager, the Toeplitz
determinants have been playing a very important role in many areas of analysis and
mathematical physics. Moreover, a growing interest has been recently developed to the study of certain generalizations
of Toeplitz determinants. The most known among those are the determinants
of Toeplitz plus Hankel matrices --- see \cite{dik,BE17,GI}, the bordered 
Toeplitz determinants \cite{AYP87}, and the integrable Fredholm
determinants \cite{IIKS90,D-intop}. These determinants appear in the study of
Ising model in the zig-zag layered half-plane~\cite{CHM20}, in the spectral analysis of the
Hankel matrices, in the study of the next-to-diagonal correlation functions in the Ising model (\cite{AYP87}),
 and in the theory of exactly solvable quantum models. In this paper,  motivated 
 by the physical model in the context of quantum information, we are concerned with 
 yet another generalization  of Toeplitz determinants, which are 
 certain  finite rank deformations  of the standard Toeplitz matrices. In order to study
 such deformations, we need to analyse the asymptotic behaviour not only of the Toeplitz determinants {\it per se}
 but of the inverse Toeplitz matrices as well. The evaluation of the asymptotic behavior  of this new class of 
 determinants which is done in this paper  is, we believe, an important analytical result
 in its own right. 
 
To summarize, in this article we compute the mutual information between 
a two blocks of Fermi operators separated by one lattice site and the rest of the chain in the Hamiltonian~\eqref{Fchain} explicitly.  Our approach is based on the Riemann-Hilbert method, which has the additional  advantage of being mathematically rigorous.\color{black}

\section{The main result}
Let $C$ denote the unit circle on the complex plane and 
\begin{equation*}
	g\colon C \to \C, \quad
	g(z) = \left\{ \begin{matrix}
		1 & \Re z > 0, \\ 
		-1 & \Re z < 0
	\end{matrix} \right..
\end{equation*}
The Fourier coefficients of $g$ are
\begin{equation*}
	g_l := \frac{1}{2 \pi} \int_{-\frac{\pi}{2}}^{\frac{3\pi}{2}} e^{-il\theta} g(e^{i\theta}) d\theta 
	= \oint_{C} z^{-l} g(z) \frac{dz}{2 \pi i z} = \frac{2}{l \pi} \sin \frac{l \pi}{2}
	= \left\{ \begin{matrix}
		0 & l \text{ is even}, \\ 
		(-1)^{\frac{l-1}{2}} \frac{2}{l \pi} & l \text{ is odd}
	\end{matrix} \right..
\end{equation*}
In general, the $m\times m$ Toeplitz matrix and determinant with symbol $\phi \in L^\infty(C)$ will be denoted by $T_m[\phi]$ and $D_m[\phi]$, respectively.
As it is well-known, the spectral norm (or operator norm) of $T_m[\phi]$ satisfies $\|T_m[\phi]\| \leq \|\phi\|_\infty$. 
In particular, as $T_m[g]$ is a self-adjoint matrix, we obtain the relation $\sigma(T_m[g])\subseteq [-1,1]$ for its spectrum.

Let $k,m,n\in\N$.
We introduce the following matrix and determinant
\begin{equation}\label{eq:A1}
	A = \left(\begin{matrix}
		A_{11} & A_{12} \\
		A_{21} & A_{22}
	\end{matrix}\right)\in\C^{(m+n)\times(m+n)},
	\qquad
	D(\lambda) = \det(\lambda I - A) \;\; (\lambda\in\C)
\end{equation}
where 
\begin{equation}\label{eq:A2}
	A_{11} = -T_m[g]\in\C^{m\times m}, \quad A_{22} = -T_n[g]\in\C^{n\times n}, \quad A_{12} = A_{21}^T = \left(\mathcal{A}_{ij}\right)_{i=1,\dots, m; j=1,\dots, n}\in\C^{m\times n},
\end{equation}
and
\begin{equation}\label{eq:calA-or}
	\mathcal{A}_{ij} = -\left|\begin{matrix}
		g_{i-j-m-k} & g_{i-m-1} & g_{i-m-2} & \dots & g_{i-m-k} \\
		g_{1-j-k} & g_0 & g_{-1} & \dots & g_{1-k} \\
		g_{2-j-k} & g_1 & g_0 & \dots & g_{2-k} \\
		\vdots & \vdots & \vdots & \ddots & \vdots \\
		g_{-j} & g_{k-1} & g_{k-2} & \dots & g_0
	\end{matrix}\right|,
\end{equation}
which is the determinant of a $(k+1)\times(k+1)$ matrix, $k\in\N$.

Define the quantity
\begin{equation}\label{eq:entr-int}
	S(\rho_P) = \lim_{\varepsilon\searrow 0} \frac{1}{2\pi i} \oint_{\Gamma_\varepsilon} e(1+\varepsilon,\lambda)\frac{d}{d\lambda}\ln D(\lambda) \; d\lambda,
\end{equation}
where 
\begin{equation*}
	e(x,v) := -\frac{x+v}{2}\ln\frac{x+v}{2}-\frac{x-v}{2}\ln\frac{x-v}{2}.
\end{equation*}
The contour $\Gamma_\varepsilon$ goes around the $[-1,1]$ interval once in the positive direction avoiding the cuts $(-\infty,-1-\varepsilon]\cup[1+\varepsilon,\infty)$ of $e(1+\varepsilon,\cdot)$, see Figure \ref{fig:Gamma_epsilon}.
For instance $\Gamma_\varepsilon$ can be the circle $(1+\frac{1}{2}\varepsilon)C$.
\begin{figure}[h]
	\includegraphics[scale=0.5]{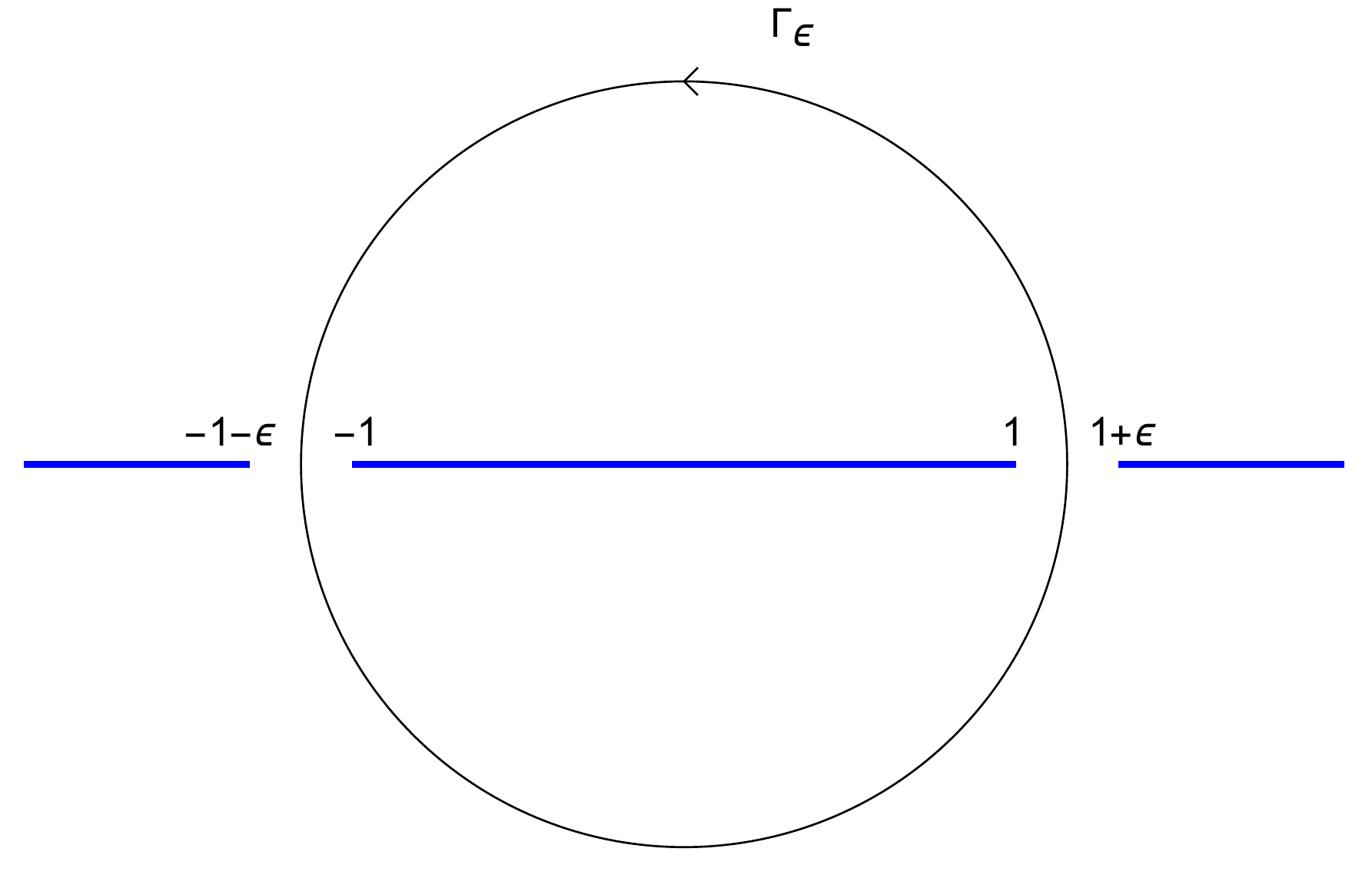}
	\caption{The cuts and the contour in (\ref{eq:entr-int}).}\label{fig:Gamma_epsilon}
\end{figure}
For a general $k, m, n$ we interpret the quantity in \eqref{eq:entr-int} as a measure of entanglement {(kind of an entropy)} between the subsystem 
\begin{equation}\label{eq:lat-m-k-n}
	P = \{1,2,\dots, m\}\cup\{m+k+1,m+k+2,\dots,m+k+n\}.
\end{equation}
and the rest of the chain of free fermions ~\eqref{Fchain} in the limit $N \to \infty$. 
Here is our motivation for this interpretation.

Let $\mathcal{H}$ be an Hilbert space spanned by the fermions in the chain ~\eqref{Fchain}.  Decompose $\mathcal{H}$ in the direct product $\mathcal{H}=\mathcal{H}_P \otimes \mathcal{H}_Q$, where $\mathcal{H}_P$ is the space generated by the fermions $b_j$ at the lattice sites $P$ indicated in ~ \eqref{eq:lat-m-k-n}. Write $P=P_1 \cup P_2$, where $P_1=\{1,\dotsc,m\}$ and  $P_2=\{m+k+1,\dotsc,m +k +n\}$ and  denote by $\ell_{P_1}$ and $\ell_{P_2}$ the sizes of $P_1$ and $P_2$,
respectively.  A standard calculation leads to the formula 
\begin{equation}
\rho_P = \frac{1}{2^{\ell_{P_1}+ \ell_{P_2}}} \sum_{a=0,1}\left\langle\left(\prod_{j\in P} b_j^a\right)\right\rangle\left(\prod_{j \in P}b_j^a\right)^\dagger
\end{equation}
for the reduced density matrix.  The angle brackets in this equation denote the expectation value with respect to the ground state.  Applying Wick's theorem gives
\begin{equation}
\label{Wicktheorem}
\rho_p = \prod_{j \in P} \left( \left \langle b_j^\dagger b_j \right \rangle b_j^\dagger b_j + \left \langle b_j b_j^\dagger\right 
\rangle b_j b_j^\dagger \right).
\end{equation}
%The derivation from~\eqref{Wicktheorem} of the formula for the mutual entropy  that will be the starting point of our analysis is detailed in~\cite{JK11} and we refer %the reader to the literature for more details.  Here we will present our main result. 
The above subsystem consists of two blocks/intervals of $m$ and $n$ fermions separated by a gap of length $k$. Using ~\eqref{Wicktheorem}, 
it was shown in \cite{JK11} that in the special case when $k=m=n$, and in the thermodynamical limit $N \to \infty$,
the quantity   $S(\rho_P)$  is indeed the von Neumann entropy of \eqref{eq:lat-m-k-n}.  We refer the reader to \cite{JK11} for more details.

Our ultimate interest is to analyse $S(\rho_P)$ as $k,m,n\to\infty$, however, at this point the general problem seems to be far too complicated to attack directly
(see Remark \ref{rem1} in Section \ref{sec3} below).
Therefore we decided to start with the easier case when the gap between the two intervals is fixed to be $k=1$, that is, when \eqref{eq:lat-m-k-n} becomes
\begin{equation}\label{eq:lat-m-1-n}
	P = \{1,2,\dots, m\}\cup\{m+2,m+3,\dots,m+n+1\}.
\end{equation}
In this case the entries of $A_{12}$ in \eqref{eq:calA-or} become
\begin{equation*}
	\mathcal{A}_{ij} = -\left|\begin{matrix}
		g_{i-j-m-1} & g_{i-m-1} \\
		g_{-j} & g_0
	\end{matrix}\right|
\end{equation*}
and, taking into account that $g_0 =0$,
\begin{equation}\label{k1Aij}
\mathcal{A}_{ij} =  g_{i-m-1}\cdot g_{-j}.
\end{equation}
As we shall see, this simplest case already leads to a mathematically very challenging problem.

The asymptotic behaviour of the von Neumann entropy $S(\rho_P^{(n)})$ of the interval $\{1,2,\dots, n\}$ was calculated in \cite{JK04}.
In particular, it was shown there that
\begin{equation}\label{eq:as-int}
	S\left(\rho_P^{(n)}\right) = \lim_{\varepsilon\searrow 0} \frac{1}{2\pi i} \oint_{\Gamma_\varepsilon} e(1+\varepsilon,\lambda)\frac{d}{d\lambda}\ln D_n[\phi] \; d\lambda,
\end{equation}
where $\phi(z) = g(z)+\lambda$ $(z\in C)$.
Therefore the problem of calculating the limiting behaviour of the entropy of \eqref{eq:lat-m-1-n} reduces to the calculation of the \emph{mutual information} between the two intervals:
\begin{equation*}
	S(\rho_P^{(m)})+S(\rho_P^{(n)})-S(\rho_P) = \lim_{\varepsilon\searrow 0} \frac{1}{2\pi i} \oint_{\Gamma_\varepsilon} e(1+\varepsilon,\lambda)\frac{d}{d\lambda}\left( \ln D_m[\phi] + \ln D_n[\phi] - \ln D(\lambda) \right) \; d\lambda
\end{equation*}
To analyse the asymptotic behaviour of this quantity as $m,n\to\infty$ is still mathematically very complicated.
However, as we expect this quantity to converge to a finite number, it makes sense to consider the following limit instead, where the $\varepsilon$ and $m,n$ limits are interchanged:
\begin{equation}\label{eq:interchanged}
	\lim_{\varepsilon\searrow 0} \lim_{m,n\to\infty} \frac{1}{2\pi i} \oint_{\Gamma_\varepsilon} e(1+\varepsilon,\lambda)\frac{d}{d\lambda}\left( \ln D_m[\phi] + \ln D_n[\phi] - \ln D(\lambda) \right) \; d\lambda.
\end{equation}
We point out that a similar interchanged limit was considered in \cite{IJK,JK04} for the case of one interval. 
The value of the limit \eqref{eq:interchanged} is what we shall calculate and interpret as the mutual information between the two intervals.
It will turn out that indeed this is a finite number, which is stated in our main theorem.

\begin{theorem}\label{thm:main}
	Let $\widehat{D}(\lambda) = \frac{D(\lambda)}{D_m[\phi]\cdot D_n[\phi]}$ $(\lambda\in\C\setminus [-1,1])$.
	The limiting mutual information between the two intervals of the subsystem $P$ from \eqref{eq:lat-m-1-n} is
	\begin{equation}\label{eq:mut-inf}
	\lim_{\varepsilon\searrow 0} \lim_{m,n\to\infty} \frac{-1}{2\pi i} \oint_{\Gamma_\varepsilon} e(1+\varepsilon,\lambda)\frac{d}{d\lambda}\ln \widehat{D}(\lambda) \; d\lambda = \color{black} 2\ln 2 -1 \approx 0.386294.
	\end{equation}
\end{theorem}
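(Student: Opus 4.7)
The plan is to exploit the rank-two structure of the off-diagonal coupling when $k=1$. Using $g_0=0$, formula \eqref{k1Aij} gives $A_{12}=uw^T$ with $u\in\C^m$, $w\in\C^n$ defined by $u_i=g_{i-m-1}$, $w_j=g_{-j}$, and consequently $A_{21}=wu^T$. Embedding $\tilde u=(u,0)^T$ and $\tilde w=(0,w)^T$ into $\C^{m+n}$ and writing $B=\diag(-T_m[g],-T_n[g])$, so that $\lambda I-B=\diag(T_m[\phi],T_n[\phi])$ has determinant $D_m[\phi]\,D_n[\phi]$, I obtain
\[
\widehat D(\lambda)=\det\bigl(I-(\lambda I-B)^{-1}(\tilde u\tilde w^T+\tilde w\tilde u^T)\bigr).
\]
Sylvester's identity $\det(I-XY)=\det(I-YX)$ applied to the evident rank-two factorisation collapses this to a $2\times 2$ determinant; the cross terms $\tilde u^T(\lambda I-B)^{-1}\tilde w$ vanish because $(\lambda I-B)^{-1}$ is block-diagonal, and one is left with
\[
\widehat D(\lambda)=1-\alpha_m(\lambda)\,\beta_n(\lambda),\qquad \alpha_m(\lambda):=u^T T_m[\phi]^{-1}u,\quad \beta_n(\lambda):=w^T T_n[\phi]^{-1}w.
\]

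The core of the proof is then the large $m,n$ asymptotics of these two scalar quadratic forms, uniformly in $\lambda$ on the fixed contour $\Gamma_\varepsilon$. For $\lambda\notin[-1,1]$ the symbol $\phi(z)=g(z)+\lambda$ is non-vanishing on the unit circle and carries two Fisher--Hartwig jump discontinuities at $z=\pm i$ with jump ratio $(1+\lambda)/(\lambda-1)$. I would apply the Riemann--Hilbert/steepest-descent analysis for the orthogonal polynomials on the unit circle with symbol $\phi$, but extract from it not merely the asymptotics of $D_m[\phi]$, rather the full asymptotics of the individual entries of $T_m[\phi]^{-1}$ reconstructed from the $2\times 2$ matrix solution of the Riemann--Hilbert problem. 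The test vector $u$ is supported near the right end of the first interval and $w$ near the left end of the second, yet in both cases the entries $g_{-l}$ decay only like $1/l$ with sign oscillation, so the bilinear forms do not localise and must be summed against inverse-Toeplitz entries coming from both endpoint parametrices as well as from the bulk. This slow decay, together with the non-trivial subleading Fisher--Hartwig asymptotic structure from the two singularities at $\pm i$, is the main technical obstacle of the paper; the expected outcome is that $\alpha_m(\lambda)$ and $\beta_n(\lambda)$ converge to explicit functions $\alpha(\lambda)$, $\beta(\lambda)$, and hence $\widehat D(\lambda)\to 1-\alpha(\lambda)\beta(\lambda)$ uniformly on $\Gamma_\varepsilon$ as $m,n\to\infty$.

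Uniform convergence on the compact contour $\Gamma_\varepsilon$ permits passing to the $m,n\to\infty$ limit under the $\lambda$-integral, and the subsequent outer limit in $\varepsilon$ is handled exactly as in \cite{IJK,JK04}, requiring only integrability of $\ln(1-\alpha(\lambda)\beta(\lambda))$ against the explicit weight near the endpoints $\lambda=\pm 1$. To evaluate the remaining contour integral
\[
\frac{-1}{2\pi i}\oint_{\Gamma} e(1,\lambda)\,\frac{d}{d\lambda}\ln\bigl(1-\alpha(\lambda)\beta(\lambda)\bigr)\,d\lambda,
\]
I would integrate by parts on the closed contour to transfer the derivative onto $e(1,\lambda)$, use the identity $\tfrac{d}{d\lambda}e(1,\lambda)=\tfrac{1}{2}\ln\tfrac{1-\lambda}{1+\lambda}$, and then collapse $\Gamma$ onto the cut $[-1,1]$ of $\ln(1-\alpha\beta)$. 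The calculation thereby reduces to a real integral of $\tfrac{1}{2}\ln\tfrac{1-\lambda}{1+\lambda}$ against the jump of $\ln(1-\alpha\beta)$ across $[-1,1]$. With the explicit form of $\alpha(\lambda)\beta(\lambda)$ obtained in the Riemann--Hilbert step, this becomes a standard dilogarithm-type evaluation expected to return the announced value $2\ln 2-1$.
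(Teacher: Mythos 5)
Your reduction of $\widehat D(\lambda)$ to $1-\alpha_m(\lambda)\beta_n(\lambda)$ with $\alpha_m=\vecg^T T_m[\phi]^{-1}\vecg$ and $\beta_n=\vecgg^T T_n[\phi]^{-1}\vecgg$ is exactly the paper's first step, and your endgame (integrate by parts, collapse $\Gamma_\varepsilon$ onto $[-1,1]$, evaluate a real integral) is the paper's last step. But the heart of the proof --- determining the limit of these quadratic forms --- is left as an expectation, and the route you sketch for it is not what the paper does and is unlikely to work as described. You propose to recover the individual entries of $T_m[\phi]^{-1}$ from the Riemann--Hilbert analysis and sum them against $\vecg$; since the entries of $\vecg$ decay only like $1/\ell$ with oscillation, that double sum does not localise and would require uniform control of $(T_m[\phi]^{-1})_{jk}$ over all index ranges (bulk as well as both corners), which entrywise RH asymptotics does not readily deliver. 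The paper sidesteps entrywise asymptotics entirely: it observes that $\sum_{j=0}^{m-1}g_{j-m}z^j=-K[z^m](z)$ for the integrable operator $K$ attached to $T_m[\phi]$, so that $\vecg^T T_m[\phi]^{-1}\vecg=-\sum_{\ell=1}^m g_\ell\,M_{\ell,11}$, where $Y_K=I+\sum_\ell M_\ell z^{-\ell}$ solves the $Y_K$--RHP. That sum is then split into a Szeg\H{o}-function piece producing the leading term and an $R-I$ piece that must be shown to decay, and the latter is delicate: it needs $|\Re\beta|<\tfrac14$ (hence the specific choice $\Gamma_\varepsilon=(1+\tfrac{\varepsilon}{2})C$), plus careful contour deformations and local-parametrix estimates near $\pm i$. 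None of this machinery appears in your sketch.

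Two further concrete items are missing. First, the explicit limiting value $\alpha_m(\lambda)\to i\tan\bigl(\tfrac{\pi}{2}\beta\bigr)$ with $\beta=\tfrac{1}{2\pi i}\ln\tfrac{\lambda+1}{\lambda-1}$ (Lemma~\ref{lem:main}): without it, the closing integral cannot even be set up, since the final evaluation rests on the boundary jump $\tan\bigl(\tfrac{\pi}{2}\beta_+\bigr)-\tan\bigl(\tfrac{\pi}{2}\beta_-\bigr)=2\sqrt{1-\lambda^2}$ across $[-1,1]$, which is a property of this particular function, not of ``an expected limit.'' Second, you do not note the reflection identity $J\,T_n[g]\,J=T_n[g]$ (with $J$ the flip matrix), which is what shows $\beta_n$ equals $\alpha_n$, so that only a single quadratic form needs to be analysed. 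Your overall plan is correct and aligned with the paper, but as written it omits the entire technical content from the integrable-operator reformulation through the $R$--RHP estimates.
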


\textcolor{black}{The main tool in the proof of the above theorem will be an asymptotic analysis of an inner product involving the inverse Toeplitz matrix $T_m[\phi]^{-1}$. We phrase the related statement in the next section as Lemma \ref{lem:main}.}

\section{Some preliminary calculations}\label{sec3}
We introduce the notations
\begin{align*}
	\vecg = (g_{-m}, g_{-m+1},\dots, g_{-1})^T \in \C^{m}, &\qquad \vecgg = (g_{-1}, g_{-2},\dots, g_{-n})^T \in \C^{n}, \\
	\vecG = T_m[\phi]^{-1}\vecg \in \C^{m}, &\qquad \vecGG = T_n[\phi]^{-1}\vecgg  \in \C^{n}.
\end{align*}
Notice that for all $\lambda\in\C\setminus [-1,1]$ we have (see also (\ref{k1Aij}))
\begin{equation*}
	\lambda I - A 
	= \left(\begin{matrix}
		T_m[\phi] & -\vecg \vecgg^T \\
		-\vecgg \vecg^T & T_n[\phi]
	\end{matrix}\right)
	= \left(\begin{matrix}
		T_m[\phi] & 0 \\
		0 & T_n[\phi]
	\end{matrix}\right)
	\cdot \left[
	I-
	\left(\begin{matrix}
		0 & \vecG \vecgg^T \\
		\vecGG \vecg^T & 0
	\end{matrix}\right)
	\right].
\end{equation*}
Therefore we obtain
\begin{align*}
	D(\lambda)
	= D_m[\phi] \cdot D_n[\phi] \cdot \det\left(I - \vecG\vecgg^T\vecGG\vecg^T\right) 
	= D_m[\phi] \cdot D_n[\phi] \cdot \left(1 - \langle\vecG,\vecg\rangle\langle\vecGG,\vecgg\rangle\right)
\end{align*}
where we used standard facts about rank-one matrices and the following identity for block-matrices:
\begin{equation*}
	I-\left(\begin{matrix}
		0 & B \\
		C & 0
	\end{matrix}\right) 
	= \left(\begin{matrix}
		I-BC & -B \\
		0 & I
	\end{matrix}\right)
	\cdot\left(\begin{matrix}
		I & 0 \\
		-C & I
	\end{matrix}\right).
\end{equation*}
In particular, we infer
\begin{equation*}
	\widehat{D}(\lambda) = 1 - \left\langle\vecG,\vecg\right\rangle\left\langle\vecGG,\vecgg\right\rangle.
\end{equation*}
Thus in order to compute the mutual information, we need to deal with the inner products $\langle\vec{\mathfrak{G}_j},\vec{\mathfrak{g}_j}\rangle$.
It turns out that it is sufficient to handle the case $j=1$. 

\begin{proposition}
	Let us use the notations $\vec{\mathfrak{g}}_1^{(m)}=\vec{\mathfrak{g}}_1$, $\vec{\mathfrak{G}}_1^{(m)}=\vec{\mathfrak{G}}_1$, $\vec{\mathfrak{g}}_2^{(n)}=\vec{\mathfrak{g}}_2$, $\vec{\mathfrak{G}}_2^{(n)}=\vec{\mathfrak{G}}_2$, which indicates the $m$- or $n$-dependence of the vectors.
	Then, we have
	\begin{equation*}
	\left\langle \vec{\mathfrak{G}}_2^{(n)}, \vec{\mathfrak{g}}_2^{(n)}\right\rangle = \left\langle \vec{\mathfrak{G}}_1^{(n)}, \vec{\mathfrak{g}}_1^{(n)} \right\rangle.
	\end{equation*}
\end{proposition}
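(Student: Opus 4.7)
My plan is to exploit the fact that, when the lengths match, $\vec{\mathfrak{g}}_2^{(n)}$ is literally the coordinate reversal of $\vec{\mathfrak{g}}_1^{(n)}$, and to transport this symmetry through the inverse Toeplitz matrix using the evenness of the symbol $\phi$.

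First, I would introduce the $n\times n$ exchange (reversal) matrix $J=(\delta_{i,n+1-j})_{i,j=1}^n$, which is real, symmetric, and satisfies $J^2=I$. Reading off the definitions,
\begin{equation*}
\vec{\mathfrak{g}}_1^{(n)}=(g_{-n},g_{-n+1},\dots,g_{-1})^T,\qquad \vec{\mathfrak{g}}_2^{(n)}=(g_{-1},g_{-2},\dots,g_{-n})^T,
\end{equation*}
so that $\vec{\mathfrak{g}}_2^{(n)}=J\vec{\mathfrak{g}}_1^{(n)}$. This reduces the identity to understanding how $T_n[\phi]^{-1}$ interacts with $J$.

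Next I would observe that for any symbol $\psi$ one has $(JT_n[\psi]J)_{ij}=\psi_{(n+1-i)-(n+1-j)}=\psi_{j-i}$, i.e. $JT_n[\psi]J=T_n[\widetilde\psi]$ with $\widetilde\psi(z)=\psi(z^{-1})$. Now $\phi(z)=g(z)+\lambda$, and from the definition of $g$ we have $g(e^{i\theta})=\mathrm{sgn}(\cos\theta)$, which is an even function of $\theta$. Equivalently, $g(z^{-1})=g(z)$ for $z\in C$, whence $\phi(z^{-1})=\phi(z)$ and therefore
\begin{equation*}
JT_n[\phi]J=T_n[\phi],\qquad\text{i.e.,}\qquad JT_n[\phi]^{-1}=T_n[\phi]^{-1}J.
\end{equation*}
Consequently $\vec{\mathfrak{G}}_2^{(n)}=T_n[\phi]^{-1}J\vec{\mathfrak{g}}_1^{(n)}=J T_n[\phi]^{-1}\vec{\mathfrak{g}}_1^{(n)}=J\vec{\mathfrak{G}}_1^{(n)}$.

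Finally, taking the (non-conjugated) inner product consistent with the convention already used in the excerpt, $\langle u,v\rangle=u^Tv$, and using $J^T J=I$,
\begin{equation*}
\bigl\langle\vec{\mathfrak{G}}_2^{(n)},\vec{\mathfrak{g}}_2^{(n)}\bigr\rangle=\bigl\langle J\vec{\mathfrak{G}}_1^{(n)},J\vec{\mathfrak{g}}_1^{(n)}\bigr\rangle=\bigl(J\vec{\mathfrak{G}}_1^{(n)}\bigr)^T\bigl(J\vec{\mathfrak{g}}_1^{(n)}\bigr)=\bigl(\vec{\mathfrak{G}}_1^{(n)}\bigr)^T\vec{\mathfrak{g}}_1^{(n)}=\bigl\langle\vec{\mathfrak{G}}_1^{(n)},\vec{\mathfrak{g}}_1^{(n)}\bigr\rangle,
\end{equation*}
which is the stated identity. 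There is no genuine obstacle here; the only step that needs a moment of care is verifying the evenness $g(z^{-1})=g(z)$ on $C$ (equivalently, that the Fourier coefficients $g_l$ depend only on $|l|$), which is immediate from $\mathrm{Re}\, z=\mathrm{Re}\,z^{-1}$ for $z\in C$.
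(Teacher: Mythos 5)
Your proof is correct and follows essentially the same route as the paper: the paper also introduces the reversal (exchange) matrix $J$, uses the persymmetry $J T_n[\phi] J = T_n[\phi]$ (a consequence of $g_l = g_{-l}$, equivalently $\phi(z^{-1})=\phi(z)$ on $C$), and then passes the identity $\vec{\mathfrak{g}}_2^{(n)} = J\vec{\mathfrak{g}}_1^{(n)}$ through $T_n[\phi]^{-1}$ to match the two inner products. Your write-up merely makes the evenness of $g$ and the bookkeeping with $J$ more explicit.
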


\begin{proof}
	Consider the $n\times n$ matrix 
	$J = (\delta_{i+j-n+1})_{i,j=0}^{n-1}$ where $\delta$ denotes the Kronecker delta symbol.
	Since we have $J T_n[g] J = T_n[g]$, we obtain $J T_n[\phi]^{-1} J = T_n[\phi]^{-1}$, and thus
	\begin{equation*}
		\left\langle \vec{\mathfrak{G}}_2^{(n)}, \vec{\mathfrak{g}}_2^{(n)}\right\rangle 
		= \left\langle J T_n[\phi]^{-1} J\vec{\mathfrak{g}}_2^{(n)}, \vec{\mathfrak{g}}_2^{(n)}\right\rangle
		= \left\langle T_n[\phi]^{-1} \vec{\mathfrak{g}}_1^{(n)}, \vec{\mathfrak{g}}_1^{(n)}\right\rangle.
	\end{equation*}
\end{proof}

\begin{remark}\label{rem1} Notice that for a general gap of length  $k$, the matrix $\lambda I - A$ whose determinant
is needed to be evaluated, can be written as 
\begin{equation}\label{A tilde mat}
\lambda I - A = \begin{pmatrix} (\phi_{i-j})_{i,j=1,\dots , m} & \gamma\left(g_{i-j-m-1}\right)_{\substack{i= 1, \dots , m \\ j = 1,\dots n}} \\ \gamma\left(g_{j-i-m-1}\right)_{\substack{i= 1, \dots , n \\ j = 1,\dots m}} &  (\phi_{i-j})_{i,j=1,\dots , n} \end{pmatrix} 
- \sum_{d,l=1}^{k}\gamma_{dl}\begin{pmatrix} 0 & -\vec{\mathfrak{g}}_d \vec{\mathfrak{g}}_{k+l}^T  \\ -\vec{\mathfrak{g}}_{k+l} \vec{\mathfrak{g}}_{d}^T& 0 \end{pmatrix}
\end{equation} 
where
$$
\vec{\mathfrak{g}}_d= (g_{1-m-d}, g_{2-m-d},\dots, g_{-d})^T, \quad \vec{\mathfrak{g}}_{k+d}= (g_{d-1-k}, g_{d-2-k},\dots, g_{d-n-k})^T,
\quad d = 1, 2, ..., k,
$$
and the scalar  coefficients $\gamma$ and $\gamma_{dl}$ are certain $k\times k$,   independent of $n$, $m$,  determinants. 
This shows  what are the new technical challenges when one passes from $k=1$  to the values  $k >1$.  The  ``principal''
determinant is not  a block diagonal Toeplitz determinant anymore; indeed, the non-trivial off-diagonal Toeplitz blocks, generated by 
new symbols, appear. Moreover, the finite rank perturbation is of rank $2k$ and, therefore,   ceases to be ``finite rank''
as we consider the most general setting of the problem when all three sizes, $m$, $n$, and  $k$  become arbitrarily large. 
\end{remark}

\color{black} From now on, until the end of Section \ref{sec:before-final}, our goal is to prove the following lemma, which then we shall apply in Section \ref{sec:final} to prove Theorem \ref{thm:main}.

\begin{lemma}\label{lem:main}
	Define $\beta := \frac{1}{2 \pi i} \ln \frac{\lambda+1}{\lambda-1}$.
	As $m\to\infty$ we have
	\begin{equation}\label{eq:inner-prod-as}
	\langle\vecG,\vecg\rangle = \langle T_m[\phi]^{-1}\vecg,\vecg\rangle = i\tan\left(\tfrac{\pi}{2}\beta\right) + O(m^{-\frac{1}{4}}),
	\end{equation}
	where the error term is uniform in $\lambda$ on compact subsets of $|\lambda|>1$.
\end{lemma}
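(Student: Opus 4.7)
The strategy is a Deift--Its--Krasovsky Riemann--Hilbert analysis of the orthogonal polynomials on the unit circle with weight $\phi(z)\,d\theta/(2\pi)$. Note that $\phi=g+\lambda$ takes the constant values $\lambda+1$ and $\lambda-1$ on the arcs $\Re z>0$ and $\Re z<0$, so it is a Fisher--Hartwig symbol with two pure jumps, located at $z=\pm i$. A direct factorization gives
$$\phi(z)\;=\;(\lambda^2-1)^{1/2}\, e^{i\pi\beta\, g(z)},$$
which identifies the FH exponents at $\pm i$ as $\pm\beta$ and makes the eventual appearance of $\tan(\pi\beta/2)$ natural.

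The first concrete step is to recast $\langle\vecG,\vecg\rangle$ in a form amenable to RHP methods. Since $\vecg=(\phi_{-m},\phi_{-m+1},\dots,\phi_{-1})^T$, the Christoffel--Darboux reproducing-kernel identity
$$\sum_{j,k=0}^{m-1}(T_m[\phi]^{-1})_{jk}\, z^j\,\overline{w}^k\;=\;\sum_{l=0}^{m-1}\overline{p_l(w)}\,p_l(z),$$
with $p_l$ the orthonormal polynomials for $\phi$, expresses $\langle\vecG,\vecg\rangle$ as a double contour integral whose kernel is, by Christoffel--Darboux, a simple rational combination of the monic polynomial $p_m$, its reverse $p_m^*$, and the normalisation constant $h_m$. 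All of these are extracted from the matrix-valued solution $Y$ of the standard OPUC Riemann--Hilbert problem
$$Y_+(z)=Y_-(z)\begin{pmatrix}1 & \phi(z)z^{-m}\\ 0 & 1\end{pmatrix}\quad (z\in C),\qquad Y(z)\,z^{-m\sigma_3}\to I\quad(z\to\infty).$$

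The bulk of the proof is the nonlinear steepest-descent analysis of this RHP. I would apply the standard chain of transformations $Y\mapsto T\mapsto S\mapsto R$: normalize by $z^{-m\sigma_3}$, open lenses along $C$ using the factorization of $\phi$ above, subtract the outer (global) parametrix built from the Szeg\H{o} function of $\phi$, and install local parametrices in small disks $D(\pm i,r)$ from the Its--Krasovsky confluent hypergeometric (CHF) parametrix. The resulting small-norm RHP yields $R=I+O(m^{-1})$ away from $\pm i$, together with the explicit leading behaviour of $R$ on $\partial D(\pm i,r)$ coming from the CHF special values; this supplies uniform asymptotics for $p_m$, $p_m^*$, and $h_m$. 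Plugging them into the contour-integral expression for $\langle\vecG,\vecg\rangle$ and noting that both the symbol $\phi$ and the generating function of $\vecg$ are singular precisely at $z=\pm i$, the integrals localise near these two points, and a direct evaluation of the CHF data gives, after simplification,
$$\frac{e^{i\pi\beta}-1}{e^{i\pi\beta}+1}\;=\;i\tan\tfrac{\pi\beta}{2}\;=\;\lambda-\sqrt{\lambda^2-1}.$$

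The main obstacle will be this final bookkeeping stage: identifying the dominant contributions of the two CHF parametrices at $\pm i$, verifying that the cross terms between the two singularities collapse into the stated closed-form expression, and maintaining uniformity of the error in $\lambda$ on compact subsets of $\{|\lambda|>1\}$. The rate $O(m^{-1/4})$ (weaker than the generic $O(m^{-1})$ of the parametrix) is presumably forced by the slow $O(1/|l|)$ decay of the Fourier coefficients $g_l$ making up $\vecg$, which are strongly concentrated near the singularities $\pm i$: integrating an $O(m^{-1})$ parametrix error against such a slowly decaying tail produces the weaker power $m^{-1/4}$, and controlling this trade-off uniformly in $\lambda$ is the most delicate technical point.
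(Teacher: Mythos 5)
Your high-level strategy---nonlinear steepest descent for the OPUC Riemann--Hilbert problem with the Fisher--Hartwig symbol $\phi=g+\lambda$, with confluent-hypergeometric local parametrices at $\pm i$---is indeed the machinery the paper uses, but the route you take from $\langle\vecG,\vecg\rangle$ to that RHP is different and the decisive step is left undone. The paper does not go through a Christoffel--Darboux double integral. It realises $T_m[\phi]^{-1}$ as a block of the resolvent $(1-K)^{-1}$ of the integrable operator with kernel $\tfrac{1-\phi(s)}{2\pi i}\tfrac{z^m s^{-m}-1}{z-s}$, which gives $\langle\vecG,\vecg\rangle=-\sum_{\ell=1}^m g_\ell M_{\ell,11}$ in terms of the Laurent coefficients at $\infty$ of the $Y_K$-RHP. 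After passing to the OPUC solution $Y$ and unwinding $Y\to T\to S\to R$, these coefficients split exactly into the Taylor coefficients $d_\ell$ of the Szeg\H{o} function $\calD$ plus a contribution from $R-I$; the closed form $i\tan(\tfrac{\pi}{2}\beta)$ then drops out of the $m$-independent identity $-\sum_{\ell\ge1}g_\ell d_\ell = -1+\tfrac{1}{\pi}\int_{\pi/2}^{3\pi/2}\calD(e^{i\theta})\,d\theta$, with truncation error $O(m^{-1/2})$ controlled by the $H^2$-norm of $\calD$. The leading term therefore comes entirely from the outer (Szeg\H{o}) parametrix; the CHF parametrices enter only in bounding the $R-I$ contribution. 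Your proposal instead attributes the closed form to ``a direct evaluation of the CHF data,'' which is the wrong mechanism and, more to the point, hides the hardest bookkeeping under the label ``main obstacle'' rather than carrying it out.

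There are two further concrete problems. First, the reproducing-kernel identity you invoke, with $\overline{p_l(w)}$ and $\overline w^k$, is the Hermitian form valid for a positive weight; here $\phi=g+\lambda$ is genuinely complex (in the end $\lambda$ ranges over a contour around $[-1,1]$), so the orthogonality is a bilinear pairing without conjugation and the relation between $T_m[\phi]^{-1}$ and the kernel must be replaced by its bilinear analogue. As stated, the identity is false. Second, the $O(m^{-1/4})$ rate does not arise from ``integrating an $O(m^{-1})$ parametrix error against a slowly decaying Fourier tail.'' It comes from the Fisher--Hartwig exponent: near $\pm i$ one has $R-I = O(m^{2|\Re\beta|-1})$ and $\calD = O(m^{|\Re\beta|})$, producing estimates of size $O(m^{3|\Re\beta|-1})$, and the hypothesis $|\lambda|>1$ is exactly equivalent to $|\Re\beta|<\tfrac14$ (a key observation in Section~4 of the paper, used repeatedly in Section~8). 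Without that inequality the $R-I$ contribution does not even tend to zero, so it is not a peripheral technical point but the backbone of the error estimate, and your proposal does not identify it.
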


\color{black} In order to analyse $\langle\vecG,\vecg\rangle$, we shall express it in terms of a Riemann--Hilbert problem (RHP) that arises in the theory of integrable operators, see \cite[Section 5.6]{BDS} \textcolor{black}{or \cite{D-intop}.}
Define the kernel
\begin{equation*}
	K(z,s) = \frac{1-\phi(s)}{2\pi i} \frac{z^ms^{-m}-1}{z-s} = \frac{\vec{f}(z)^T\vec{h}(s)}{z-s} \qquad (z,s\in C),
\end{equation*}
where
\begin{equation*}
	\vec{f}(z) = \left(\begin{matrix} f_1(z) \\ f_2(z) \end{matrix}\right) = \left(\begin{matrix}z^m \\ 1 \end{matrix}\right), \qquad
	\vec{h}(s) = \left(\begin{matrix} h_1(s) \\ h_2(s) \end{matrix}\right) = \frac{1-\phi(s)}{2\pi i} \left(\begin{matrix}s^{-m} \\ -1 \end{matrix}\right),
\end{equation*}
which also satisfy $\left\langle\vec{f}(z),\vec{h}(z)\right\rangle = 0$ $(z\in C)$, where $\langle\vec{a},\vec{b}\rangle = \sum_{j} a_jb_j$.
This kernel defines a very special type of bounded singular integral operators on $L^2(C)$, namely a so-called (completely) \emph{integrable operator} in the following way:
\begin{equation*}
	K[u](z) = \int_C K(z,s)u(s) ds \qquad (u\in L^2(C), z\in C),
\end{equation*}
where the integral is meant in the principal value sense, and we put the function in between $[\;]$.

By well-known properties of this operator, for all $\lambda\in\C\setminus[-1,1]$ we have $0\neq D_m[\phi] = \det(1-K)$ and
\begin{equation}\label{eq:Tinv-K}
	T_m[\phi]^{-1} = \left(\left( (1-K)^{-1}[z^j], z^k \right)\right)_{j,k=0}^{m-1},
\end{equation}
where $(\cdot,\cdot)$ denotes the complex inner product on $L^2(C)$.
\textcolor{black}{In particular, the connection between the two determinants can be shown by repeating the argument of \cite[page 123]{BDS}. In order to obtain \eqref{eq:Tinv-K} we observe that (by (5.157)-(5.158) in \cite[page 123]{BDS}) $1-K$ has the block-matrix form $\left(\begin{matrix} 1 & 0 & 0 \\ * & T_m[\phi] & * \\ 0 & 0 & 1 \end{matrix}\right)$, hence the (2,2) block of $(1-K)^{-1}$ is $T_m[\phi]^{-1}$.}
Furthermore, by \cite[Theorem 5.21]{BDS} \textcolor{black}{or \cite{D-intop}}
\begin{equation}\label{eq:F}
	\vec{F}(z) = \left(\begin{matrix} F_1(z) \\ F_2(z) \end{matrix}\right) := (1-K)^{-1}[\vec{f}](z) = Y_{K-}(z) \vec{f}(z) 
	= \left(\begin{matrix}
		Y_{K-,11}(z)z^m+Y_{K-,12}(z) \\
		Y_{K-,21}(z)z^m+Y_{K-,22}(z)
	\end{matrix}\right),
\end{equation}
where $Y_K$ is the unique solution of the following RHP.

\bigskip

\noindent \textbf{$Y_K$--Riemann--Hilbert problem}
\begin{align}
	& Y_K\colon \C\setminus C \to \C^{2\times 2} \;\; \text{is analytic}, \\
	& Y_{K+}(z) = Y_{K-}(z)\cdot 
	\left(\begin{matrix}
		\phi(z) & -(\phi(z)-1)z^m \\
		(\phi(z)-1)z^{-m} & 2-\phi(z)
	\end{matrix}\right) \qquad (\text{a.e.} \; z\in C), \label{eq:YK-jump}\\
	& Y_K(z) = I + O(z^{-1}) \;\;\text{as} \;\; z\to\infty. \label{eq:YK-infty}
\end{align}
The unit circle is oriented in the usual positive direction, and the jump condition \eqref{eq:YK-jump} is meant in the $L^2$ sense, see \cite[Definition 5.16]{BDS}.

\bigskip

\noindent In the next section we will connect the $Y_K$--RHP with \textcolor{black}{another RHP}, but for the rest of this section our aim is to express the inner product $\langle\vecG,\vecg\rangle$ in terms of $Y_K$.

\begin{proposition}
	We have 
	\begin{equation}\label{eq:inn-M}
	\left\langle\vecG,\vecg\right\rangle = -\frac{2}{\pi}\sum_{\ell=1}^m \frac{\sin\frac{\ell\pi}{2}}{\ell}M_{\ell,11},
	\end{equation}
	where
	\begin{equation*}
	Y_K(z) = I + \sum_{\ell=1}^{\infty} M_\ell z^{-\ell} \qquad \text{as} \;\; z\to\infty.
	\end{equation*}
\end{proposition}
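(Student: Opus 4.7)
The plan is to rewrite $\langle\vecG,\vecg\rangle$ as an $L^2(C)$ inner product $\bigl((1-K)^{-1}[u],\,u\bigr)$ for a suitable polynomial $u$ of degree less than $m$, and then, via an explicit computation of $K[z^m]$ and the polynomial structure of $F_1$ forced by the $Y_K$--RHP, to express the answer through the coefficients $M_{\ell,11}$.

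First I would introduce
\[
u(z) := \sum_{j=0}^{m-1} g_{j-m}\, z^j = \sum_{l=1}^{m} g_l\, z^{m-l},
\]
using $g_{-l}=g_l$. Identifying $\C^m$ with polynomials of degree less than $m$, formula \eqref{eq:Tinv-K} together with the linearity of $(1-K)^{-1}$ and the reality of the $g_l$ yields directly
\[
\langle\vecG,\vecg\rangle = \bigl((1-K)^{-1}[u],\, u\bigr).
\]

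The pivotal observation is the identity $K[z^m] = -u$. Using the polynomial factorisation $\tfrac{z^m-s^m}{z-s} = \sum_{i=0}^{m-1} z^{m-1-i} s^i$, the kernel definition gives
\[
K[z^m](z) = \sum_{i=0}^{m-1} z^{m-1-i} \oint_C \frac{(1-\phi(s))\, s^i}{2\pi i}\, ds,
\]
and a residue computation (using $\phi_l = g_l+\lambda\delta_{l,0}$ together with $g_0=0$) evaluates the remaining integral to $-g_{i+1}$, whence $K[z^m] = -u$. Rearranging gives $(1-K)[z^m] = z^m + u$, and so
\[
(1-K)^{-1}[u] = z^m - F_1(z).
\]

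To finish I need an explicit form of $F_1$. A direct check with the jump \eqref{eq:YK-jump} shows that the scalar $Y_{K,11}(z) z^m + Y_{K,12}(z)$ suffers no jump across $C$, hence extends to an entire function; combined with \eqref{eq:YK-infty} and its $z^m$-growth at infinity this forces it to be a polynomial, and one reads off the coefficients
\[
F_1(z) = z^m + \sum_{\ell=1}^{m} M_{\ell,11}\, z^{m-\ell}.
\]
Therefore $(1-K)^{-1}[u](z) = -\sum_{\ell=1}^m M_{\ell,11}\, z^{m-\ell}$, and computing the inner product against $u$ by matching Fourier coefficients (using $u_{m-\ell}=g_\ell$ and $u_m=0$) yields
\[
\langle\vecG,\vecg\rangle = -\sum_{\ell=1}^m g_\ell M_{\ell,11} = -\frac{2}{\pi}\sum_{\ell=1}^m \frac{\sin\frac{\ell\pi}{2}}{\ell}\, M_{\ell,11}.
\]
The only step that requires a little insight is the identity $K[z^m]=-u$; once that is observed, the rest is routine manipulation of the RHP data and Fourier coefficients.
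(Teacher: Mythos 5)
Your proof is correct and follows essentially the same route as the paper: both reduce $\langle\vecG,\vecg\rangle$ to $((1-K)^{-1}[u],u)$ with $u=\sum_{j=0}^{m-1}g_{j-m}z^j$, establish the key identity $K[z^m]=-u$, and then express everything through $F_1=(1-K)^{-1}[z^m]=Y_{K-,11}z^m+Y_{K-,12}$ together with the asymptotic expansion $Y_K=I+\sum_\ell M_\ell z^{-\ell}$. The one difference is the finishing step: the paper evaluates $\oint_C F_1(z)\,z^{-k}\tfrac{dz}{2\pi i z}$ directly by expanding the contour outward (so the $Y_{K-,12}$ contribution is killed purely by its $O(z^{-1})$ decay), whereas you first prove that $F_1$ is the monic polynomial $z^m+\sum_{\ell=1}^m M_{\ell,11}z^{m-\ell}$ from the no-jump observation for $Y_{K,11}z^m+Y_{K,12}$ and then match coefficients. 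Your variant gives a slightly stronger structural statement (an explicit formula for $F_1$, which in particular encodes $M_{m+\ell,11}+M_{\ell,12}=0$) but requires the standard RHP fact that a jump-free combination of $L^2$ boundary values extends analytically across $C$; the paper's contour-integral computation sidesteps that. Both are valid, and the essential algebra — $K[z^m]=-u$ and the invariance of $Y_{K,11}z^m+Y_{K,12}$ under the jump — is identical.
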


\begin{proof}
First, by \eqref{eq:Tinv-K} we have
\begin{align*}
	\langle\vecG,\vecg\rangle & = \vecg^T T_m[\phi]^{-1} \vecg = \sum_{j,k=0}^{m-1} g_{j-m} g_{k-m} \left( (1-K)^{-1}[z^j], z^k \right) \\
	& = \left( (1-K)^{-1}\left[\sum_{j=0}^{m-1} g_{j-m}z^j\right], \sum_{k=0}^{m-1} g_{k-m}z^k \right).
\end{align*}
Next, since $g(z) = (\phi(z)-1)+(1-\lambda)$, we calculate
\begin{align*}
	\sum_{j=0}^{m-1} g_{j-m}z^j & = \sum_{j=0}^{m-1} \oint_C g(s)s^{m-j}z^j \frac{ds}{2\pi i s} = \sum_{j=0}^{m-1} \oint_C (\phi(s)-1)s^{m-j}z^j \frac{ds}{2\pi i s} \\
	& = \oint_C \frac{\phi(s)-1}{2\pi i} s^m \sum_{j=0}^{m-1} \left(\frac{z}{s}\right)^j \frac{ds}{s} = - \oint_C K(z,s) s^m ds = -K[z^m](z).
\end{align*}
Therefore, by \eqref{eq:F} and \eqref{eq:YK-infty}
\begin{align*}
	\langle\vecG,\vecg\rangle & = - \left( (1-K)^{-1}K[z^m], \sum_{k=0}^{m-1} g_{k-m}z^k \right) = - \left( (1-K)^{-1}[z^m] - z^m, \sum_{k=0}^{m-1} g_{k-m}z^k \right) \\
	& = - \left( F_1, \sum_{k=0}^{m-1} g_{k-m}z^k \right) = -\sum_{k=0}^{m-1} g_{k-m} \oint_C F_1(z) z^{-k} \frac{dz}{2\pi i z} \\
	& = -\sum_{k=0}^{m-1} g_{k-m} \oint_C Y_{K-,11}(z)z^{m-k}+Y_{K-,12}(z)z^{-k} \frac{dz}{2\pi i z} \\
	& = -\sum_{k=0}^{m-1} g_{k-m} \oint_C Y_{K-,11}(z)z^{m-k} \frac{dz}{2\pi i z} = -\sum_{k=0}^{m-1} g_{k-m} M_{m-k,11},
\end{align*}
from which we conclude \eqref{eq:inn-M}.
\end{proof}

\section{Expressing the inner product in terms of the $R$--Riemann--Hilbert problem}
Note that for all $\lambda\in\C\setminus[-1,1]$ the function $\phi$ possesses Fisher--Hartwig singularities at $z_1 = i = e^{i \frac{\pi}{2}}$ and $z_2 = -i = e^{i \frac{3\pi}{2}}$; thus, we can apply the results in \cite{dik}.
To be more precise, using the notation of (1.2) in \cite{dik}, we can write $\phi$ in the following form:
\begin{equation*}
	\phi(z) = e^{V_0} g_1(z) g_2(z) z_1^{-\beta_1} z_2^{-\beta_2}
\end{equation*}
with $\alpha_1 = \alpha_2 = 0$, $\theta_1 = \frac{\pi}{2}$, $\theta_1 = \frac{3\pi}{2}$,
\begin{align}
	&\beta = \beta(\lambda) := \beta_1 = -\beta_2 = \frac{1}{2 \pi i} \cdot \ln \frac{\lambda+1}{\lambda-1} = \frac{1}{2 \pi i} \left[\ln(\lambda+1) - \ln(\lambda-1)\right], \label{eq:beta-lambda}\\
	&V(z) = V_0 = \frac{1}{2} \left[\ln(\lambda-1) + \ln(\lambda+1)\right], \\
	&g_1(z) g_2(z) = \left\{\begin{matrix}
		1 & \Re z > 0 \\
		e^{-2 i \pi \beta} & \Re z < 0
	\end{matrix}\right.
	= \left\{\begin{matrix}
		1 & \Re z > 0 \\
		\frac{\lambda-1}{\lambda+1} & \Re z < 0
	\end{matrix}\right. \\
	&z_1^{-\beta_1}z_2^{-\beta_2} = e^{\frac{1}{2} [\ln(\lambda+1) - \ln(\lambda-1)]}.
\end{align}
Note that throughout this paper, $\ln z$ denotes the principal branch of the logarithm, that is, $-\pi < \arg z < \pi$.
Since $\frac{\lambda+1}{\lambda-1}$ is a fractional linear map, we can easily examine the real- and imaginary parts of $\beta$.
We have
\begin {equation}\label{eq:Re-Im-beta}
	\Re \beta = \frac{1}{2\pi} \arg \frac{\lambda+1}{\lambda-1}, \qquad \Im \beta = \frac{-1}{2\pi} \ln \left|\frac{\lambda+1}{\lambda-1}\right| =  \frac{1}{2\pi} \ln \left|\frac{\lambda-1}{\lambda+1}\right|,
\end{equation}
therefore we see that $\Im\beta$ stays bounded on compact subsets of $\C\setminus [-1,1]$.
In addition, $|\Re\beta| < \frac{1}{2}$ for all $\lambda \in \C\setminus [-1,1]$.
However, notice that a simple calculation gives that 
$$|\Re\beta| < \frac{1}{4} \;\;\iff\;\; |\lambda| > 1,$$ 
which is the reason why we shall take $\Gamma_\varepsilon = (1+\frac{1}{2}\varepsilon)C$ in \eqref{eq:mut-inf} in our calculations.
Let us also note that $\beta$ does not vanish on $\C$.

Next, we shall connect the $Y_K$--RHP \textcolor{black}{with the $Y$--RHP, see e.g. \cite{dik} or \cite{bdj} for details.}

\bigskip

\noindent \textbf{$Y$--Riemann--Hilbert problem \textcolor{black}{for orthogonal polynomials on the circle}}
\begin{align}
	& Y\colon \C\setminus C \to \C^{2\times 2} \;\; \text{is analytic}, \\
	& Y_{+}(z) = Y_{-}(z)\cdot 
	\left(\begin{matrix}
		1 & \phi(z)z^{-m} \\
		0 & 1
	\end{matrix}\right) \qquad (z\in C\setminus\{i,-i\}), \label{eq:Y-jump}\\
	& Y(z) = \left(I + O(z^{-1})\right) 
	\left(\begin{matrix}
		z^m & 0 \\
		0 & z^{-m}
	\end{matrix}\right) \;\;\text{as} \;\; z\to\infty, \\
	&Y(z) = \left(\begin{matrix}
		O(1) & O(\ln|z\mp i|) \\
		O(1) & O(\ln|z\mp i|)
	\end{matrix}\right) \;\;\text{as} \;\; z\to\pm i.
\end{align}
The jump condition \eqref{eq:Y-jump} is meant in the sense that $Y$ is continuous up to $C$ from both sides, except at the points $\pm i$.

\bigskip

It is well-known that this RHP has a unique solution which can be given in terms of orthogonal polynomials.
\textcolor{black}{An easy calculation shows the following connection between the unique solutions $Y_K$ and $Y$:}
\begin{equation}\label{eq:YK-Y}
	Y_K(z) = \left\{
		\begin{matrix}
			\sigma_3 Y(z) \sigma_3 \left(\begin{matrix} z^m & -1 \\ 1 & 0 \end{matrix}\right)^{-1}, & |z|<1 \\ 
			\sigma_3 Y(z) \sigma_3 \left(\begin{matrix} z^m & 0 \\ 1 & z^{-m} \end{matrix}\right)^{-1}, & |z|>1
		\end{matrix}
	\right.
\end{equation}
where $\sigma_3 = \left(\begin{matrix} 1 & 0 \\ 0 & -1 \end{matrix}\right)$ is the Pauli matrix.
\textcolor{black}{We point out that a similar connection was observed in \cite{bdj}.}
Note that even though the jump conditions \eqref{eq:YK-jump} and \eqref{eq:Y-jump} are meant in different ways, one verifies easily that indeed the above $Y_K$ solves the $Y_K$--RHP in the $L^2$ sense.
The advantage of involving $Y$ in our analysis is that we can use the powerful results of \cite{dik}, in particular, we can express our inner product in terms of the $R$--RHP which can be estimated effectively.
Let us recall the $R$--RHP next, whose associated contour $\Gamma_R$ is shown in Figure \ref{fig:Gamma_R}.
Notice that the circles $\partial U_1$ and $\partial U_2$ around $\pm i$ are oriented in the negative direction.
\begin{figure}[h]
	\includegraphics[scale=0.3]{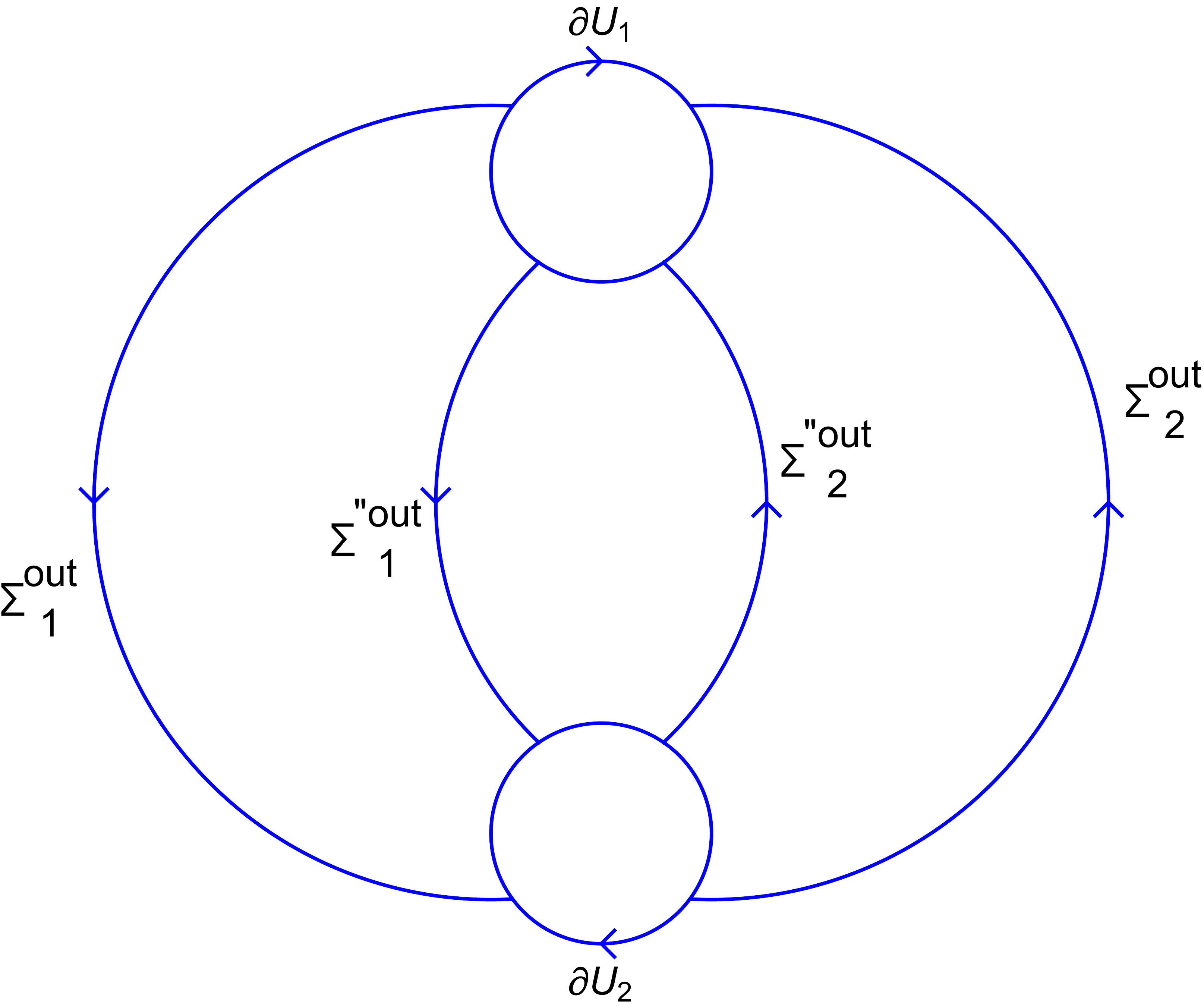}
	\caption{The contour $\Gamma_R$ for the $R$--RHP.}\label{fig:Gamma_R}
\end{figure}

\bigskip

\noindent \textbf{$R$--Riemann--Hilbert problem}
\begin{align}
	& R\colon \C\setminus\Gamma_R \to \C^{2\times 2} \;\; \text{is analytic}, \\
	& R_{+}(z) = R_{-}(z)\cdot N(z)
	\left(\begin{matrix}
		1 & 0 \\
		\phi(z)^{-1}z^{-m} & 1
	\end{matrix}\right) N(z)^{-1} \qquad (z\in\Sigma_j^{\text{out}}), \label{eq:R-jump-1} \\
	& R_{+}(z) = R_{-}(z)\cdot N(z)
	\left(\begin{matrix}
		1 & 0 \\
		\phi(z)^{-1}z^{m} & 1
	\end{matrix}\right) N(z)^{-1} \qquad (z\in\Sigma_j^{''\text{out}}), \label{eq:R-jump-2} \\
	& R_{+}(z) = R_{-}(z)\cdot P_j(z) N(z)^{-1} \qquad (z\in\partial U_j\setminus\{\text{intersection points}\}), \label{eq:R-jump-3} \\
	& R(z) = I + O(z^{-1}) \;\;\text{as} \;\; z\to\infty.
\end{align}

\bigskip

The jump conditions \eqref{eq:R-jump-1}--\eqref{eq:R-jump-3} are meant in the sense that $R$ is continuous up to $\Gamma_R$ from each side.
The functions $N$ and $P_j$ denote the global and local parametrices, respectively, see \cite[Subsections 4.1--4.2]{dik}.
Namely, 
\begin{equation}\label{eq:N-def}
	N(z) = \left\{\begin{matrix}
		\calD(z)^{\sigma_3} & |z|>1 \\
		\calD(z)^{\sigma_3} \left(\begin{matrix} 0 & 1 \\ -1 & 0 \end{matrix}\right) & |z|<1
	\end{matrix}\right.
\end{equation}
where $\calD(z) = \exp\left( \frac{1}{2\pi i} \int_C \frac{\ln \phi(s)}{s-z} ds\right)$ stands for the Szeg\H{o} function.
The local parametrices will be discussed in detail in Section \ref{sec:local-par}.

From \eqref{eq:YK-Y} we calculate
\begin{equation*}
	Y_{K,11}(z) = Y_{11}(z)z^{-m} + Y_{12}(z) \qquad (|z|\geq 2).
\end{equation*}
If we trace back the transformations $Y\to T\to S\to R$ performed in \cite{dik}, we obtain
\begin{equation*}
	Y(z) = R(z) N(z) z^{m\sigma_3} = R(z) \calD(z)^{\sigma_3} z^{m\sigma_3} \qquad (|z|\geq 2).
\end{equation*}
In particular,
\begin{equation*}
	Y_{11}(z)z^{-m} = R_{11}(z)\calD(z), \quad Y_{12}(z) = R_{12}(z)\calD(z)^{-1} z^{-m} \qquad (|z|\geq 2).
\end{equation*}
Notice that $Y_{12}(z) = O(z^{-m-1})$ as $z\to\infty$, hence by \eqref{eq:inn-M} it does not contribute to our inner product.
Therefore we have
\begin{equation*}
	M_{\ell,11} = d_\ell + \oint_{|z|= 2} \left(R_{11}(z)-1\right) \calD(z) z^\ell \frac{dz}{2\pi i z} \qquad (\ell = 1,\dots,m),
\end{equation*}
where $\calD(z) = 1 + \sum_{j=1}^\infty d_j z^{-j}$ $(|z|>1)$.
Thus, from \eqref{eq:inn-M} we obtain
\begin{equation}\label{eq:contributions0}
	\left\langle\vecG,\vecg\right\rangle = -\frac{2}{\pi}\sum_{\ell=1}^m \frac{\sin\frac{\ell\pi}{2}}{\ell}d_{\ell} -\frac{2}{\pi}\oint_{|z|= 2} \left(R_{11}(z)-1\right) \calD(z) f_m(z) \frac{dz}{2\pi i z},
\end{equation}
where
\begin{equation}\label{eq:fm-def}
	f_m(z) = \sum_{\ell=1}^m \frac{\sin\frac{\ell\pi}{2}}{\ell} z^\ell.
\end{equation}

Now, set $M=\lfloor\frac{m-1}{2}\rfloor$ and notice that for all $z\in\C, \Re z \neq 0$, we have
\begin{align}
	f_m(z) &= \sum_{k=0}^M \frac{(-1)^k}{2k+1} z^{2k+1} = \frac{1}{i}\sum_{k=0}^M \frac{(iz)^{2k+1}}{2k+1} = \frac{1}{i}\sum_{k=0}^M \int_{0}^{iz} s^{2k} ds = \frac{1}{i} \int_{0}^{iz} \frac{1-s^{2M+2}}{1-s^2} ds \nonumber \\
	&= \int_{0}^{z} \frac{1+(-1)^{M}y^{2M+2}}{1+y^2} dy = \arctan z + (-1)^{M}\int_{0}^{z} \frac{y^{2M+2}}{1+y^2} dy \nonumber \\
	&= \frac{1}{2i}\ln\frac{z-i}{z+i} + \frac{\pi}{2}\sgn\Re z + (-1)^{M}\int_{0}^{z} \frac{y^{2M+2}}{1+y^2} dy \nonumber \\
	&= \frac{1}{2i}\ln\frac{z-i}{z+i} + \widetilde{f_m}(z) = \frac{1}{2i}\ln\frac{z-i}{z+i} + \frac{\pi}{2}\sgn\Re z + \widetilde{f_{m,1}}(z), \label{eq:fm1-def}
\end{align}
where the integration is meant along a line segment and $\widetilde{f_m}$, $\widetilde{f_{m,1}}$ are implicitly defined in the above equation-chain.
Note that $\frac{1}{2i}\ln\frac{z-i}{z+i}$ and $\widetilde{f_m}$ are analytic in $\C\setminus[-i,i]$, and that the integral expression $\widetilde{f_{m,1}}$ is analytic in $\C\setminus ([i,i\infty)\cup[-i,-i\infty))$.
Since $\left(R_{11}(z)-1\right) \calD(z) = O(1/z)$ and $\frac{1}{2i}\ln\frac{z-i}{z+i} = O(1/z)$ as $z\to\infty$, we easily obtain that
\begin{equation}\label{eq:contributions}
	\left\langle\vecG,\vecg\right\rangle = -\frac{2}{\pi}\sum_{\ell=1}^m \frac{\sin\frac{\ell\pi}{2}}{\ell}d_{\ell} -\frac{2}{\pi}\oint_{|z|= 2} \left(R_{11}(z)-1\right) \calD(z) \widetilde{f_m}(z) \frac{dz}{2\pi i z},
\end{equation}

To summarise, we have two kinds of contributions to the inner product, one which comes from the Szeg\H{o} function and another coming from $R-I$.
Next, we compute the contribution coming from $\calD(z)$.

\section{The contribution from the Szeg\H{o} function}
Here we calculate the asymptotic behaviour of $-\frac{2}{\pi}\sum_{\ell=1}^m \frac{\sin\frac{\ell\pi}{2}}{\ell}d_{\ell} = -\sum_{\ell=1}^m g_{\ell}d_{\ell}$ as $m\to\infty$. 
For that, we need a formula for the Szeg\H{o} function.
By (4.8) (or (4.10)) in \cite{dik}, a short calculation gives
\begin{equation}\label{eq:D-def}
	\calD(z) = \exp\left( \beta\cdot\ln \frac{z-i}{z+i} \right)= \left(\frac{z-i}{z+i} \right)^\beta \qquad (|z|>1),
\end{equation}
where the right-hand side is analytic outside $[-i,i]$, and
\begin{equation}\label{eq:D-def-ins}
	\calD(z) = \left(\frac{z-i}{z+i} \right)^\beta \phi(z) \qquad (|z|<1).
\end{equation}
A simple calculation gives 
\begin{equation*}
	D\left(e^{i\theta}\right) = \left( \frac{e^{i\theta}-e^{i\frac{\pi}{2}}}{e^{i\theta}+e^{i\frac{\pi}{2}}} \right)^\beta = \left(i \frac{\sin\left(\frac{\theta}{2}-\frac{\pi}{4}\right)}{\cos\left(\frac{\theta}{2}-\frac{\pi}{4}\right)}\right)^\beta = \left( i \tan\left(\frac{\theta}{2}-\frac{\pi}{4}\right)\right)^\beta \qquad (\theta\in\R).
\end{equation*}
Therefore, since $\calD(1/z) = 1 + \sum_{j=1}^\infty d_j z^j$ belongs to the Hardy class $H^2$, we obtain the following expression for the limit of the sum:
\begin{align*}
	- \sum_{l=1}^\infty g_l d_{l} & = -\frac{1}{2\pi} \int_{-\frac{\pi}{2}}^{\frac{3\pi}{2}} \calD(e^{-i\theta}) \overline{g(e^{i\theta})} d\theta = -\frac{1}{2\pi} \int_{-\frac{\pi}{2}}^{\frac{3\pi}{2}} \calD(e^{i\theta}) g(e^{i\theta}) d\theta \\
	&= -\frac{1}{2\pi} \int_{-\frac{\pi}{2}}^{\frac{3\pi}{2}} \calD(e^{i\theta}) d\theta + \frac{1}{\pi} \int_{\frac{\pi}{2}}^{\frac{3\pi}{2}} \calD(e^{i\theta}) d\theta = -1 + \frac{1}{\pi} \int_{\frac{\pi}{2}}^{\frac{3\pi}{2}} \calD(e^{i\theta}) d\theta \\
	&= -1 + \frac{1}{\pi} \int_{\frac{\pi}{2}}^{\frac{3\pi}{2}} \left(i \tan\left(\frac{\theta}{2}-\frac{\pi}{4}\right)\right)^\beta d\theta
	= -1 + \frac{2 \cdot i^\beta}{\pi} \int_{0}^{\frac{\pi}{2}} \left(\tan\vartheta\right)^\beta d\vartheta \\
	& = -1 + \frac{2 \cdot i^\beta}{\pi} \int_{0}^{\infty} \frac{u^\beta}{1+u^2} du = \color{black} i\tan\left(\tfrac{\pi}{2}\beta\right),
	%&= \frac{e^{i\frac{\pi}{2}\beta} - e^{-i\frac{\pi}{2}\beta}}{e^{i\frac{\pi}{2}\beta} + e^{-i\frac{\pi}{2}\beta}} = \frac{\left(\frac{\lambda+1}{\lambda-1}\right)^{1/4} - \left(\frac{\lambda-1}{\lambda+1}\right)^{1/4}}{\left(\frac{\lambda+1}{\lambda-1}\right)^{1/4} + \left(\frac{\lambda-1}{\lambda+1}\right)^{1/4}}
\end{align*}
where we substituted $u = \tan\vartheta$ and used standard residue calculus.

We estimate the speed of convergence below.

\begin{proposition}
	We have
	\begin{equation}\label{eq:sigma-as}
	-\frac{2}{\pi}\sum_{\ell=1}^m \frac{\sin\frac{\ell\pi}{2}}{\ell}d_{\ell} = \color{black} i\tan\left(\tfrac{\pi}{2}\beta\right) + O(m^{-\frac{1}{2}}) %\qquad \text{as} \; m\to\infty,
	\end{equation}
	as $m\to\infty$, where the error is uniform in $\lambda$ on compact subsets of $\C\setminus[-1,1]$.
\end{proposition}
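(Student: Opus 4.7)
The strategy is to reduce \eqref{eq:sigma-as} to a tail estimate, since the exact value $i\tan(\tfrac{\pi}{2}\beta)$ has just been identified in the computation preceding the proposition as the limit of the full series $-\sum_{\ell\geq 1} g_\ell d_\ell$. Writing $g_\ell = \frac{2}{\pi\ell}\sin(\ell\pi/2)$, one has
\[
-\frac{2}{\pi}\sum_{\ell=1}^m \frac{\sin(\ell\pi/2)}{\ell}\, d_\ell \;=\; i\tan\bigl(\tfrac{\pi}{2}\beta\bigr) \;+\; E_m(\lambda), \qquad E_m(\lambda) := \sum_{\ell > m} g_\ell d_\ell,
\]
so it suffices to prove $E_m(\lambda) = O(m^{-1/2})$, uniformly for $\lambda$ in any compact subset of $\C\setminus[-1,1]$.

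For the tail I would apply the Cauchy--Schwarz inequality in $\ell^2(\N)$:
\[
|E_m(\lambda)|^2 \;\leq\; \Bigl(\sum_{\ell>m}|g_\ell|^2\Bigr)\Bigl(\sum_{\ell>m}|d_\ell|^2\Bigr).
\]
The first factor is elementary: since $g_\ell$ is supported on odd $\ell$ with $|g_\ell|\leq 2/(\pi\ell)$, comparison with $\int_m^\infty x^{-2}\, dx$ yields $\sum_{\ell>m}|g_\ell|^2 = O(1/m)$. For the second factor, $\calD(1/z)=1+\sum_{j\geq 1}d_j z^j$ is a function in the Hardy space $H^2$ of the open unit disk, so Parseval's identity gives
\[
\sum_{j\geq 0}|d_j|^2 \;=\; \frac{1}{2\pi}\int_{-\pi}^{\pi}\bigl|\calD(e^{-i\theta})\bigr|^2\, d\theta.
\]

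The main, and essentially only, technical point is to show that this $L^2$ norm is bounded \emph{uniformly} on a fixed compact set $K\subset\C\setminus[-1,1]$. From \eqref{eq:Re-Im-beta} the map $\lambda\mapsto\Re\beta(\lambda)$ is continuous on $\C\setminus[-1,1]$, and a direct inspection of the Möbius image shows that $|\Re\beta|$ attains the value $1/2$ only on the segment $[-1,1]$; hence $c_K := \sup_{\lambda\in K}|\Re\beta(\lambda)| < 1/2$, while $\Im\beta$ is also bounded on $K$. Inserting the boundary identity $\calD(e^{i\theta})=\bigl(i\tan(\theta/2-\pi/4)\bigr)^\beta$ derived earlier gives
\[
\bigl|\calD(e^{i\theta})\bigr|^2 = \bigl|\tan(\theta/2-\pi/4)\bigr|^{2\Re\beta}\cdot \exp\!\bigl(-2\Im\beta\cdot\arg\bigl(i\tan(\theta/2-\pi/4)\bigr)\bigr);
\]
the second factor is uniformly bounded on $K$, while the first is dominated by $|\tan(\theta/2-\pi/4)|^{2c_K}+|\tan(\theta/2-\pi/4)|^{-2c_K}$, which is integrable because $2c_K<1$. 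This produces a uniform-in-$\lambda$ bound $\sum_{j\geq 0}|d_j|^2 = O(1)$, and combining the two Cauchy--Schwarz factors yields $|E_m(\lambda)|=O(m^{-1/2})$ uniformly on $K$, which is exactly \eqref{eq:sigma-as}.
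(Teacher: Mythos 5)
Your proof is correct and follows essentially the same route as the paper: reduce to the tail $\sum_{\ell>m} g_\ell d_\ell$, estimate $\sum_{\ell>m}|g_\ell|^2 = O(1/m)$, bound $\sum_j |d_j|^2$ via Parseval using the explicit boundary formula $\calD(e^{i\theta}) = (i\tan(\theta/2-\pi/4))^\beta$ and the fact that $|\Re\beta|<\tfrac12$ keeps $|\tan|^{\pm 2\Re\beta}$ integrable (uniformly so on compact $K$), then apply Cauchy--Schwarz. The paper carries out the same steps, merely with more explicit constants in the $L^2$ bound.
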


\begin{proof}
Note that
\begin{align*}
	|\calD(e^{i \theta})| 
	&= \left| \left( e^{i \pi / 2} \tan\left(\frac{\theta}{2}-\frac{\pi}{4}\right)\right)^\beta \right|
	%= e^{-\frac{\pi}{2} \Im\beta} \left( \tan\left(\frac{\theta}{2}-\frac{\pi}{4}\right)\right)^{\Re\beta} & \\ &
	= e^{-\frac{\pi}{2} \Im\beta} \left( \left|\frac{\theta-\frac{3\pi}{2}}{\theta-\frac{\pi}{2}} \right| \tan\left(\frac{\theta}{2}-\frac{\pi}{4}\right)\right)^{\Re\beta} \left|\theta-\frac{\pi}{2}\right|^{\Re\beta} \left|\theta-\frac{3\pi}{2}\right|^{-\Re\beta} \\
	&\leq e^{-\frac{\pi}{2} \Im\beta} 3^{1/2} \left|\theta-\frac{\pi}{2}\right|^{\Re\beta} \left|\theta-\frac{3\pi}{2}\right|^{-\Re\beta} \hspace{5cm} \left(\frac{\pi}{2} < \theta < \frac{3\pi}{2}\right).
\end{align*}
and similarly
\begin{align*}
	|\calD(e^{i \theta})| 
	&= \left| \left( e^{-i \pi / 2} \left|\tan\left(\frac{\theta}{2}-\frac{\pi}{4}\right)\right|\right)^\beta \right| 
	\leq e^{\frac{\pi}{2} \Im\beta} 3^{1/2} \left|\theta-\frac{\pi}{2}\right|^{\Re\beta} \left|\theta+\frac{\pi}{2}\right|^{-\Re\beta} & \qquad \left(-\frac{\pi}{2} < \theta < \frac{\pi}{2}\right).
\end{align*}
Hence the squared $L^2$-norm of the Szeg\H{o} function can be estimated as follows:
\begin{align*}
	\frac{1}{2\pi}\int_{-\frac{\pi}{2}}^{\frac{3\pi}{2}} |\calD(e^{i \theta})|^2 d\theta 
	&\leq \frac{e^{\pi |\Im\beta|} 3}{2\pi} \left( \int_{-\frac{\pi}{2}}^{\frac{\pi}{2}}  \left|\theta-\frac{\pi}{2}\right|^{2\Re\beta} \left|\theta+\frac{\pi}{2}\right|^{-2\Re\beta} d\theta
		+ \int_{\frac{\pi}{2}}^{\frac{3\pi}{2}} \left|\theta-\frac{\pi}{2}\right|^{2\Re\beta} \left|\theta-\frac{3\pi}{2}\right|^{-2\Re\beta} d\theta \right) \\
	&= \frac{e^{\pi |\Im\beta|} 3}{\pi} \int_{-\frac{\pi}{2}}^{\frac{\pi}{2}} \left|\theta-\frac{\pi}{2}\right|^{2|\Re\beta|}  \left|\theta+\frac{\pi}{2}\right|^{-2|\Re\beta|} d\theta
	\leq {e^{\pi |\Im\beta|} 3} \int_{0}^{\pi} t^{-2|\Re\beta|} d\theta \\
	&\leq 3 \frac{e^{\pi |\Im\beta|}\pi^{1-2|\Re\beta|}}{1-2|\Re\beta|}.
\end{align*}
Therefore by the Cauchy--Schwartz inequality we get
\begin{align*}
	& \left| \sum_{l=m+1}^\infty g_l d_{l} \right| 
	\leq \frac{2}{\pi} \sum_{l=m+1}^\infty \frac{1}{l} \left| d_{l} \right| 
	\leq \frac{2}{\pi} \sqrt{ \sum_{l=m+1}^\infty \frac{1}{l^2} } \sqrt{ \frac{1}{2\pi}\int_{-\pi}^{\pi} |\calD(e^{i \theta})|^2 d\theta } \nonumber \\
	&\leq \frac{2}{\pi} \sqrt{ \int_{m}^\infty \frac{1}{x^2} dx } \sqrt{ \frac{1}{2\pi}\int_{-\pi}^{\pi} |\calD(e^{i \theta})|^2 d\theta } \leq \frac{2}{\pi} m^{-1/2} e^{\frac{\pi}{2} |\Im\beta|} \sqrt{3}  \sqrt{\frac{1}{1-2|\Re\beta|}} \pi^{\frac{1}{2}-|\Re\beta|} = O(m^{-1/2})
\end{align*}
as $m\to\infty$, uniformly in $\lambda$ on compact subsets of $\C\setminus[-1,1]$.
\end{proof}

Before we proceed with computing the contribution coming from $R-I$, we need some auxiliary calculations about the integral $\widetilde{f_{m,1}}$ defined in \eqref{eq:fm1-def}, and the local parametrices appearing in the analysis of $R$ in \cite{dik}.

\section{Estimation of $\widetilde{f_{m,1}}$}

We start with the following proposition.

\begin{proposition}
	We have
		\begin{equation*}
			e^{-mu} = (1-u)^m + O\left(\frac{1}{m}\right)
		\end{equation*}
	as $m\to\infty$, \emph{uniformly} in $u\in[0,1]$.
\end{proposition}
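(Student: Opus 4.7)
The plan is to split the interval $[0,1]$ into two regimes and handle each with elementary estimates, avoiding the singularity of $\ln(1-u)$ near $u=1$.

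First, I would dispatch the regime $u \in [\tfrac12, 1]$ by observing that both terms are exponentially small in $m$: $e^{-mu} \leq e^{-m/2}$ and $(1-u)^m \leq 2^{-m}$, so $|e^{-mu} - (1-u)^m| \leq e^{-m/2} + 2^{-m} = O(m^{-1})$ (in fact much smaller). This handles the endpoint $u=1$ automatically, where $(1-u)^m$ vanishes while $e^{-mu}=e^{-m}$, and removes the need to worry about the blow-up of $\ln(1-u)$.

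Second, for $u \in [0, \tfrac12]$ I would factor
\begin{equation*}
(1-u)^m = \exp\bigl(m\ln(1-u)\bigr) = e^{-mu}\cdot e^{-m h(u)},
\qquad h(u) := -\ln(1-u) - u = \sum_{k=2}^{\infty}\frac{u^k}{k} \geq 0,
\end{equation*}
so that
\begin{equation*}
\bigl|e^{-mu} - (1-u)^m\bigr| = e^{-mu}\bigl(1 - e^{-m h(u)}\bigr) \leq e^{-mu}\cdot m\, h(u),
\end{equation*}
using $1-e^{-x} \leq x$ for $x\geq 0$. On $[0,\tfrac12]$ the series for $h(u)$ is bounded by $h(u) \leq u^2/(2(1-u)) \leq u^2$. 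Substituting this estimate,
\begin{equation*}
\bigl|e^{-mu} - (1-u)^m\bigr| \leq m u^2 e^{-mu} = \frac{(mu)^2 e^{-mu}}{m} \leq \frac{C}{m},
\end{equation*}
where $C = \sup_{x\geq 0} x^2 e^{-x}$ is an absolute constant. Combining the two regimes yields the uniform $O(1/m)$ bound.

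The only potential obstacle is making sure the splitting is clean: the estimate $h(u) \leq u^2$ breaks down as $u \to 1$, which is precisely why I isolate $[\tfrac12,1]$ and handle it by trivial exponential decay. The threshold $\tfrac12$ is arbitrary — any fixed $u_0 < 1$ would work — but choosing a concrete value keeps the constants explicit.
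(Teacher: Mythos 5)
Your proof is correct, but it takes a genuinely different and arguably cleaner route than the paper. The paper's proof is calculus-based: it observes that $f(u) := e^{-mu}-(1-u)^m \geq 0$, differentiates, identifies the unique interior stationary point $\widetilde{u}_m$ via $-mu=(m-1)\ln(1-u)$, brackets it as $\tfrac1m < \widetilde{u}_m < \tfrac3m$, and then estimates $f(\widetilde{u}_m) = e^{-m\widetilde{u}_m}\bigl(1-e^{m(\ln(1-\widetilde{u}_m)+\widetilde{u}_m)}\bigr) = O(1)(1-e^{O(1/m)}) = O(1/m)$. You instead split $[0,1]$ at $u=\tfrac12$, dispatch $[\tfrac12,1]$ by crude exponential decay, and on $[0,\tfrac12]$ factor $(1-u)^m = e^{-mu}e^{-mh(u)}$ and apply $1-e^{-x}\leq x$ together with $h(u)\leq u^2$ and $\sup_{x\geq 0} x^2 e^{-x}<\infty$. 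Your version avoids locating the extremum and yields an explicit constant directly; the paper's version pinpoints where the maximum occurs (at scale $u\sim 1/m$), which is slightly more informative but more delicate to justify. Both are sound and both achieve the same $O(1/m)$ uniform bound, so this is a valid alternative proof.
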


\begin{proof}
	As $\ln(1-u) < -u$ $(0<u<1)$, we have $e^{-mu} > (1-u)^m$ $(0<u\leq 1)$.
	Note that 
	\begin{equation*}
		\frac{d}{du} (e^{-mu} - (1-u)^m) = m ((1-u)^{m-1} - e^{-mu}) = 0 \;\iff\; -mu = (m-1)\ln(1-u) \qquad (0 \leq u \leq 1).
	\end{equation*}
	Since $\ln(1-u)$ is concave, we have at most two stationary points, and clearly one of them is $u=0$.
	Simple calculations show that for $u=\frac{1}{m}$ the derivative is positive, and that for $u=\frac{3}{m}$ it is negative, therefore there is a second stationary point $\frac{1}{m} < \widetilde{u_m} < \frac{3}{m}$ ($m\in\N, m>5$).
	It is then obvious that for all $u\in[0,1]$ we have
	\begin{align*}
		0\leq e^{-mu} - (1-u)^m \leq e^{-m\widetilde{u_m}} - (1-\widetilde{u_m})^m = e^{-m\widetilde{u_m}}(1 - e^{m (\ln (1-\widetilde{u_m}) + \widetilde{u_m})}) \\
		= O(1)(1 - e^{O(m \widetilde{u_m}^2)}) = O(1)(1 - e^{O(1/m)}) = O\left(\frac{1}{m}\right)  
	\end{align*}
	as $m\to\infty$.
\end{proof}

We proceed with the estimation of $\widetilde{f_{m,1}}(z) = (-1)^{M}\int_{0}^{z} \frac{y^{2M+2}}{1+y^2} dy$ when $z$ is close to the cut $[-i,i]$.

\begin{lemma}\label{lem:fm-est1}
	As $m\to\infty$, we have the following estimates which are uniform in $z$ and $t$:
	\begin{itemize}
		\item[(i)] 
			\begin{equation*}
				\widetilde{f_{m,1}}(z) = O\left( 2^{-m} \right) \qquad (|z|\leq \frac{1}{2}),
			\end{equation*}
		\item[(ii)] 
			\begin{equation*}
				\widetilde{f_{m,1}}(it) = O\left(e^{-\frac{1}{2}\sqrt{m}}\frac{1}{\sqrt{m}}\right) \qquad (-1+\frac{1}{\sqrt{m}} \leq t \leq 1-\frac{1}{\sqrt{m}})
			\end{equation*}
		\item[(iii)] 
			\begin{equation*}
				\widetilde{f_{m,1}}(it) = \frac{1}{2i} \int_{m(1-t)}^{\infty} \frac{e^{-\zeta}}{\zeta} d\zeta + O\left(\frac{1}{m}\ln m\right) = O(1) \qquad (1-\frac{1}{\sqrt{m}} \leq t \leq 1-\frac{1}{m})
			\end{equation*}
			and $\widetilde{f_{m,1}}(-it) = O(1)$.
	\end{itemize}
\end{lemma}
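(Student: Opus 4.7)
The plan is to handle the three parts separately, with (ii) and (iii) unified by a common first step. For part (i), I would parameterize the line segment from $0$ to $z$. On the disk $|y|\le 1/2$ we have $|y|^{2M+2}\le 2^{-(2M+2)}$ and $|1+y^2|\ge 3/4$, and since $2M+2\ge m$ the integral is of size $O(2^{-m})$. For parts (ii) and (iii) the first move is the substitution $y=is$, which using $i^{2M+2}=(-1)^{M+1}$ and $1+(is)^2=1-s^2$ reduces the complex line integral to the real integral
\begin{equation*}
\widetilde{f_{m,1}}(it) \;=\; -i\int_0^t \frac{s^{2M+2}}{1-s^2}\,ds;
\end{equation*}
by symmetry I may assume $t\ge 0$, and the analogous formula for $\widetilde{f_{m,1}}(-it)$ differs only by a sign.

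For part (ii), my plan is to bound the integrand pointwise on $[0,\,1-1/\sqrt m]$: the inequalities $s^{2M+2}\le (1-1/\sqrt m)^m \le e^{-\sqrt m}$ together with $1-s^2 \ge 1-(1-1/\sqrt m)^2 \ge 1/\sqrt m$ (for $m$ large enough) give $s^{2M+2}/(1-s^2)\le \sqrt{m}\,e^{-\sqrt m}$. Integrating over an interval of length at most $1$ yields $O(\sqrt m\, e^{-\sqrt m})$, which is dominated by $O(e^{-\sqrt m/2}/\sqrt m)$ for large $m$.

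For part (iii), I would change variables to $\zeta=(2M+2)(1-s)$, turning the integral into
\begin{equation*}
\int_0^t \frac{s^{2M+2}}{1-s^2}\,ds
\;=\; \int_{(2M+2)(1-t)}^{2M+2} \frac{s^{2M+2}}{\zeta\bigl(2-\zeta/(2M+2)\bigr)}\,d\zeta.
\end{equation*}
Applying the proposition just proved, with exponent $2M+2\sim m$ and variable $u=1-s=\zeta/(2M+2)$, gives $s^{2M+2}=e^{-\zeta}+O(1/m)$ uniformly; the factor $1/(2-\zeta/(2M+2))$ expands to $1/2+O(\zeta/m)$; the upper limit may be replaced by $+\infty$ at exponentially small cost; and the lower limit may be changed from $(2M+2)(1-t)$ to $m(1-t)$ at a cost of $O(1/m)$. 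Gathering these contributions and multiplying by $-i$ produces the claimed representation. Since $\int_1^\infty e^{-\zeta}/\zeta\,d\zeta$ is a finite constant and $m(1-t)\ge 1$ on the stated range of $t$, the leading term, and hence $\widetilde{f_{m,1}}(it)$, is $O(1)$.

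The main obstacle I anticipate is controlling all error terms uniformly in $t$ across the range $[1-1/\sqrt m,\,1-1/m]$: the $O(1/m)$ error in $s^{2M+2}=e^{-\zeta}+O(1/m)$, once divided by $\zeta$, contributes $\int_{m(1-t)}^{2M+2}d\zeta/\zeta \le \ln m$ upon integration, forcing precisely the logarithmic factor in the stated $O(m^{-1}\ln m)$ error. Apart from this mild singularity near $\zeta=0$, the rest is routine bookkeeping.
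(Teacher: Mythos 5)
Your argument is correct and takes essentially the same route as the paper. The only stylistic difference is that the paper performs partial fractions at the outset, bounding the regular term $I_2(t)=\int_0^{it} y^{2M+2}/(y+i)\,dy$ by $O(1/m)$ directly and then processing only the singular term $I_1$, whereas you keep $1/(1-s^2)$ intact and, after the substitution $\zeta=(2M+2)(1-s)$, extract the leading constant $1/2$ from the regular factor $1/(1+s)=1/\bigl(2-\zeta/(2M+2)\bigr)$; the underlying uniform approximation $(1-u)^m=e^{-mu}+O(1/m)$ and the error bookkeeping — in particular the $\int d\zeta/\zeta \le \ln m$ source of the logarithmic factor and the $O(1/m)$ cost of moving the lower limit from $(2M+2)(1-t)$ to $m(1-t)$ — are the same, and your bounds all check out.
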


\begin{proof}
	(i) is obvious. 
	Note that $\widetilde{f_{m,1}}$ is an odd function, therefore it is enough to prove (ii)--(iii) for $t\geq 0$.
	For $0\leq t \leq 1-\frac{1}{m}$ we have
	\begin{equation*}
		\widetilde{f_{m,1}}(it) = \frac{(-1)^M}{2i} \int_0^{it} \frac{y^{2M+2}}{y-i} dy - \frac{(-1)^M}{2i} \int_0^{it} \frac{y^{2M+2}}{y+i} dy
		= \frac{(-1)^M}{2i} I_1(t) - \frac{(-1)^M}{2i} I_2(t),
	\end{equation*}
	where $I_1(t)$ and $I_2(t)$ are the first and second integrals, respectively.
	By substituting $y = ix$ and keeping in mind that $m \in \{2M+1, 2M+2\}$, we get for all $0\leq t \leq 1-\frac{1}{m}$ that
	\begin{equation*}
		|I_2(t)| = \int_0^{t} \frac{x^{2M+2}}{x+1} dx \leq \int_0^{t} x^{2M+2} dx = \frac{t^{2M+3}}{2M+3} \leq \frac{(1-\frac{1}{m})^{2M+3}}{2M+3} \leq \frac{(1-\frac{1}{m})^m}{m} = O\left(\frac{1}{m}\right)
	\end{equation*}
	as $m\to\infty$. We also obtain that if $0\leq t\leq 1-\frac{1}{\sqrt{m}}$, then
	\begin{equation*}
		|I_2(t)| \leq |I_1(t)| = \int_0^{t} \frac{x^{2M+2}}{1-x} dx \leq \sqrt{m} \int_0^{1-\frac{1}{\sqrt{m}}} x^{2M+2} dx \leq \frac{1}{\sqrt{m}} \left(1-\frac{1}{\sqrt{m}}\right)^{m} = O\left( \frac{1}{\sqrt{m}} e^{-\frac{1}{2}\sqrt{m}} \right)
	\end{equation*}
	as $m\to\infty$, which proves (ii).
	
	Now, let $1-\frac{1}{\sqrt{m}} \leq t \leq 1-\frac{1}{m}$, then 
	\begin{align*}
		I_1(t) &= (-1)^M \int_0^{t} \frac{x^{2M+2}}{1-x} dx = (-1)^M \int_{1-\frac{1}{\sqrt{m}}}^{t} \frac{x^{2M+2}}{1-x} dx + O\left( \frac{1}{\sqrt{m}} e^{-\frac{1}{2}\sqrt{m}} \right) \\
		&= (-1)^M \int_{1-t}^{\frac{1}{\sqrt{m}}} \frac{(1-u)^{2M+2}}{u} du + O\left( \frac{1}{\sqrt{m}} e^{-\frac{1}{2}\sqrt{m}} \right) \\
		&= (-1)^M \int_{1-t}^{\frac{1}{\sqrt{m}}} \frac{e^{-(2M+2)u} + O(\frac{1}{m})}{u} du + O\left( \frac{1}{\sqrt{m}} e^{-\frac{1}{2}\sqrt{m}} \right) \\
		&= (-1)^M \int_{1-t}^{\frac{1}{\sqrt{m}}} \frac{e^{-(2M+2)u}}{u} du + O(\frac{1}{m}) \int_{1-t}^{\frac{1}{\sqrt{m}}} \frac{1}{u} du+ O\left( \frac{1}{\sqrt{m}} e^{-\frac{1}{2}\sqrt{m}} \right) \\
		&= (-1)^M \int_{1-t}^{\frac{1}{\sqrt{m}}} \frac{e^{-(2M+2)u}}{u} du + O(\frac{1}{m} \ln m) = (-1)^M \int_{(2M+2)(1-t)}^{\frac{2M+2}{\sqrt{m}}} \frac{e^{-\zeta}}{\zeta} d\zeta + O(\frac{1}{m} \ln m)
	\end{align*}
	as $m\to\infty$, uniformly in $t$.
	What remains to prove is that the latter integral is $\int_{m(1-t)}^{\infty} \frac{e^{-\zeta}}{\zeta} d\zeta + O(\frac{1}{m}\ln m)$ as $m\to\infty$, uniformly in $t$, which follows from the following calculations:
	\begin{align*}
		\int_{\frac{2M+2}{\sqrt{m}}}^{\infty} \frac{e^{-\zeta}}{\zeta} d\zeta 
		\leq \int_{\frac{2M+2}{\sqrt{m}}}^{\infty} e^{-\zeta} d\zeta
		= e^{-\frac{2M+2}{\sqrt{m}}}
		\leq e^{-\sqrt{m}}
	\end{align*}
	and
	\begin{align*}
		\int_{m(1-t)}^{(2M+2)(1-t)} \frac{e^{-\zeta}}{\zeta} d\zeta 
		= (2M+2-m)(1-t) \frac{e^{-m(1-t)}}{m(1-t)} \leq \frac{e^{-1}}{m}.
	\end{align*}
\end{proof}

Now, we estimate near $i$.

\begin{lemma}\label{lem:fm-est2}
	Let $0<c<1<C$, then the following holds as $m\to\infty$, uniformly in $\frac{c}{m} \leq |z-i| \leq \frac{C}{m}$, $z\notin [i,i\infty)$:
	\begin{equation}\label{eq:circ-est}
		\widetilde{f_{m,1}}(z) = \frac{1}{2i} \int_1^\infty \frac{e^{-\zeta}}{\zeta} d\zeta - \frac{1}{2i} \int_1^{im(z-i)} \frac{e^{-\zeta}}{\zeta} d\zeta + O\left(\frac{1}{m} \ln m\right) = O(1),
	\end{equation}
	where the path for the second integral lies in $\zeta \in \C\setminus (-\infty,0]$, $c \leq |\zeta| \leq C$, as shown in Figure \ref{fig:annulus}.
\end{lemma}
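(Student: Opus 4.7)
The plan is to perform two changes of variable that transform $\widetilde{f_{m,1}}(z)$ into the exponential-integral form of the statement, and then track the error. First, decompose $\frac{1}{1+y^2} = \frac{1}{2i}\bigl(\frac{1}{y-i} - \frac{1}{y+i}\bigr)$. By Lemma \ref{lem:fm-est1}(ii), $\widetilde{f_{m,1}}(i(1-1/\sqrt{m})) = O(m^{-1/2}e^{-\sqrt{m}/2})$, which is absorbed into $O(\ln m/m)$, so it suffices to estimate $(-1)^M\int_{y_0}^z \frac{y^{2M+2}}{1+y^2}\,dy$ with $y_0 := i(1-1/\sqrt{m})$. I would take the concrete path consisting of the imaginary-axis segment $y_0 \to i(1-|z-i|)$ followed by the circular arc $\{i + |z-i|e^{i\theta}\}$ from $\theta = -\pi/2$ up to $\theta = \arg(z-i)$; this stays in the analytic domain $\C\setminus[i,i\infty)$.

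The $(y+i)^{-1}$-half is directly $O(1/m)$: on the segment, parametrizing $y = i(1-s)$, we have $|y^{2M+2}| \leq e^{-(2M+2)s}$ (since $\log(1-s)<-s$) and $|y+i|^{-1} \leq 1$, so the segment contribution is bounded by $\int_{|z-i|}^{1/\sqrt{m}} e^{-(2M+2)s}\,ds = O(1/m)$; on the arc, $|y^{2M+2}|$ and $|y+i|^{-1}$ are both $O(1)$ while the arc length is $O(1/m)$. For the main $(y-i)^{-1}$-half I would apply the substitutions $y = i(1-w)$ (giving $y-i = -iw$, $dy = -i\,dw$, $y^{2M+2} = (-1)^{M+1}(1-w)^{2M+2}$) followed by $\zeta = mw$, turning the integral into $\frac{-1}{2i}\int_P \frac{(1-\zeta/m)^{2M+2}}{\zeta}\,d\zeta$, where $P$ runs from $\zeta = \sqrt{m}$ down the positive real axis to $\zeta = m|z-i|\in[c,C]$, then along the arc of radius $m|z-i|$ in $\C\setminus(-\infty,0]$ to $\zeta = im(z-i)$.

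The crucial approximation is $(1-\zeta/m)^{2M+2} \approx e^{-\zeta}$. On the real portion of $P$, write $(1-\zeta/m)^{2M+2} = (1-\zeta/m)^m\cdot(1-\zeta/m)^\epsilon$ with $\epsilon \in \{0,1\}$; invoking the preceding proposition gives $(1-\zeta/m)^m = e^{-\zeta} + O(1/m)$ uniformly, while $(1-\zeta/m)^\epsilon = 1 + O(\zeta/m)$, so $|(1-\zeta/m)^{2M+2} - e^{-\zeta}| = O(1/m) + e^{-\zeta}\cdot O(\zeta/m)$. Integrating against $d\zeta/\zeta$ over $[c,\sqrt{m}]$ yields $O(\ln m/m)$ from the additive term (this is precisely the source of the logarithm) and $O(1/m)$ from the second. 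On the arc, where $|\zeta| \leq C$, a direct Taylor expansion $(2M+2)\log(1-\zeta/m) = -\zeta + O(1/m)$ gives a multiplicative error $1+O(1/m)$, integrating against bounded arc length to $O(1/m)$.

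Once the replacement is justified, extending the upper limit from $\sqrt{m}$ to $\infty$ costs only $O(e^{-\sqrt{m}})$, and writing $\int_{im(z-i)}^\infty = \int_1^\infty - \int_1^{im(z-i)}$ along an admissible path through $1$ rewrites the expression in the stated form. The $=O(1)$ conclusion is immediate because $\int_1^\infty e^{-\zeta}/\zeta\,d\zeta$ is a constant and $\int_1^{im(z-i)}e^{-\zeta}/\zeta\,d\zeta$ is uniformly bounded over the compact range $m|z-i|\in[c,C]$. I expect the main technical obstacle to be the real segment $\zeta\in[c,\sqrt{m}]$: the naive multiplicative approximation $(1-\zeta/m)^m = e^{-\zeta}(1+O(\zeta^2/m))$ breaks down once $\zeta$ grows past $m^{1/3}$, and the resolution is to invoke the uniform additive estimate from the proposition instead, accepting the resulting logarithm from $\int d\zeta/\zeta$ over $[c,\sqrt{m}]$.
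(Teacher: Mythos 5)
Your proof is correct and uses essentially the same ingredients as the paper's: the partial-fraction split $\tfrac{1}{1+y^2}=\tfrac{1}{2i}\bigl(\tfrac{1}{y-i}-\tfrac{1}{y+i}\bigr)$, the substitution $\zeta=im(y-i)$, and the uniform approximation $(1-\zeta/m)^{2M+2}=e^{-\zeta}+O(1/m)$ supplied by the preceding Proposition, with the $\ln m$ arising from $\int d\zeta/\zeta$ in both. The only packaging difference is that you anchor at $y_0=i(1-1/\sqrt m)$ (negligible by Lemma~\ref{lem:fm-est1}(ii)) and re-derive the exponential-integral tail over $\zeta\in[m|z-i|,\sqrt m]$ directly, whereas the paper anchors at $i(1-1/m)$ and simply cites Lemma~\ref{lem:fm-est1}(iii) for that contribution.
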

\begin{figure}[h]
	\includegraphics[scale=0.25]{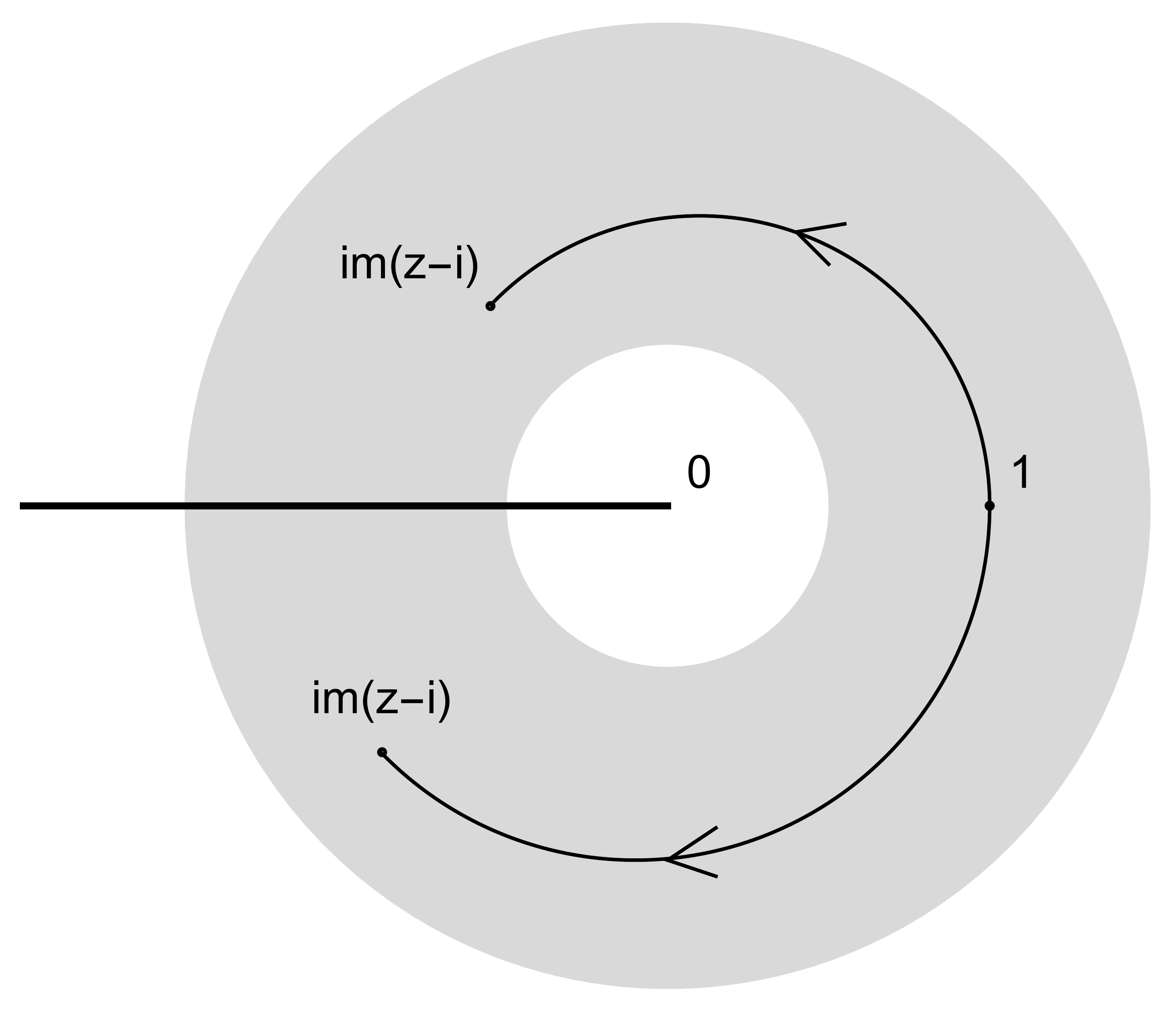}
	\caption{The contour of integration in (\ref{eq:circ-est}).}\label{fig:annulus}
\end{figure}
\begin{proof}
	Notice that for all $c<|u|<C$ we have 
	\begin{align*}
		(1-\frac{u}{m})^{2M+2} - e^{-u} &= e^{(2M+2) \ln(1-u/m)} - e^{-u} = e^{(2M+2) (-u/m + O(m^{-2}))} - e^{-u} \\
		&= e^{-u} \left( e^{-(2M+2-m)u/m + (2M+2) O(m^{-2})} - 1 \right)
		= e^{-u} \left( e^{O(1/m)} - 1 \right) = O\left(\frac{1}{m}\right).
	\end{align*}
	Therefore, using the substitution $y = i(1-\frac{u}{m})$, $u = im(y-i)$, we obtain
	\begin{align*}
		\widetilde{f_{m,1}}(z) 
		&= \widetilde{f_{m,1}}(i(1-\frac{1}{m})) + (-1)^M \int_{i(1-\frac{1}{m})}^z \frac{y^{2M+2}}{1+y^2} dy \\
		&= \frac{1}{2i} \int_1^{\infty} \frac{e^{-\zeta}}{\zeta} d\zeta + O\left(\frac{1}{m}\ln m\right) + \frac{(-1)^M}{2i} \int_{i(1-\frac{1}{m})}^z \frac{y^{2M+2}}{y-i} + O(1) dy \\
		&= \frac{1}{2i} \int_1^{\infty} \frac{e^{-\zeta}}{\zeta} d\zeta 
		- \frac{1}{2i} \int_{1}^{im(z-i)} \frac{(1-\frac{u}{m})^{2M+2}}{u} du + O\left(\frac{1}{m}\ln m\right) \\
		&= \frac{1}{2i} \int_1^{\infty} \frac{e^{-\zeta}}{\zeta} d\zeta - \frac{1}{2i} \int_1^{im(z-i)} \frac{e^{-u}}{u} + \frac{O(\frac{1}{m})}{u} du + O\left(\frac{1}{m}\ln m\right) \\
		&= \frac{1}{2i} \int_1^\infty \frac{e^{-\zeta}}{\zeta} d\zeta - \frac{1}{2i} \int_1^{im(z-i)} \frac{e^{-\zeta}}{\zeta} d\zeta + O\left(\frac{1}{m} \ln m\right).
	\end{align*}	
\end{proof}

We note that one can similarly estimate near $-i$.

\section{The local parametrices}\label{sec:local-par}

In this section we shall compute how the local paramterices look like, with paying special attention to those parts that depend on $m$.
As the two cases are very similar, we shall only examine the parametrix $P_1$ around $i$ in detail.
As in (4.12) and (4.23)--(4.24) in \cite{dik} we have
\begin{equation}\label{eq:zeta}
	\zeta = m \ln\frac{z}{i} \qquad (z\in U_1)
\end{equation}
and
\begin{equation}\label{eq:Pi}
	P_1(z) = E(z)\Psi_1(\zeta)F_1(z)^{-\sigma_3}z^{\pm m\sigma_3/2} \qquad (z\in U_1),
\end{equation}
where $\pm = +$ when $|z|<1$, and $\pm = -$ when $|z|>1$.
By equations (4.18)--(4.22) in \cite{dik}, one easily sees that the auxiliary function $F_1(z)$ is constant in $U_1$, and its value is
\begin{equation}\label{eq:F1}
	F_1(z) = F_1 := \sqrt{(\lambda-1)e^{i\pi\beta}} = \sqrt{(\lambda+1)e^{-i\pi\beta}} \qquad (z\in U_1).
\end{equation}
The function $E(z)$ is analytic in a neighbourhood of $U_1$ and is defined in (4.47)--(4.50) in \cite{dik}.
What is important for our considerations is that
\begin{equation}\label{eq:E}
	E(z) =
	\left(\begin{matrix}
		0 & E_{12}(z) \\
		E_{21}(z) & 0 \\
	\end{matrix}\right) 
	= m^{-\beta\sigma_3} i^{\frac{m}{2}\sigma_3} \widetilde{E}(z) 
	= m^{-\beta\sigma_3} i^{\frac{m}{2}\sigma_3}
	\left(\begin{matrix}
		0 & \widetilde{E}_{12}(z) \\
		\widetilde{E}_{21}(z) & 0 \\
	\end{matrix}\right),
\end{equation}
where $\widetilde{E}(z)$ is independent of $m$ and analytic in a neighbourhood of $U_1$.
Furthermore, 
\begin{equation*}
	E_{12}(z) = i^{\frac{m}{2}} \calD(z) \zeta^{-\beta}F_1^{-1}e^{i\pi\beta},\;\;\;
	\widetilde{E}_{12}(z) = \calD(z)\left(\ln\frac{z}{i}\right)^{-\beta}F_1^{-1}e^{i\pi\beta} \qquad(|z|<1)
\end{equation*}
and
\begin{equation*}
	E_{21}(z) = -i^{-\frac{m}{2}}\calD(z)^{-1} \zeta^{\beta}F_1e^{-2i\pi\beta},\;\;\;
	\widetilde{E}_{21}(z) = -\calD(z)^{-1}\left(\ln\frac{z}{i}\right)^{\beta}F_1e^{-2i\pi\beta} \qquad(|z|<1).
\end{equation*}
The function $\Psi_1(\zeta)$ is an auxiliary function which is the main ingredient in constructing the local parametrix in \cite{dik}, and which is given explicitly in terms of the confluent hypergeometric function $\psi(a,c;z)$.
We recall the details now. 
\begin{figure}[h]
	\includegraphics[scale=0.25]{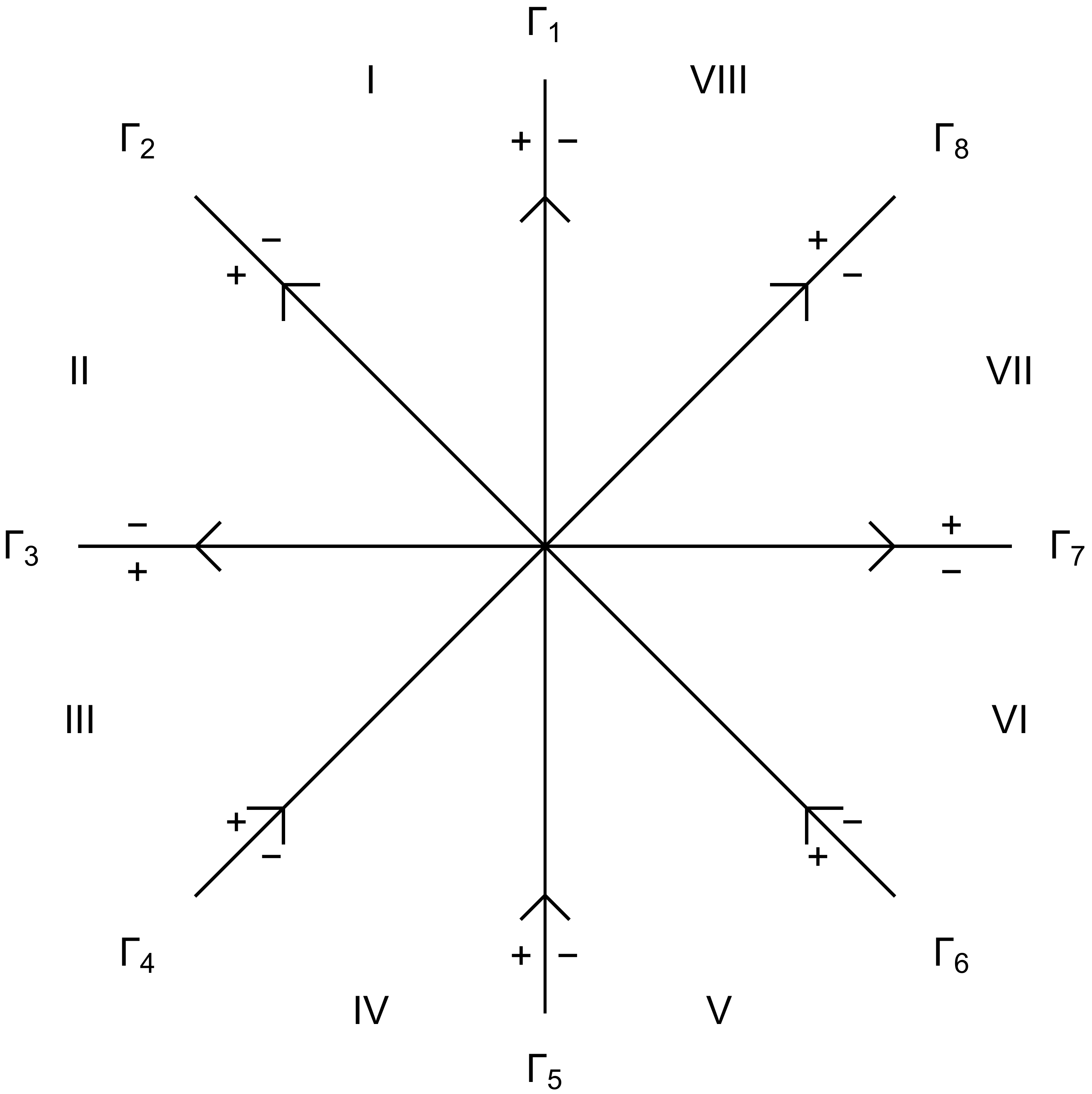}
	\caption{The contour for the local parametrix $\Psi_1$.}\label{fig:Psi-cont}
\end{figure}
Let the contours $\Gamma_1, \dots, \Gamma_8$ be defined as in Figure \ref{fig:Psi-cont}. 
In particular, each of them is a half line starting or ending at $0$.
Furthermore, $\Gamma_k\cup\Gamma_{k+4}$ is a line ($k=1,2,3,4$), and when $k=1, 3$, these unions are the imaginary and real axes, respectively.
These contours divide the complex plane into $8$ sectors, denoted by $I,II,\dots,VIII$ as shown in Figure \ref{fig:Psi-cont}.
The function $\Psi_1(\zeta)$ is analytic in $\C\setminus\cup_{k=1}^8\Gamma_k$, and is uniquely defined by 
\begin{equation*}
	\Psi_1(\zeta) = \Psi_1^{(I)}(\zeta) 
	= \left(\begin{matrix}
		\psi(\beta, 1; \zeta) e^{i\pi 2\beta} e^{-\zeta/2} & -\psi(1-\beta, 1, e^{-i\pi}\zeta) e^{i\pi\beta} e^{\zeta/2}\frac{\Gamma(1-\beta)}{\Gamma(\beta)} \\
		-\psi(1+\beta, 1; \zeta) e^{i\pi\beta} e^{-\zeta/2} \frac{\Gamma(1+\beta)}{\Gamma(-\beta)} & \psi(-\beta, 1, e^{-i\pi}\zeta) e^{\zeta/2} \\
	\end{matrix}\right) 
	\quad (\zeta\in I),
\end{equation*}
and the following jump condition
\begin{equation*}
	\Psi_{1,+}(\zeta) = \Psi_{1,-}(\zeta) J_k(\zeta) 
	\qquad (\zeta\in\Gamma_k),
\end{equation*}
where the jump matrices $J_k$ are constant and are given by
\begin{align*}
	J_1 = \left(\begin{matrix}
		0 & e^{-i\pi\beta} \\
		-e^{i\pi\beta} & 0
	\end{matrix}\right), \;\;\;
	J_2 = J_8 = \left(\begin{matrix}
		1 & 0 \\
		e^{i\pi\beta} & 1
	\end{matrix}\right), \;\;\;
	J_3 = J_7 = \left(\begin{matrix}
		1 & 0 \\
		0 & 1
	\end{matrix}\right), \\
	J_4 = J_6 = \left(\begin{matrix}
		1 & 0 \\
		e^{-i\pi\beta} & 1
	\end{matrix}\right), \;\;\;
	J_5 = \left(\begin{matrix}
		0 & e^{i\pi\beta} \\
		-e^{-i\pi\beta} & 0
	\end{matrix}\right),
\end{align*}
see (4.25)--(4.29) and (4.32) in \cite{dik}.
Note that the functions $\psi(a,c,\zeta)$ and $\psi(a,c,e^{-i\pi}\zeta)$ are defined on the universal covering of the punctured plane $\zeta\in\C\setminus\{0\}$, and that $\Psi_1^{(I)}(\zeta)$ is the analytic continuation of $\Psi_1|_I$ to $0 < \arg\zeta < 2\pi$.

\section{The contribution from $R-I$}\label{sec:before-final}
Recall that the the integration in \eqref{eq:mut-inf} will be taken over the circle $(1+\frac{\varepsilon}{2})C$, hence from now on we only consider the case when $|\lambda| > 1$, which implies $|\Re\beta|<1/4$.
In this section our aim is to show that for $|\lambda| > 1$ the integral 
\begin{equation}\label{eq:int-1}
	\oint_{|z|= 2} \left(R_{11}(z)-1\right) \left(\frac{z-i}{z+i} \right)^\beta \frac{\widetilde{f_m}(z)}{z} dz 
\end{equation}
introduced in \eqref{eq:contributions} converges to 0, and thus the contribution to our inner product coming from $R-I$ is, roughly speaking, negligible. 
Of course, the contour of integration can be deformed to the outer boundary of the unbounded component of $\C\setminus\Gamma_R$. 
Since the integrand is analytic outside $\Gamma_R\cup [-i,i]$, the integrals over the other contours shown on Figure \ref{fig:cont-def} vanish.
Therefore, by a straightforward calculation we obtain the following expression for \eqref{eq:int-1} where $\Delta(z)+I$ is the jump in the $R$--RHP:
\begin{align}
	&- \int_{\Gamma_R} \left(R_{+}(z)-R_{-}(z)\right)_{11} \left(\frac{z-i}{z+i} \right)^\beta \frac{\widetilde{f_m}(z)}{z} dz
	+ \int_{\gamma_m} \left(R_{11}(z)-1\right) \left(\frac{z-i}{z+i} \right)^\beta \frac{\widetilde{f_m}(z)}{z} dz \\
	&= - \int_{\Gamma_R} \left(R_{-}(z)\Delta(z)\right)_{11} \left(\frac{z-i}{z+i} \right)^\beta \frac{\widetilde{f_m}(z)}{z} dz
	+ \int_{\gamma_m} \left(R_{11}(z)-1\right) \left(\frac{z-i}{z+i} \right)^\beta \frac{\widetilde{f_m}(z)}{z} dz, \label{eq:int-2}
\end{align}
where $\gamma_m$ is the union of two circles of radius $1/m$, four linesegments and two half-cirlces of radius $1/2$. More precisely, $\gamma_m = C^i_m \cup [(1-1/m)i,i/2]_+ \cup [-i/2,-(1-1/m)i]_+ \cup C^{-i}_m \cup [-(1-1/m)i,i/2]_- \cup [i/2,(1-1/m)i]_- \cup \{z\colon |z|=1/2\}$ oriented in the positive direction where $C^{\pm i}_m$ is the circle around $\pm i$ with radius $\frac{1}{m}$, and the line segment $[-i,i]$ is oriented upwards, hence its $-$/$+$ side is its right/left side.
\begin{figure}[h]
	\includegraphics[scale=0.2]{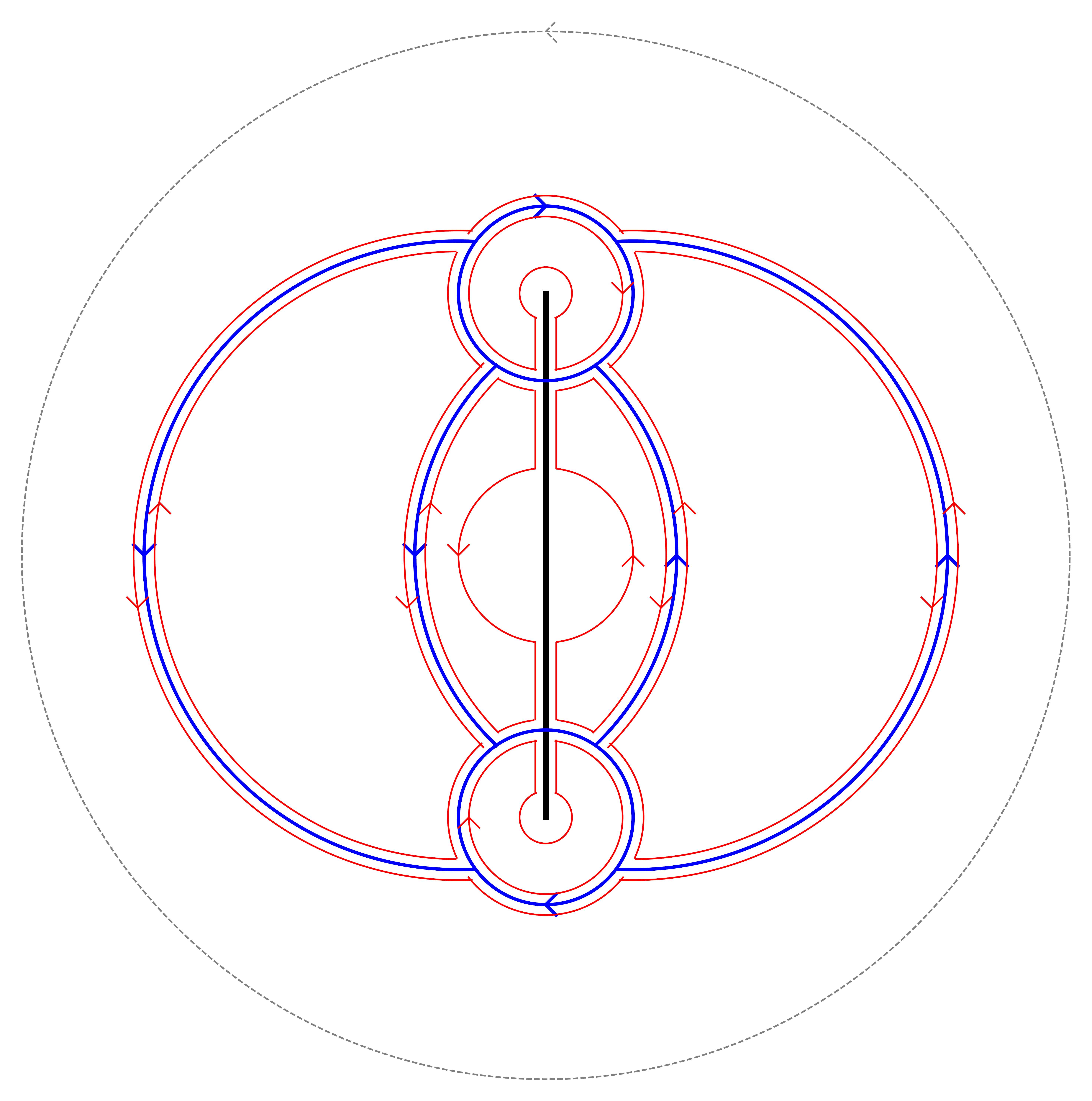}
	\caption{The contour deformation of the dashed circle into the outer red contour. The contour integrals along the other red contours vanish.}\label{fig:cont-def}
\end{figure}

We shall examine the two integrals in \eqref{eq:int-2} separately, starting with the second one.

\subsection{The integral over $\gamma_m$}
\begin{proposition}\label{prop:gamma_m-est}
	We have
	\begin{equation*}
		\int_{\gamma_m} \left(R_{11}(z)-1\right) \left(\frac{z-i}{z+i} \right)^\beta \frac{\widetilde{f_m}(z)}{z} dz = O\left(m^{-1/4}\right) \qquad \text{as}\;\;m\to\infty,
	\end{equation*}
	which is uniform in $\lambda$ on compact subsets of $|\lambda|>1$.
\end{proposition}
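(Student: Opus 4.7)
The plan is to partition $\gamma_m$ into four pieces according to proximity to the Fisher--Hartwig singularities $\pm i$ and bound the contribution from each piece separately. Specifically, I would write $\gamma_m = \gamma_m^{\mathrm{out}} \cup \gamma_m^{\mathrm{mid}} \cup \gamma_m^{\mathrm{close}} \cup \gamma_m^{\mathrm{circ}}$, where $\gamma_m^{\mathrm{out}} = \{|z|=1/2\}$ is the outer circle, $\gamma_m^{\mathrm{mid}}$ consists of the portions of the four line segments lying at distance at least $1/\sqrt{m}$ from $\pm i$, $\gamma_m^{\mathrm{close}}$ consists of the portions with distance between $1/m$ and $1/\sqrt{m}$, and $\gamma_m^{\mathrm{circ}} = C^i_m \cup C^{-i}_m$ consists of the two small circles. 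On each piece I would control $R_{11}(z) - 1$ either using the small-norm analysis of the $R$-RHP from \cite{dik} (valid outside the local parametrix regions $U_1 \cup U_2$) or using the explicit formula $R(z) = S(z) P_1(z)^{-1}$ from Section \ref{sec:local-par} (inside $U_j$), while the factor $\widetilde{f_m}(z)$ would be controlled by Lemmas \ref{lem:fm-est1} and \ref{lem:fm-est2}.

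On $\gamma_m^{\mathrm{out}}$, the bound $R(z) - I = O(1/m)$ from the small-norm analysis, together with $\widetilde{f_m} = O(1)$ from Lemma \ref{lem:fm-est1}(i) and the boundedness of the remaining factors, yields an $O(1/m)$ contribution. On $\gamma_m^{\mathrm{mid}}$, Lemma \ref{lem:fm-est1}(ii) gives exponential smallness for $\widetilde{f_{m,1}}$; the piecewise-constant piece $\frac{\pi}{2}\sgn\Re z$ takes opposite values on the $\pm$ sides of the cut, and summing the two oppositely oriented traversals, together with the jump $\left(\tfrac{z-i}{z+i}\right)^{\beta}_+ = e^{-2i\pi\beta}\left(\tfrac{z-i}{z+i}\right)^{\beta}_-$ across $[-i,i]$ and the analyticity of $R$ there, reduces the computation to a single integral against $R_{11}-1$, once more controlled by the small-norm bound to give $O(1/m)$. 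On $\gamma_m^{\mathrm{close}}$, Lemma \ref{lem:fm-est1}(iii) gives $\widetilde{f_m} = O(1)$; the Szeg\H{o}-type factor satisfies $\left|\left(\tfrac{z-i}{z+i}\right)^\beta\right| = O(|z\mp i|^{-|\Re\beta|})$, and since $|\Re\beta|<1/4$ strictly on compact subsets of $|\lambda|>1$, the radial integral $\int_{1/m}^{1/\sqrt{m}} r^{-|\Re\beta|}\,dr = O(m^{|\Re\beta|-1/2}) = O(m^{-1/4})$ produces the dominant contribution. Finally, on $\gamma_m^{\mathrm{circ}}$, Lemma \ref{lem:fm-est2} gives $\widetilde{f_m} = O(1)$, $|z\mp i|^{\Re\beta} = O(m^{|\Re\beta|})$, and the arc length is $O(1/m)$, so once boundedness of $R_{11}-1$ is secured this piece contributes at most $O(m^{-3/4})$.

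The hardest step will be establishing polynomial (and preferably uniform) control on $R_{11}(z)-1$ inside the local parametrix region $U_1$, i.e.\ on $\gamma_m^{\mathrm{close}}$ and on $\gamma_m^{\mathrm{circ}}$. Using $R = S \cdot P_1^{-1}$ with $P_1(z) = E(z)\,\Psi_1(\zeta)\,F_1^{-\sigma_3}\,z^{\pm m\sigma_3/2}$ from \eqref{eq:Pi} and $\zeta = m\log(z/i)$, this will require estimates on the confluent hypergeometric functions entering $\Psi_1(\zeta)$ for $|\zeta|\geq 1$ (corresponding to $|z-i|\geq 1/m$), together with the key cancellation between the factor $m^{-\beta\sigma_3}$ carried by $E(z)$ (see \eqref{eq:E}) and the $(z-i)^{\pm\beta}$ behaviour arising from $\Psi_1$ and from the Szeg\H{o} function inside $E(z)$; the remaining factor $z^{\pm m\sigma_3/2}$ contributes only a unimodular phase since $|z| = 1 + O(1/m)$ on these pieces. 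Putting the four estimates together yields the stated bound $O(m^{-1/4})$, with the uniformity in $\lambda$ on compact subsets of $|\lambda|>1$ following from the fact that $|\Re\beta|$ is uniformly bounded away from $1/4$ there.
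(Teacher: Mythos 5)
Your proposal correctly identifies the three ingredients (a small-norm bound on $R-I$, the growth of the Szeg\H{o}-type factor near $\pm i$, and the boundedness of $\widetilde{f_m}$ via Lemmas \ref{lem:fm-est1}--\ref{lem:fm-est2}), but it has two concrete problems. First, you repeatedly invoke $R(z)-I=O(1/m)$, whereas in the Fisher--Hartwig setting of \cite{dik} the small-norm analysis only gives $R(z)-I=O(m^{2|\Re\beta|-1})$; since $|\Re\beta|<1/4$ here, this is merely $O(m^{-1/2})$-type, not $O(1/m)$. Relatedly, your algebra on $\gamma_m^{\mathrm{close}}$ is off: $\int_{1/m}^{1/\sqrt{m}} r^{-|\Re\beta|}\,dr = O\bigl(m^{(|\Re\beta|-1)/2}\bigr)$, not $O(m^{|\Re\beta|-1/2})$, and the factor $R_{11}-1$ must be included there as well.

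The real gap, however, is the one you flag yourself as ``the hardest step'': bounding $R_{11}-1$ on the parts of $\gamma_m$ inside $U_1,U_2$, which you propose to attack via $R=S\,P_1^{-1}$ and explicit confluent-hypergeometric asymptotics. That route is far more work than needed and is not carried out. The paper instead observes that the Cauchy representation $R(z)-I=\tfrac{1}{2\pi i}\int_{\Gamma_R}\tfrac{R_-(s)\Delta(s)}{s-z}\,ds$ already gives $R(z)-I=O(m^{2|\Re\beta|-1})$ uniformly on \emph{all} of $\C\setminus\Gamma_R$, including inside $U_1,U_2$: the only danger is $z$ approaching $\Gamma_R$, and this is eliminated by exploiting the freedom in choosing the radii of $U_1,U_2$ (and the shape of the lenses) so that $z$ stays at bounded distance from a slightly deformed $\Gamma_R$. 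Once this uniform bound is in hand, no decomposition of $\gamma_m$ is necessary: $\bigl(\tfrac{z-i}{z+i}\bigr)^\beta=O(m^{|\Re\beta|})$ everywhere on $\gamma_m$ because the closest approach of $\gamma_m$ to $\pm i$ is $1/m$, $\widetilde{f_m}=O(1)$ on $\gamma_m$ by Lemmas \ref{lem:fm-est1}--\ref{lem:fm-est2}, and $\gamma_m$ has bounded length, so the integral is $O(m^{3|\Re\beta|-1})=O(m^{-1/4})$ in one line. You should replace the four-piece decomposition with this single uniform estimate and, in particular, correct the $R-I$ bound to $O(m^{2|\Re\beta|-1})$ throughout.
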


\begin{proof}
We know from the standard analysis in the steepest descent method that $\Delta(z) = O(m^{2|\Re\beta|-1})$ and hence $R(z)-I = \frac{1}{2\pi i} \int_{\Gamma_R} \frac{R_-(s)\Delta(s)}{s-z} ds = O(m^{2|\Re\beta|-1})$ as $m\to\infty$, which is uniform in $\lambda$ on compact subsets of $|\lambda|>1$, and in $z$ on $\C\setminus\Gamma_R$ (see e.g. \cite{dik}, and note that there is some flexibility in choosing the parameters for $\Gamma_R$, hence the Cauchy integral does not blow up as $z$ gets closer to $\Gamma_R$).
Also, elementary observations show that $\left(\frac{z-i}{z+i} \right)^\beta = O(m^{|\Re\beta|})$ as $m\to\infty$, which is uniform in $\lambda$ on compact subsets of $|\lambda|>1$, and in $z$ on $\gamma_m$.
Therefore, combining the above estimates with Lemmas \ref{lem:fm-est1} and \ref{lem:fm-est2}, we conclude
\begin{align*}
	\int_{\gamma_m} \left(R_{11}(z)-1\right) \left(\frac{z-i}{z+i} \right)^\beta \frac{\widetilde{f_m}(z)}{z} dz
	& = \int_{\gamma_m} O\left(m^{2|\Re\beta|-1}\right)O\left(m^{|\Re\beta|}\right)O\left(1\right) dz \\
	& = O\left(m^{3|\Re\beta|-1}\right) = O\left(m^{-1/4}\right) \qquad \text{as}\;\; m\to\infty.
\end{align*}
\end{proof}

From now on, we estimate the integral over $\Gamma_R$ from \eqref{eq:int-2}, which we split into two parts.

\subsection{The integrals over $\Sigma_j^{''\text{out}}$ and $\Sigma_j^{\text{out}}$}
First we deal with the integrals over the lenses.

\begin{proposition}\label{prop:lense-est}
	We have
	\begin{equation*}
	\int_{\Sigma_1^{''\text{out}}\cup\Sigma_2^{''\text{out}}\cup\Sigma_1^{\text{out}}\cup\Sigma_2^{\text{out}}} \left(R_{-}(z)\Delta(z)\right)_{11} \left(\frac{z-i}{z+i} \right)^\beta \frac{\widetilde{f_m}(z)}{z} dz = O\left(m^{-1/2}\right),
	\end{equation*}
	which is uniform in $\lambda$ on compact subsets of $|\lambda|>1$.
\end{proposition}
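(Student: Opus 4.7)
My plan is to compute the jump matrix $\Delta(z)=N(z)(J(z)-I)N(z)^{-1}$ explicitly on each lens piece and then exploit two key simplifications. On the inner lens $\Sigma_j^{''\text{out}}$ (where $|z|<1$), using the form $N=\calD^{\sigma_3}\bigl(\begin{smallmatrix}0 & 1\\-1 & 0\end{smallmatrix}\bigr)$ from \eqref{eq:N-def}, a direct matrix multiplication gives
$$\Delta(z)=\begin{pmatrix}0 & -\calD(z)^2\phi(z)^{-1}z^m \\ 0 & 0\end{pmatrix},$$
whose entire first column vanishes. Consequently $(R_-(z)\Delta(z))_{11}=R_{-,11}(z)\Delta_{11}(z)+R_{-,12}(z)\Delta_{21}(z)\equiv 0$ on $\Sigma_1^{''\text{out}}\cup\Sigma_2^{''\text{out}}$, so these two contours contribute nothing and only the outer lens integrals survive.

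On the outer lens $\Sigma_j^{\text{out}}$ (where $|z|>1$) with $N=\calD^{\sigma_3}$, the analogous computation yields
$$\Delta(z)=\begin{pmatrix}0 & 0 \\ \calD(z)^{-2}\phi(z)^{-1}z^{-m} & 0\end{pmatrix},$$
so that $(R_-(z)\Delta(z))_{11}=R_{-,12}(z)\calD(z)^{-2}\phi(z)^{-1}z^{-m}$. I will then estimate the integrand factor by factor: (i) the standard $R$--RHP analysis in \cite{dik} gives $R(z)-I=O(m^{2|\Re\beta|-1})$ uniformly on $\C\setminus\Gamma_R$, and since $|\lambda|>1$ forces $|\Re\beta|<1/4$ this reduces to $R_{-,12}(z)=O(m^{-1/2})$, uniformly in $\lambda$ on compacts; (ii) by construction the lens $\Sigma_j^{\text{out}}$ stays at distance $\geq\delta$ from the unit circle for some fixed $\delta>0$, so $|z^{-m}|\leq(1+\delta)^{-m}$; (iii) $\calD(z)^{\pm 2}$, $\phi(z)^{-1}$, and $\bigl(\tfrac{z-i}{z+i}\bigr)^\beta$ are uniformly bounded on the lens, which avoids $\pm i$; and (iv) $|\widetilde{f_m}(z)/z|=O((1+\delta)^m/m)$, obtained from the elementary bound $|f_m(z)|\leq\sum_{\ell=1}^m|z|^\ell/\ell$ whose last term dominates for $|z|=1+\delta$, together with the boundedness of $\tfrac{1}{2i}\ln\tfrac{z-i}{z+i}$ on the lens.

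Multiplying the bounds in (i)--(iv), the integrand on $\Sigma_j^{\text{out}}$ is
$$O(m^{-1/2})\cdot O\bigl((1+\delta)^{-m}\bigr)\cdot O(1)\cdot O\bigl((1+\delta)^m/m\bigr)=O(m^{-3/2}),$$
uniformly on the lens and in $\lambda$ on compacts of $|\lambda|>1$. Since the lens contours have bounded length, the contour integral is $O(m^{-3/2})=O(m^{-1/2})$, proving the claim.

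The main obstacle is the exponential growth of $\widetilde{f_m}$ on the outer lens, which a priori threatens to spoil the estimate; the key point is that this growth is \emph{exactly} cancelled by the exponential decay of $z^{-m}$ appearing in $\Delta$, leaving a genuine polynomial decay $m^{-3/2}$ provided by the $m^{-1/2}$ from $R-I$ and the extra $1/m$ from the summation defining $\widetilde{f_m}$. The algebraic vanishing on the inner lens, while trivial once noticed, is what makes the proposition work cleanly, since it removes the need to exploit any delicate cancellation between inner and outer contributions.
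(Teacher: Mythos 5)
Your proof is correct and follows the same approach as the paper: compute $\Delta$ explicitly on the lenses, observe that the inner lens integrals vanish because $\Delta$ has zero first column there, and bound the outer lens integrand using the standard $R$--RHP estimate $R-I=O(m^{2|\Re\beta|-1})$ together with the cancellation of $z^{-m}$ against the growth of $\widetilde{f_m}$. The only (minor) difference is that you retain the $1/\ell$ factor in $f_m$, giving $\widetilde{f_m}(z)z^{-m}=O(1/m)$ and hence a slightly sharper $O(m^{-3/2})$ bound, whereas the paper is content with the cruder $\widetilde{f_m}(z)z^{-m-1}=O(1)$ leading directly to the claimed $O(m^{-1/2})$.
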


\begin{proof}
Notice that for $z\in \Sigma_j^{''\text{out}}$ we have
\begin{equation}\label{eq:0}
	\Delta(z) = \calD(z)^{\sigma_3}
	\left(\begin{matrix}
		0 & 1 \\
		-1 & 0
	\end{matrix}\right)
	\left(\begin{matrix}
		0 & 0 \\
		\phi(z)^{-1} z^m & 0
	\end{matrix}\right)
	\left(\begin{matrix}
		0 & 1 \\
		-1 & 0
	\end{matrix}\right)
	\calD(z)^{-\sigma_3}
	=
	\left(\begin{matrix}
		0 & * \\
		0 & 0
	\end{matrix}\right).
\end{equation}
Thus $\left(R_{-}(z)\Delta(z)\right)_{11} = 0$, and we conclude that
\begin{equation}\label{eq:inner-lense-est}
	\int_{\Sigma_j^{''\text{out}}} \left(R_{-}(z)\Delta(z)\right)_{11} \left(\frac{z-i}{z+i} \right)^\beta \frac{\widetilde{f_m}(z)}{z} dz = 0 \qquad (j=1,2).
\end{equation}

Next, for $z\in \Sigma_j^{\text{out}}$ we have
\begin{equation*}
	\Delta(z) = \calD(z)^{\sigma_3}
	\left(\begin{matrix}
		0 & 0 \\
		\phi(z)^{-1} z^{-m} & 0
	\end{matrix}\right)
	\calD(z)^{-\sigma_3}
	= \phi(z)^{-1} z^{-m} \left(\frac{z-i}{z+i} \right)^{-2\beta}
	\left(\begin{matrix}
		0 & 0 \\
		1 & 0
	\end{matrix}\right).
\end{equation*}
Note that
\begin{equation*}
	\left|\frac{\widetilde{f_m}(z)}{z^{m+1}}\right| \leq \left|\frac{{f_m}(z)}{z^{m+1}}\right| + \frac{1}{2} \left|\frac{\ln\frac{z-i}{z+i}}{z^{m+1}}\right| \leq \sum_{\ell=1}^m |z|^{\ell-m-1} + O(1) %= \frac{|z|^{m}-1}{|z|-1} |z|^{-m} + O(1) 
	= \frac{1-|z|^{-m}}{|z|-1} + O(1) = O(1)
\end{equation*}
as $m\to\infty$, uniformly in $z$ on $\Sigma_1^{\text{out}}\cup\Sigma_2^{\text{out}}$.
Therefore,
\begin{align}
	\int_{\Sigma_j^{\text{out}}} \left(R_{-}(z)\Delta(z)\right)_{11} \left(\frac{z-i}{z+i} \right)^\beta \frac{\widetilde{f_m}(z)}{z} dz 
	= \int_{\Sigma_j^{\text{out}}} \phi(z)^{-1} \left(\frac{z-i}{z+i} \right)^{-\beta} R_{12,-}(z) \frac{\widetilde{f_m}(z)}{z^{m+1}} dz \nonumber\\
	= \int_{\Sigma_j^{\text{out}}} O(1) O\left(m^{2|\Re\beta|-1}\right) O(1) dz = O\left(m^{-1/2}\right) \label{eq:outer-lense-est}
\end{align}
as $m\to\infty$ ($j=1,2$), which is uniform in $\lambda$ on compact subsets of $|\lambda|>1$.
\end{proof}

Finally, we estimate the integrals over the circles.

\subsection{The integrals over the circles $\partial U_1$ and $\partial U_2$}
We shall only examine the integral over $\partial U_1$ and note that the case of $\partial U_2$ is very similar.
We will deform the contour $\partial U_1$ inside the disk $U_1$.
Recall the jump condition \eqref{eq:R-jump-3}, and that the jump there $P_1(z) N(z)^{-1} = \Delta(z) + I$ is analytic only in a neighbourhood of $U_1\setminus\left(\Sigma_1\cup\Sigma_2\cup\Sigma_1^{''}\cup\Sigma_2^{''}\right)$.
The disk $U_1$ is cut into five components by $[-i,i]\cup\Sigma_1\cup\Sigma_2\cup\Sigma_1^{''}\cup\Sigma_2^{''}$, on all of which the integrand is analytic and continuous up to the boundaries, except maybe at $i$.
Therefore, we can deform the five arcs in the way shown in Figure \ref{fig:inside-U1}, and obtain the following:
\begin{align}
	\int_{\partial U_1} \left(R_{-}(z)\Delta(z)\right)_{11} & \left(\frac{z-i}{z+i} \right)^\beta \frac{\widetilde{f_m}(z)}{z} dz \nonumber \\
	&= -\int_{\gamma_m\cap U_1} \left(R(z)\Delta(z)\right)_{11} \left(\frac{z-i}{z+i} \right)^\beta \frac{\widetilde{f_m}(z)}{z} dz \nonumber \\
	&+ \int_{\Sigma_1^{i,m}\cup\Sigma_2^{i,m}\cup\Sigma_1^{''i,m}\cup\Sigma_2^{''i,m}} \left(R(z)\left[\Delta_+(z)-\Delta_-(z)\right]\right)_{11} \left(\frac{z-i}{z+i} \right)^\beta \frac{\widetilde{f_m}(z)}{z} dz \label{eq:U1-int},
\end{align}
where $\gamma_m$ was defined just after \eqref{eq:int-2}, and $\Sigma_j^{i,m} = \Sigma_j\cap U_1\cap \{z\colon |z-i|>1/m\}$, $\Sigma_j^{''i,m} = \Sigma_j^{''}\cap U_1\cap \{z\colon |z-i|>1/m\}$.
\begin{figure}[h]
	\includegraphics[scale=0.25]{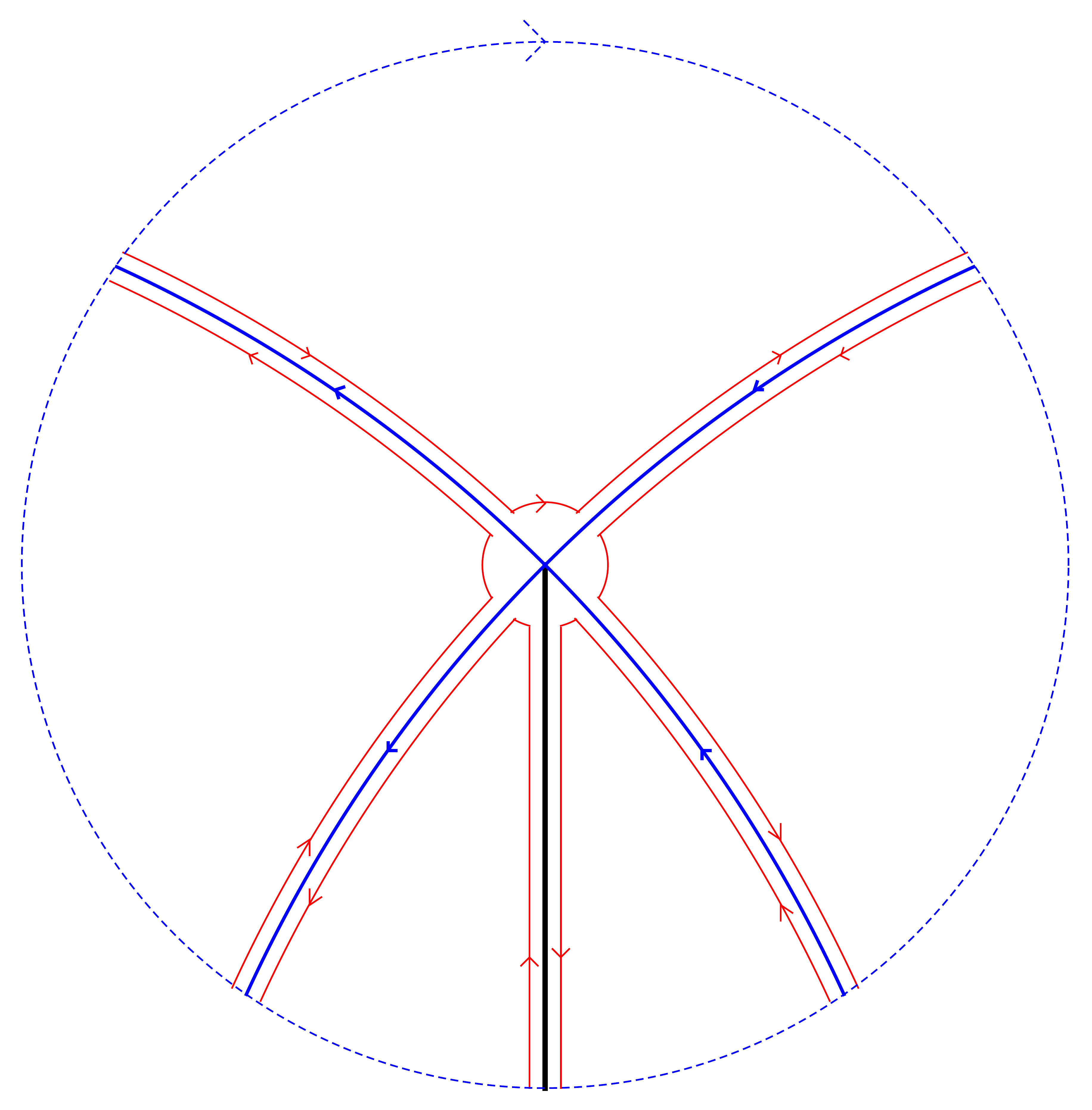}
	\caption{The contour deformation of $\partial U_1$.}\label{fig:inside-U1}
\end{figure}

First, we handle the integral over $\gamma_m\cap U_1$.

\begin{proposition}
	We have 
	\begin{equation}\label{eq:gamma_m-U_i-est}
		\int_{\gamma_m\cap U_1} \left(R(z)\Delta(z)\right)_{11} \left(\frac{z-i}{z+i} \right)^\beta \frac{\widetilde{f_m}(z)}{z} dz = O\left(m^{-1/4}\right) \qquad \text{as} \;\; m\to\infty,
	\end{equation}
	uniformly in $\lambda$ on compact subsets of $|\lambda|>1$. Moreover, the same estimation holds for $U_2$.
\end{proposition}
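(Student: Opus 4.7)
The plan is to split $\gamma_m\cap U_1$ into three pieces: the circle $C^i_m=\{z\colon|z-i|=1/m\}$, and the two line segments on the $\pm$ sides of the cut $[-i,i]$ running from $i(1-1/m)$ out to the exit of $[-i,i]$ from $U_1$. On each piece I would bound the integrand by combining independent estimates on $\|R(z)\Delta(z)\|$, the Szeg\H{o} weight $\bigl|\bigl(\frac{z-i}{z+i}\bigr)^\beta\bigr|$, $|\widetilde{f_m}(z)|$, and the length of the piece.

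On $C^i_m$, $|z-i|=1/m$ and $|z+i|\sim 2$ give $\bigl|\bigl(\frac{z-i}{z+i}\bigr)^\beta\bigr|=O(m^{|\Re\beta|})=O(m^{1/4})$, while the decomposition \eqref{eq:fm1-def} together with Lemma \ref{lem:fm-est2} yields $\widetilde{f_m}(z)=O(\ln m)$ (dominated by the logarithm term). For $R(z)\Delta(z)=R(z)\bigl(P_1(z)N(z)^{-1}-I\bigr)$ I would unwind the explicit formulas \eqref{eq:Pi}, \eqref{eq:F1}, \eqref{eq:E} and \eqref{eq:N-def}: the only factors carrying $m$-growth are $m^{-\beta\sigma_3}$ inside $E$ and $\calD(z)^{\pm\sigma_3}$ inside $N(z)^{-1}$, each of size $O(m^{1/4})$, whereas $\Psi_1(\zeta)$ at $|\zeta|\sim 1$, $F_1$ and $z^{\pm m\sigma_3/2}$ are $O(1)$ on $C^i_m$. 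Combined with the global steepest-descent bound $R(z)=I+O(m^{2|\Re\beta|-1})=O(1)$, this gives $\|R\Delta\|=O(m^{1/2})$ on the circle. Multiplying the length $2\pi/m$ by all these bounds produces a contribution of size $O(m^{-1/4}\ln m)$, which, on compacta of $\{|\lambda|>1\}$ where $|\Re\beta|$ is bounded strictly below $1/4$, absorbs to $O(m^{-1/4})$.

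For the line-segment pieces I would parameterise $z=it$ with $1/m\le 1-t\le\mathrm{const}$. Now $\bigl|\bigl(\frac{z-i}{z+i}\bigr)^\beta\bigr|=O((1-t)^{\Re\beta})$, Lemma \ref{lem:fm-est1}(ii)--(iii) gives $\widetilde{f_{m,1}}(it)=O(1)$, and the logarithm in \eqref{eq:fm1-def} contributes at most $O(\ln m)$ near $i$. The remaining input is $\Delta$ on these segments: with $\zeta=m\ln(z/i)$ having $|\zeta|\ge 1$ and growing, the large-$\zeta$ asymptotics of the confluent hypergeometric entries of $\Psi_1$, combined with the $e^{\pm\zeta/2}$ factors, must match against the $z^{\pm m\sigma_3/2}$ from $P_1$ and the Szeg\H{o} factor from $N^{-1}$; the cancellations give an entrywise bound on $\|P_1N^{-1}-I\|$ which, after integration of the weight $(1-t)^{2\Re\beta-1}$ over $[1/m,\mathrm{const}]$ and incorporation of the $O(m^{-1/2})$ gain from $R-I$ off the cut, yields a contribution of size $O(m^{-1/4})$.

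The main obstacle is the bookkeeping on both pieces: tracking the competing $m$-powers in $E$, $N^{-1}$, $\Psi_1$, the $z^{\pm m\sigma_3/2}$ factors and the Szeg\H{o} weight, and verifying that the matrix cancellations built into the construction of $E$ (which by design make $P_1N^{-1}\to I$ uniformly on $\partial U_1$) survive at the inner radius $1/m$. The case of $U_2$ is handled by the analogous construction around $-i$ with $\Psi_2$ and $P_2$ in place of $\Psi_1$ and $P_1$, yielding the same $O(m^{-1/4})$ bound.
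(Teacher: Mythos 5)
Your proposal follows the paper's approach in outline: deform to the circle $C_i^m$ and the segments hugging the cut, use operator-norm bounds on the circle, and use the explicit confluent-hypergeometric structure of $P_1$ on the segments. Two points, however, need attention.

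First, you treat $\widetilde{f_m}$ as though the logarithm $\frac{1}{2i}\ln\frac{z-i}{z+i}$ were still in it, obtaining $O(\ln m)$ on $C_i^m$ and again claiming a logarithm ``contributes'' on the segments. But the logarithm is exactly what was subtracted in \eqref{eq:fm1-def}, and by definition $\widetilde{f_m}(z)=\frac{\pi}{2}\sgn\Re z+\widetilde{f_{m,1}}(z)$, which by Lemmas \ref{lem:fm-est1} and \ref{lem:fm-est2} is $O(1)$ uniformly on all of $\gamma_m$. This subtraction is not cosmetic: it is what makes the integrand analytic in $\C\setminus(\Gamma_R\cup[-i,i])$, justifying the contour deformation to \eqref{eq:int-2} in the first place. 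Your spurious $\ln m$ happens to be harmless here because $|\Re\beta|$ is bounded strictly below $\frac14$ on compacta of $|\lambda|>1$, but you should not be carrying it.

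Second, and more substantively, the segment estimate is the crux and cannot be run through a single bound on $\|P_1N^{-1}-I\|$. The construction of $E$ makes $P_1N^{-1}-I=O(m^{2|\Re\beta|-1})$ on $\partial U_1$, but at distance $1/m$ from $i$ this matrix is \emph{not} small — its $(2,1)$ entry grows like $m^{2|\Re\beta|}$ — so the ``cancellations built into $E$'' do not survive at the inner radius; your worry at the end of the proposal is well founded. What saves the estimate is the entrywise pairing in
\begin{equation*}
(R\Delta)_{11}=R_{11}\Delta_{11}+R_{12}\Delta_{21}:
\end{equation*}
after substituting $\xi=e^{-i\pi}\zeta$, the large-$\xi$ asymptotics give $\Delta_{11}=(e^{-i\pi}\zeta)^{-\beta}\psi(-\beta,1,e^{-i\pi}\zeta)-1=O(\xi^{-1})$, which integrates against $\frac{d\xi}{m}$ to $O(m^{|\Re\beta|-1}\ln m)$; while $\Delta_{21}\sim\calD(z)^{-2}\xi^{2\Re\beta-1}$ is large near $i$ but is multiplied by the \emph{off-diagonal} $R_{12}=O(m^{2|\Re\beta|-1})$, yielding $O(m^{5|\Re\beta|-2})\int_1^{Cm}\xi^{2\Re\beta-1}d\xi=O(m^{-1/4})$. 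In particular your proposed weight $(1-t)^{2\Re\beta-1}$ does not appear to be the right combination (the Szeg\H{o} factor, $\calD^{-2}$, and the confluent-hypergeometric decay combine to $(1-t)^{\Re\beta-1}$ with appropriate $m$-powers), so this step needs to be worked out explicitly rather than asserted.
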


\begin{proof}
	We have
	\begin{equation*}
		N(z)^{-1} 
		= \left\{\begin{matrix}
			\left(\frac{z-i}{z+i} \right)^{-\beta\sigma_3} & |z|>1, \\
			\left(\begin{matrix}
				0 & -1 \\
				1 & 0
			\end{matrix}\right) \phi(z)^{-\sigma_3} \left(\frac{z-i}{z+i} \right)^{-\beta\sigma_3} & |z|<1,
		\end{matrix}\right.
	\end{equation*}
	hence we obtain $N(z)^{-1} = O(m^{|\Re\beta|})$ as $m\to\infty$, uniformly in $z$ on $C_i^m$ and in $\lambda$ on compact subets of $|\lambda|>1$.
	Using \eqref{eq:zeta}--\eqref{eq:E} we also obtain the following:
	\begin{equation*}
		P_1(z) = m^{-\beta\sigma_3} O(1) \Psi_1\left( m \ln\frac{z}{i} \right) O(1) = m^{-\beta\sigma_3} O(1) = O(m^{|\Re\beta|}) \qquad \left(|z-i|=\frac{1}{m}\right)
	\end{equation*}
	as $m\to\infty$, uniformly in $z$ and in $\lambda$ on compact subets of $|\lambda|>1$.
	Hence $\Delta(z) = P_1(z) N(z)^{-1} - I = O(m^{2|\Re\beta|})$, and therefore
	\begin{align*}
		\int_{C_i^m} \left(R(z)\Delta(z)\right)_{11} \left(\frac{z-i}{z+i} \right)^\beta \frac{\widetilde{f_m}(z)}{z} dz = \int_{C_i^m} O(m^{2|\Re\beta|}) O(m^{|\Re\beta|}) dz = O\left(m^{-1/4}\right)
	\end{align*}
	as $m\to\infty$, uniformly in $\lambda$ on compact subsets of $|\lambda|>1$.
	
Next we show that, by the formulae in Section \ref{sec:local-par}, we have the following for $z\in II\cup III$:
	\begin{align*}
		P_1(z)N(z)^{-1}-I & = E(z)\Psi_1(\zeta)F_1^{-\sigma_3}z^{\frac{m}{2}\sigma_3}
		\left(\begin{matrix}
			0 & -1 \\
			1 & 0
		\end{matrix}\right) \calD(z)^{-\sigma_3} - I \\
		& = \left(\begin{matrix}
			0 & E_{12}(z) \\
			E_{21}(z) & 0
		\end{matrix}\right)
		\Psi_1^{(I)}(\zeta)
		\left(\begin{matrix}
		1 & 0 \\
		e^{i\pi\beta} & 1
		\end{matrix}\right)
		F_1^{-\sigma_3}z^{\frac{m}{2}\sigma_3}
		\left(\begin{matrix}
			0 & -1 \\
			1 & 0
		\end{matrix}\right) \calD(z)^{-\sigma_3} - I \\
		& = F_1 z^{-\frac{m}{2}}
		\left(\begin{matrix}
			* & E_{12}(z)\Psi_{1,22}^{(I)}(\zeta) \\
			* & E_{21}(z)\Psi_{1,12}^{(I)}(\zeta)
		\end{matrix}\right)
		\left(\begin{matrix}
			0 & -1 \\
			1 & 0
		\end{matrix}\right) \calD(z)^{-\sigma_3} - I \\
		& = F_1 z^{-\frac{m}{2}} \calD(z)^{-1} 
		\left(\begin{matrix}
			E_{12}(z)\Psi_{1,22}^{(I)}(\zeta) & * \\
			E_{21}(z)\Psi_{1,12}^{(I)}(\zeta) & *
		\end{matrix}\right)
		- I \\
		& = F_1 z^{-\frac{m}{2}} \calD(z)^{-1} 
		\left(\begin{matrix}
			i^{\frac{m}{2}} \calD(z)\zeta^{-\beta}F_1^{-1}e^{i\pi\beta} \Psi_{1,22}^{(I)}(\zeta) & * \\
			- i^{-\frac{m}{2}} \calD(z)^{-1}\zeta^{\beta}F_1e^{-2i\pi\beta} \Psi_{1,12}^{(I)}(\zeta) & *
		\end{matrix}\right)
		- I \\
		& = 
		\left(\begin{matrix}
			e^{i\pi\beta} e^{-\zeta/2} \zeta^{-\beta}\Psi_{1,22}^{(I)}(\zeta) - 1 & * \\
			-i^{-m} \calD(z)^{-2} F_1^2 e^{-2i\pi\beta} e^{-\zeta/2} \zeta^{\beta}\Psi_{1,12}^{(I)}(\zeta) & *
		\end{matrix}\right) \\
		& = 
		\left(\begin{matrix}
			(e^{-i\pi}\zeta)^{-\beta} \psi(-\beta, 1, e^{-i\pi}\zeta) - 1 & * \\
			i^{-m} \calD(z)^{-2} (e^{-i\pi}\zeta)^{\beta} \psi(1-\beta, 1, e^{-i\pi}\zeta) \frac{\Gamma(1-\beta)}{\Gamma(\beta)} F_1^2  & *
		\end{matrix}\right).
	\end{align*}
In the above we have concentrated on the first 
column only. Denote the radius of $U_1$ by $\varepsilon$. Then,
\begin{align}
		\int_{(\gamma_m\cap U_1)\setminus C_i^m} &\left(R(z)\Delta(z)\right)_{11} \left(\frac{z-i}{z+i} \right)^\beta \frac{\widetilde{f_m}(z)}{z} dz \label{eq:vertical-1} \\
		& = \int_{(1-\varepsilon)i}^{(1-1/m)i} \left(R(z) \left(P_1(z)N(z)^{-1}-I\right)\right)_{11} O\left(m^{|\Re\beta|}\right) \frac{dz}{z} \nonumber \\
		& = \int_{(1-\varepsilon)i}^{(1-1/m)i} R_{11}(z) \left((e^{-i\pi}\zeta)^{-\beta} \psi(-\beta, 1, e^{-i\pi}\zeta) - 1\right) O\left(m^{|\Re\beta|}\right) \frac{dz}{z} \nonumber \\
		& \;\;\;\; + \int_{(1-\varepsilon)i}^{(1-1/m)i} R_{12}(z) \calD(z)^{-2} (e^{-i\pi}\zeta)^{\beta} \psi(1-\beta, 1, e^{-i\pi}\zeta) O\left(m^{|\Re\beta|}\right) \frac{dz}{z}. \nonumber
\end{align}
Finally, by substituting $\xi = e^{-i\pi}\zeta$, $\frac{dz}{z} = -\frac{d\xi}{m}$, and using the large $\xi$ asymptotics of the confluent hypergeometric function and the usual estimates for $R$ and $\calD$, we obtain
\begin{align}
		& = \int^{-m\ln(1-\varepsilon)}_{-m\ln(1-1/m)} \left(\xi^{-\beta} \psi(-\beta, 1, \xi) - 1\right) O\left(m^{|\Re\beta|}\right) \frac{d\xi}{m} + \int^{-m\ln(1-\varepsilon)}_{-m\ln(1-1/m)} \xi^{\beta} \psi(1-\beta, 1, \xi) O\left(m^{5|\Re\beta|-1}\right) \frac{d\xi}{m} \nonumber \\
		& = O\left(m^{|\Re\beta|-1}\right) \int^{-m\ln(1-\varepsilon)}_{1} \xi^{-1} d\xi + O\left(m^{5|\Re\beta|-2}\right) \int^{-m\ln(1-\varepsilon)}_{1} \xi^{2\Re\beta-1} d\xi \nonumber \\
		& = O\left(m^{|\Re\beta|-1}\ln m\right) + O\left(m^{5|\Re\beta|-2}\right) \int^{-m\ln(1-\varepsilon)}_{1} \xi^{-1/2} d\xi 
		= O\left(m^{-1/4}\right) \qquad \text{as} \;\; m\to\infty. \nonumber
\end{align}
\end{proof}

As last step,
we consider the integrals over $\Sigma_j^{i,m}$ and $\Sigma_j^{''i,m}$ $(j=1,2)$.

\begin{proposition}
We have
	\begin{equation}\label{eq:Sigma-U_i-est}
		\int_{\Sigma_1^{i,m}\cup\Sigma_2^{i,m}\cup\Sigma_1^{''i,m}\cup\Sigma_2^{''i,m}} \left(R(z)\left[\Delta_+(z)-\Delta_-(z)\right]\right)_{11} \left(\frac{z-i}{z+i} \right)^\beta \widetilde{f_m}(z) \frac{dz}{z} = O(m^{-1/4}).
	\end{equation}
as $m\to\infty$, uniformly in $\lambda$ on compact subsets of $|\lambda|>1$.
\end{proposition}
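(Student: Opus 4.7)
The plan is to mirror the treatment of the $(\gamma_m\cap U_1)\setminus C_i^m$ integral in the preceding proposition, replacing $\Delta = P_1 N^{-1} - I$ by its jump $\Delta_+ - \Delta_-$ across each lens arc. Since $E$, $F_1$, $z^{\pm m\sigma_3/2}$, and $N(z)^{-1}$ are analytic across $\Sigma_j^{i,m}$ and $\Sigma_j^{''i,m}$, only $\Psi_1$ contributes to the jump, so first I would write
\begin{equation*}
	\Delta_+(z) - \Delta_-(z) = E(z)\,\Psi_{1,-}(\zeta)\,(J_k - I)\,F_1^{-\sigma_3}\,z^{\pm m\sigma_3/2}\,N(z)^{-1}
\end{equation*}
with $k\in\{2, 4, 6, 8\}$ depending on the arc. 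Each matrix $J_k - I$ is strictly lower triangular with a single entry $e^{\pm i \pi\beta}$, so $\Psi_{1,-}(J_k - I)$ has only its first column nonzero; feeding this through the antidiagonal structure of $E$ and the triangular or antidiagonal structure of $N^{-1}$, the entry $(R(\Delta_+ - \Delta_-))_{11}$ comes out as a linear combination of the two types of terms
\begin{equation*}
	\xi^{\mp\beta}\,\psi(\mp\beta, 1, \xi)\,e^{-\xi/2}
	\qquad\text{and}\qquad
	\xi^{\pm\beta}\,\psi(1\mp\beta, 1, \xi)\,e^{-\xi/2},
\end{equation*}
with $\xi = e^{-i\pi}\zeta$, exactly as in the computation performed at the end of the preceding proof.

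Next I would apply the change of variables $\xi = -m\ln(z/i)$, so that $dz/z = -d\xi/m$ and on each arc $|\xi|$ traverses a path from approximately $1$ out to $m\varepsilon$. Combining the standard estimates $R = I + O(m^{2|\Re\beta| - 1})$, $|(\tfrac{z-i}{z+i})^\beta| = O(m^{|\Re\beta|})$, and $|\calD(z)^{\pm 1}| = O(m^{\pm|\Re\beta|})$, together with the bound $\widetilde{f_m}(z) = O(\ln m)$ uniformly on these arcs (the log piece $\tfrac{1}{2i}\ln\tfrac{z-i}{z+i}$ is at most $O(\ln m)$ when $|z-i|\geq 1/m$, while $\widetilde{f_{m,1}}(z) = O(1)$ by Lemmas~\ref{lem:fm-est1}--\ref{lem:fm-est2}), and the asymptotic $\psi(a, 1, \xi) = O(|\xi|^{-\Re a})$ as $|\xi|\to\infty$, the integral over each arc should reduce to a combination of the form
\begin{equation*}
	O\bigl(m^{|\Re\beta| - 1}\ln m\bigr)\int_1^{m\varepsilon}|\xi|^{-1}\,e^{-\Re\xi/2}\,|d\xi| + O\bigl(m^{5|\Re\beta| - 2}\ln m\bigr)\int_1^{m\varepsilon}|\xi|^{2\Re\beta - 1}\,|d\xi|.
\end{equation*}
The first integral is uniformly bounded and the second is $O(m^{2|\Re\beta|})$, so the total contribution is $O(m^{7|\Re\beta| - 2}\ln m) + O(m^{|\Re\beta| - 1}\ln m) = O(m^{-1/4})$ using $|\Re\beta| < 1/4$. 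The computation for $U_2$ is identical by the symmetry $z \mapsto -z$.

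I expect the main technical obstacle to be neither the analytic estimates nor the change of variables (both structurally identical to what is done in the preceding proposition), but rather the bookkeeping: for each of the four arcs one must track the sign in $z^{\pm m\sigma_3/2}$ (depending on whether $|z|$ is greater or less than $1$), the precise sector index $k$ of the jump matrix $J_k$, the corresponding branches of $\psi(\cdot, 1, e^{-i\pi}\zeta)$ that arise from the analytic-continuation convention of $\Psi_1^{(I)}$, and verify that on each arc the exponential $e^{\pm\zeta/2}$ combines with the companion factor $z^{\pm m/2}$ so as to produce genuine decay $e^{-\Re\xi/2}$ in the rescaled variable rather than growth. Once this bookkeeping is in place the required bound follows by the same mechanism as in the preceding proposition.
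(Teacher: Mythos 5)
Your proposal follows essentially the same route as the paper: isolate the jump of $\Psi_1$ via $J_k-I$, push it through $E$, $F_1^{-\sigma_3}z^{\pm m\sigma_3/2}$, and $N^{-1}$, and then estimate using the large-argument asymptotics of the confluent hypergeometric function after the change of variable $\zeta=m\ln(z/i)$. However, you miss a simplification that the paper establishes up front and that is worth making explicit: on the inner-lens arcs $\Sigma_j^{''i,m}$ the quantity $\left(R(z)\left[\Delta_+(z)-\Delta_-(z)\right]\right)_{11}$ vanishes \emph{identically}, not merely is small. The reason is exactly the antidiagonal structure you mention: for $|z|<1$ the matrix $N(z)^{-1}$ carries the extra factor $\left(\begin{smallmatrix} 0 & -1 \\ 1 & 0 \end{smallmatrix}\right)$, so the column-one-only matrix $\Psi_{1,-}(\zeta)(J_k-I)$ gets shifted into column two after right-multiplication by $N(z)^{-1}$; hence the entire first column of $\Delta_+-\Delta_-$ is zero and the $(1,1)$ entry of $R(\Delta_+-\Delta_-)$ dies. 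This is the same mechanism as in \eqref{eq:0} for the outer lenses in the preceding proposition, and the paper invokes it to dispose of the $\Sigma''$ arcs with no further work, leaving only the two outer arcs $\Sigma_j^{i,m}$ to estimate. Your phrasing, treating all four arcs uniformly as producing "a linear combination of the two types of terms," obscures this and would force you to deal with the $\Sigma''$ arcs where $|z|<1$ and the jump carries $z^m$ rather than $z^{-m}$, a case your $e^{-\Re\xi/2}$ bookkeeping does not obviously cover.

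Two further small points. First, the estimate you quote, $\widetilde{f_m}(z)=O(\ln m)$ on the arcs via Lemmas~\ref{lem:fm-est1}--\ref{lem:fm-est2}, is misdirected: those lemmas cover the cut $[-i,i]$ and the small circles $C^{\pm i}_m$, not the lens arcs. On the outer lens $\Sigma_j^{i,m}$ (where $|z|>1$) the paper instead estimates the \emph{combination} $\widetilde{f_m}(z)/z^m$, using $\left|f_m(z)/z^m\right| \leq \sum_{\ell=1}^m 1/\ell=O(\ln m)$ together with $|\ln\tfrac{z-i}{z+i}|=O(\ln m)$ when $|z-i|\geq 1/m$, after the factor $z^{-m}$ has been produced by pairing the $e^{-\zeta/2}$ in $\Psi_1^{(I)}$ with the $z^{-m/2}$ from the parametrix. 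Second, the paper actually attains $O(m^{-1/2})$ on these arcs; your looser $O(m^{-1/4})$ is still sufficient for the stated proposition, but comes from not tracking the exponent of $|\zeta|^{-\Re\beta}$ in the $\psi$ asymptotics carefully. None of these derail the argument, but the $\Sigma''$ vanishing is the one structural observation you should add: it cuts the problem in half before any hard estimates begin.
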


\begin{proof}
First, we notice that the integral in \eqref{eq:U1-int} over $\Sigma_j^{''i,m}$ $(j=1,2)$ vanishes.
Indeed, we have $\Delta_+(z)-\Delta_-(z) = \left(P_{1,+}(z)-P_{1,-}(z)\right) N(z)^{-1}$ and the jump of $P_1$ is exactly the same as that of $S$. Therefore, similarly as in \eqref{eq:0} we obtain that $\left(R(z)\left[\Delta_+(z)-\Delta_-(z)\right]\right)_{11} = 0$.

In the rest of the proof, we shall only deal with the integral over $\Sigma_1^{i,m}$, and note that the other integral over $\Sigma_2^{i,m}$ can be handled very similarly, since the local parametrix does not jump along $\Gamma_7$.
Note that for $z\in\Sigma_1^{i,m}$ we have
\begin{align*}
	\Delta_+(z)-\Delta_-(z) &= \left(P_{1,+}(z)-P_{1,-}(z)\right) N(z)^{-1} 
	= P_{1,-}(z) \left(\begin{matrix}
		0 & 0 \\
		\phi(z)^{-1}z^{-m} & 0
	\end{matrix}\right) \calD(z)^{-\sigma_3} \\
	&= m^{-\beta\sigma_3} i^{\frac{m}{2}\sigma_3}
	\left(\begin{matrix}
		0 & \widetilde{E}_{12}(z) \\
		\widetilde{E}_{21}(z) & 0 \\
	\end{matrix}\right)\Psi_{1,-}(\zeta)F_1^{-\sigma_3}z^{-m\sigma_3/2} \left(\begin{matrix}
		0 & 0 \\
		\phi(z)^{-1}z^{-m} & 0
	\end{matrix}\right) \calD(z)^{-\sigma_3} \\
	&= m^{-\beta\sigma_3} i^{\frac{m}{2}\sigma_3}
	\left(\begin{matrix}
		0 & \widetilde{E}_{12}(z) \\
		\widetilde{E}_{21}(z) & 0 \\
	\end{matrix}\right)\Psi_{1,-}(\zeta) \left(\begin{matrix}
		0 & 0 \\
		1 & 0
	\end{matrix}\right) \phi(z)^{-\sigma_3} F_1^{\sigma_3}z^{-m\sigma_3/2} \calD(z)^{-\sigma_3}\\
	&= m^{-\beta\sigma_3} i^{\frac{m}{2}\sigma_3}
	\left(\begin{matrix}
		0 & \widetilde{E}_{12}(z) \\
		\widetilde{E}_{21}(z) & 0 \\
	\end{matrix}\right)
	\left(\begin{matrix}
		\Psi_{1-,12}(\zeta) & 0 \\
		\Psi_{1-,22}(\zeta) & 0
	\end{matrix}\right) \phi(z)^{-1} z^{-m/2} \calD(z)^{-1} F_1 \\
	&= m^{-\beta\sigma_3} i^{\frac{m}{2}\sigma_3}
	\left(\begin{matrix}
		\widetilde{E}_{12}(z) \Psi_{1-,22}(\zeta) & 0 \\
		\widetilde{E}_{21}(z) \Psi_{1-,12}(\zeta) & 0 \\
	\end{matrix}\right)
	\phi(z)^{-1} z^{-m/2} \calD(z)^{-1} F_1 \\
	&= \left(\begin{matrix}
		m^{-\beta} i^{\frac{m}{2}} \widetilde{E}_{12}(z) \Psi_{1-,22}(\zeta) & 0 \\
		m^{\beta} i^{-\frac{m}{2}} \widetilde{E}_{21}(z) \Psi_{1-,12}(\zeta) & 0 \\
	\end{matrix}\right)
	\phi(z)^{-1} z^{-m/2} \left(\frac{z-i}{z+i} \right)^{-\beta} F_1.
\end{align*}
Thus, the integral over $\Sigma_1^{i,m}$ has the following form:
\begin{align}
	\int_{\Sigma_1^{i,m}} & R_{11}(z)m^{-\beta} i^{\frac{m}{2}} \widetilde{E}_{12}(z) \Psi_{1-,22}(\zeta) \phi(z)^{-1} z^{-m/2} F_1 \widetilde{f_m}(z) \frac{dz}{z} \label{eq:S1im-1}\\
	&+ \int_{\Sigma_1^{i,m}} R_{12}(z)m^{\beta} i^{-\frac{m}{2}} \widetilde{E}_{21}(z) \Psi_{1-,12}(\zeta) \phi(z)^{-1} z^{-m/2} F_1 \widetilde{f_m}(z) \frac{dz}{z}. \label{eq:S1im-2}
\end{align}

Note that we have
\begin{align*}
	\Psi_1(\zeta) = \Psi_1^{(I)}(\zeta) J_1^{-1} J_8^{-1} 
	= \Psi_1^{(I)}(\zeta) 
	\left(\begin{matrix}
		1 & -e^{-i\pi\beta} \\
		e^{i\pi\beta} & 0
	\end{matrix}\right)
	= \left(\begin{matrix}
		* & -e^{-i\pi\beta} \Psi_{1,11}^{(I)}(\zeta) \\
		* & -e^{-i\pi\beta} \Psi_{1,21}^{(I)}(\zeta) 
	\end{matrix}\right) \qquad (\zeta\in VII).
\end{align*}
Therefore, if we use the estimation $\left|\frac{f_m(z)}{z^{m}}\right| \leq \sum_{\ell=1}^m \frac{|z|^{\ell-m}}{\ell} \leq \sum_{\ell=1}^m \frac{1}{\ell}$ $(|z|>1)$, the integral \eqref{eq:S1im-1} becomes
\begin{align*}
	&\int_{\Sigma_1^{i,m}} O\left(m^{-\Re\beta}\right) \Psi_{1-,22}(\zeta) z^{-m/2} \widetilde{f_m}(z) \frac{dz}{z} 
	= \int_{\Sigma_1^{i,m}} O\left(m^{-\Re\beta}\right) \Psi_{1,21}^{(I)}(\zeta) z^{-m/2} \left(f_m(z) - \frac{1}{2i}\ln\frac{z-i}{z+i}\right) \frac{dz}{z} \\
	&= \int_{\Sigma_1^{i,m}} O\left(m^{-\Re\beta}\right) \psi(1+\beta, 1; \zeta) \frac{f_m(z) - \frac{1}{2i}\ln\frac{z-i}{z+i}}{z^{m}} \frac{dz}{z}
	= \int_{\Sigma_1^{i,m}} O\left(m^{-\Re\beta}\ln m\right) \psi(1+\beta, 1; \zeta) \frac{dz}{z}, \\
	&= \int_{\Gamma_8\cap\{\zeta\colon 1+O(1/m) \leq |\zeta|\leq Cm\}} O\left(m^{-\Re\beta}\ln m\right) \zeta^{-1-\beta} \frac{d\zeta}{m} = O\left(m^{-\Re\beta}m^{-1}(m^{1/4}+1)\ln m\right) = O(m^{-1/2}),
\end{align*}
where $C>0$ is a constant, $\frac{dz}{z} = \frac{d\zeta}{m}$, and we used the large $\zeta$ asymptotics of $\psi(1+\beta, 1; \zeta)$.
The other integral \eqref{eq:S1im-2} can be estimated somewhat similarly as follows:
\begin{align*}
	&\int_{\Sigma_1^{i,m}} O\left(m^{-1+2|\Re\beta|}\right) m^{\beta} \Psi_{1,12}(\zeta) z^{-m/2} \widetilde{f_m}(z) \frac{dz}{z}
	= \int_{\Sigma_1^{i,m}} O\left(m^{-1+2|\Re\beta|}\right) m^{\beta} \Psi_{1,11}^{(I)}(\zeta) z^{-m/2} \widetilde{f_m}(z) \frac{dz}{z} \\
	&= \int_{\Sigma_1^{i,m}} O\left(m^{-1+2|\Re\beta|}\right) m^{\beta} \psi(\beta, 1; \zeta) \frac{\widetilde{f_m}(z)}{z^{m}} \frac{dz}{z} \\
	&= \int_{\Gamma_8\cap\{\zeta\colon 1+O(1/m) \leq |\zeta|\leq Cm\}} O\left(m^{-1+2|\Re\beta|}\ln m\right) \left(\frac{\zeta}{m}\right)^{-\beta} \frac{d\zeta}{m} \\
	&= O\left(m^{-1+2|\Re\beta|}\ln m\right) \int_{0}^C t^{-\Re\beta} dt = O\left(m^{-1+2|\Re\beta|}\ln m\right) = O\left(m^{-1/2}\right).
\end{align*}
\end{proof}

\textcolor{black}{With the above proof we have finished proving Lemma \ref{lem:main}, which we use in the next section.}

\section{Calculating the mutual information}\label{sec:final}
In this section we prove Theorem \ref{thm:main}, that is, we calculate \eqref{eq:mut-inf}.
We start with a lemma.

\begin{lemma}\label{lem:beta-est}
	We have
	\begin{equation*}
		\color{black}\left|\tan\left(\tfrac{\pi}{2}\beta\right)\right| \color{black}< 1 \qquad (\lambda\in\C\setminus[-1,1]).
	\end{equation*}
\end{lemma}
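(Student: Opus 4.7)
The plan is to reduce the bound $|\tan(\tfrac{\pi}{2}\beta)|<1$ to the already-established inequality $|\Re\beta|<\tfrac{1}{2}$ coming from \eqref{eq:Re-Im-beta}, via the standard identity for the modulus of $\sin$ and $\cos$ of a complex argument.

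Set $w=\tfrac{\pi}{2}\beta$ and write $w = u+iv$ with $u=\tfrac{\pi}{2}\Re\beta$, $v=\tfrac{\pi}{2}\Im\beta$. Recall the elementary formulas
\begin{equation*}
|\sin w|^2 = \sin^2 u + \sinh^2 v, \qquad |\cos w|^2 = \cos^2 u + \sinh^2 v,
\end{equation*}
valid for any $w\in\C$. Since $|\Re\beta|<\tfrac{1}{2}$ for every $\lambda\in\C\setminus[-1,1]$ (noted just after \eqref{eq:Re-Im-beta}), we have $|u|<\tfrac{\pi}{4}$, so in particular $\cos u\neq 0$, hence $|\cos w|^2 \geq \cos^2 u >0$ and $\tan w$ is well defined.

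From the two formulas above,
\begin{equation*}
|\cos w|^2 - |\sin w|^2 = \cos^2 u - \sin^2 u = \cos(2u) > 0
\end{equation*}
because $|2u|<\tfrac{\pi}{2}$. Dividing by $|\cos w|^2>0$ gives $1-|\tan w|^2 > 0$, i.e. $|\tan(\tfrac{\pi}{2}\beta)|<1$, as required. The proof has no real obstacle; the only thing to keep track of is that the strict inequality $|\Re\beta|<\tfrac{1}{2}$ from the paper translates into the strict inequality $|u|<\tfrac{\pi}{4}$, which in turn gives strict positivity of $\cos(2u)$.
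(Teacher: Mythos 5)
Your proof is correct. It takes a slightly different but essentially equivalent route from the paper. The paper argues geometrically: it considers the parallelogram with vertices $0$, $e^{i\frac{\pi}{2}\beta}$, $e^{-i\frac{\pi}{2}\beta}$, $e^{i\frac{\pi}{2}\beta}+e^{-i\frac{\pi}{2}\beta}$, observes that $|\arg(e^{\pm i\frac{\pi}{2}\beta})| = \tfrac{\pi}{2}|\Re\beta| < \tfrac{\pi}{4}$ (so the angle at $0$ is acute), and deduces that the diagonal $|e^{i\frac{\pi}{2}\beta}+e^{-i\frac{\pi}{2}\beta}| = 2|\cos(\tfrac{\pi}{2}\beta)|$ exceeds the diagonal $|e^{i\frac{\pi}{2}\beta}-e^{-i\frac{\pi}{2}\beta}| = 2|\sin(\tfrac{\pi}{2}\beta)|$. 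You instead compute $|\cos w|^2 - |\sin w|^2 = \cos(2u)$ via the standard identities $|\sin w|^2 = \sin^2 u + \sinh^2 v$ and $|\cos w|^2 = \cos^2 u + \sinh^2 v$. Both arguments reduce to the same inequality $\cos(\pi\Re\beta) > 0$, which holds precisely because $|\Re\beta| < \tfrac{1}{2}$; indeed the parallelogram comparison $|a+b|^2 - |a-b|^2 = 4\Re(a\bar b)$ with $a=e^{i\frac{\pi}{2}\beta}$, $b=e^{-i\frac{\pi}{2}\beta}$ gives $\Re(a\bar b) = \cos(\pi\Re\beta)$, so the two proofs are the same calculation dressed differently. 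Your algebraic version is arguably more self-contained in that it avoids invoking a geometric fact about diagonals of a parallelogram, at the cost of recalling (or rederiving) the modulus identities; neither approach has an advantage in rigor or scope.
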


\begin{proof}
	This is a simple geometric observation. Consider the parallelogram on the complex plane with vertices $0$, $e^{i\frac{\pi}{2}\beta}$, $e^{-i\frac{\pi}{2}\beta}$ and $e^{i\frac{\pi}{2}\beta}+e^{-i\frac{\pi}{2}\beta}$. Notice that $|\arg (e^{\pm i\frac{\pi}{2}\beta})|<\pi/4$, hence the angle in the parallelogram at $0$ is less than $\pi/2$. Therefore $|e^{i\frac{\pi}{2}\beta} - e^{-i\frac{\pi}{2}\beta}| < |e^{i\frac{\pi}{2}\beta} + e^{-i\frac{\pi}{2}\beta}|${, which completes the proof}.
\end{proof}

%Note that $\frac{d}{dv} e(x,v) = \frac{1}{2}\left( \ln(x-v)-\ln(x+v) \right)$. 
Integration by parts gives the following for all $\varepsilon>0$ and $m,n\in\N$:
\begin{align*}
	&\frac{-1}{2\pi i} \oint_{\Gamma_\varepsilon} e(1+\varepsilon,\lambda)\frac{d}{d\lambda}\ln \widehat{D}(\lambda) \; d\lambda = \frac{1}{2\pi i} \oint_{\Gamma_\varepsilon} \ln \widehat{D}(\lambda) \frac{d}{d\lambda}e(1+\varepsilon,\lambda) \; d\lambda \\
	&= \frac{1}{2\pi i} \oint_{\Gamma_\varepsilon} \ln\left(1 - \left\langle\vecG,\vecg\right\rangle\left\langle\vecGG,\vecgg\right\rangle\right) \frac{d}{d\lambda}e(1+\varepsilon,\lambda) \; d\lambda \\
	& \color{black} = \frac{1}{2\pi i} \oint_{\Gamma_\varepsilon} \ln\left(1 - \left[i\tan\left(\tfrac{\pi}{2}\beta\right) + O(m^{-1/4})\right]\left[i\tan\left(\tfrac{\pi}{2}\beta\right) + O(n^{-1/4})\right]\right) \frac{d}{d\lambda}e(1+\varepsilon,\lambda) \; d\lambda \\
	&\color{black} = \frac{1}{2\pi i} \oint_{\Gamma_\varepsilon} \ln\left(1 + \tan^2\left(\tfrac{\pi}{2}\beta\right) + O(m^{-1/4}+n^{-1/4})\right)\frac{d}{d\lambda}e(1+\varepsilon,\lambda) \; d\lambda.
\end{align*}
Therefore, equation \eqref{eq:mut-inf} becomes
\begin{align}
	& \color{black}\lim_{\varepsilon\searrow 0} \frac{1}{2\pi i} \oint_{\Gamma_\varepsilon} \ln\left(1 + \tan^2\left(\tfrac{\pi}{2}\beta\right)\right) \frac{d}{d\lambda}e(1+\varepsilon,\lambda) \; d\lambda \nonumber \\
	& \color{black}=\lim_{\varepsilon\searrow 0} \frac{-1}{2\pi i} \oint_{\Gamma_\varepsilon} e(1+\varepsilon,\lambda) \frac{d}{d\lambda}\ln\left(1 + \tan^2\left(\tfrac{\pi}{2}\beta\right)\right) \; d\lambda, \label{eq:mut-inf-2}
\end{align}
where we used integration by parts.
Note that \color{black}
\begin{align*}
	&\frac{d}{d\lambda}\ln\left(1 + \tan^2\left(\tfrac{\pi}{2}\beta\right)\right) = \frac{d}{d\beta}\ln\left(1 + \tan^2\left(\tfrac{\pi}{2}\beta\right)\right) \frac{d\beta}{d\lambda} %= \frac{1}{2i} \tan\left(\tfrac{\pi}{2}\beta\right) \left( \frac{1}{\lambda+1} - \frac{1}{\lambda-1} \right) \\ & 
	= -i \tan\left(\tfrac{\pi}{2}\beta\right) \frac{1}{1-\lambda^2}.
\end{align*}
\color{black} Hence, equation \eqref{eq:mut-inf-2} becomes \color{black}
\begin{align*}
	\lim_{\varepsilon\searrow 0} \frac{1}{2\pi} \oint_{\Gamma_\varepsilon} e(1+\varepsilon,\lambda) 
	\tan\left(\tfrac{\pi}{2}\beta\right) \frac{1}{1-\lambda^2}
	 \; d\lambda.
\end{align*}
\color{black} Note that by Cauchy's theorem there is a flexibility in choosing $\Gamma_\varepsilon$. 
We observe that for $|\lambda-1|=\frac{\varepsilon}{2}$ we have $e(1+\varepsilon,\lambda) = -\frac{1+\varepsilon+\lambda}{2}\ln\frac{1+\varepsilon+\lambda}{2} - \frac{1+\varepsilon-\lambda}{2}\ln\frac{1+\varepsilon-\lambda}{2} = O(\varepsilon)+O(\varepsilon\ln\varepsilon) = O(\varepsilon\ln\varepsilon)$ as $\varepsilon\searrow 0$.
%Also since $\lim_{\lambda\to 1} |e^{i\frac{\pi}{2}\beta}| = \lim_{\lambda\to 1} \left|{\frac{\lambda+1}{\lambda-1}}\right|^{1/4} = \infty$, we have $\lim_{\lambda\to 1} \frac{e^{i\frac{\pi}{2}\beta} - e^{-i\frac{\pi}{2}\beta}}{e^{i\frac{\pi}{2}\beta} + e^{-i\frac{\pi}{2}\beta}} = 1$. 
Therefore, if $C_1^\varepsilon$ denotes the circle around $1$ with radius $\frac{\varepsilon}{2}$, then \color{black}
\begin{align*}
	\lim_{\varepsilon\searrow 0} \frac{1}{2\pi} \oint_{C_1^\varepsilon} e(1+\varepsilon,\lambda) 
	\tan\left(\tfrac{\pi}{2}\beta\right) \frac{1}{1-\lambda^2}
	 \; d\lambda 
	 = \lim_{\varepsilon\searrow 0} \oint_{C_1^\varepsilon} O(\ln\varepsilon) \; d\lambda = 0.
\end{align*}
\color{black}
We similarly get that the integral over $C_{-1}^\varepsilon$ converge to 0.
Therefore, \textcolor{black}{by the Lebesgue dominant convergence theorem,} we conclude that \eqref{eq:mut-inf-2} is equal to the following: \color{black}
\begin{align}
	&\lim_{\varepsilon\searrow 0} \frac{1}{2\pi} \int_{-1+\frac{\varepsilon}{2}}^{1-\frac{\varepsilon}{2}} e(1+\varepsilon,\lambda) 
	\left[ \tan\left(\tfrac{\pi}{2}\beta_+\right) - \tan\left(\tfrac{\pi}{2}\beta_-\right) \right] \frac{1}{1-\lambda^2}
	 \; d\lambda \nonumber\\
	 & =\frac{1}{2\pi} \int_{-1}^{1} e(1,\lambda) 
	\left[ \tan\left(\tfrac{\pi}{2}\beta_+\right) - \tan\left(\tfrac{\pi}{2}\beta_-\right) \right] \frac{1}{1-\lambda^2}
	 \; d\lambda, \label{eq:mut-inf-3}
\end{align}
\color{black} where the interval $[-1,1]$ is oriented from the right to the left, hence its $+/-$ sides are its below/upper sides.
Note that $\lambda \mapsto \beta$ transforms $\C\setminus [-1,1]$ onto $\C\setminus (-\infty,0]$. Also, if $(-\infty,0]$ is oriented from the left to right, then the $+$ side of $[-1,1]$ is mapped onto the $+$ side of $(-\infty,0]$. In particular, 
\begin{equation*}
	\beta_{\pm} = \frac{1}{2\pi i} \left(\ln\left|\tfrac{\lambda+1}{\lambda-1}\right| \pm \pi i\right)
	= \frac{1}{2\pi i} \left(\ln\left(\tfrac{1+\lambda}{1-\lambda}\right) \pm \pi i\right)
	\qquad (\lambda\in(-1,1)),
\end{equation*}
\color{black} and hence a straightforward calculation shows that
\begin{align*}
	\tan\left(\tfrac{\pi}{2}\beta_+\right) - \tan\left(\tfrac{\pi}{2}\beta_-\right) 
	= 2\sqrt{1-\lambda^2} \qquad (\lambda\in(-1,1)).
\end{align*}

Now, plugging in the above into \eqref{eq:mut-inf-3} we obtain
\begin{align*}
	& \frac{-1}{\pi} \int_{-1}^{1} 
	\left(\frac{1+\lambda}{2}\ln\frac{1+\lambda}{2} + \frac{1-\lambda}{2}\ln\frac{1-\lambda}{2}\right)
	\frac{1}{\sqrt{1-\lambda^2}}
	 \; d\lambda = \frac{-1}{\pi} \int_{-1}^{1} 
	\sqrt{\frac{1+\lambda}{1-\lambda}}\ln\frac{1+\lambda}{2}
	 \; d\lambda \\
	 & = \frac{-2}{\pi} \int_{0}^{1} 
	\sqrt{\frac{t}{1-t}}\ln t \; dt 
	= \frac{-8}{\pi} \int_{0}^{1} 
	\frac{v^2}{\sqrt{1-v^2}}\ln v \; dv \\
	& = \frac{8}{\pi} \left( \int_{0}^{1} 
	\sqrt{1-v^2}\ln v \; dv
	- \int_{0}^{1} 
	\frac{\ln v}{\sqrt{1-v^2}} \; dv \right)
	= 2\ln 2 - 1. \\
\end{align*}
Above we performed two substitutions $2t = 1+\lambda$ and $t=v^2$. Since the last two special integrals are well known to be equal to $-\frac{\pi}{8}-\frac{\pi}{4}\ln 2$ and $-\frac{\pi}{2}\ln 2$, respectively,
the proof of Theorem \ref{thm:main} is complete. \color{black}

\section{Conclusions}
Understanding entanglement in bipartite systems is of fundamental importance in quantum information,
but at the same time it is often fraught with technical difficulties.  
One of the major reasons that makes it such a challenging problem is that even
in simple systems the calculations are rather involved and often  it is impossible to 
perform a rigorous analysis. Over the past twenty years a lot of research on bipartite 
entanglement has focused on one-dimensional quantum lattice models, because they are
amenable to a certain degree of mathematical manipulations. Originally, the interest concentrated
on the entanglement entropy of a single interval of contiguous spins with the rest of the chain~\cite{AOPFP04,CaCa04,
FPS09,IJK05,IJK,IMM08,JK04,KM04,KM05,KMN,Kor04,ON02,VLRK03} but more recently
physicists have directed their research on the computation of the entanglement entropy of disjoint 
blocks~\cite{ATC10,AEF14,CCT09,CCT11,Car13,FC10,JK11,MVS11}.

In this article we study a quadratic form of Fermi operators and compute the von Neumann entropy of two
disjoint intervals separated by one lattice site.  The major contribution of this paper 
is a rigorous analysis of the asymptotic limit of the entropy as the size of the intervals tends to infinity. 
More precisely, let $P=P_1 \cup P_2$, where $P_1=\{1,\dotsc,m\}$ and $P_2=\{m+2,\dotsc,m + n+1\}$.  Write $S(\rho^{(m)}_P)$
and $S\left(\rho^{(n)}_P\right)$ for the entropies of the blocks of fermions at $P_1$ and $P_2$, respectively; denote by $S(\rho_P)$ the
entanglement entropy between $P$ and the rest of the chain.   The quantity 
$S(\rho^{(m)}_P)$ was computed in~\cite{JK04}. We prove
that the \emph{mutual entropy} between $P_1$ and $P_2$ 
\begin{equation}
S(\rho^{(m)}_P)+S\left(\rho^{(n)}_P\right)-S(\rho_P) \to 2\ln 2 - 1,
\end{equation}
as $m,n \to\infty$. The proof is based on the Riemann-Hilbert method and involves computing the asymptotics of a Toeplitz determinant as well as 
extracting precise information on the asymptotic behaviour of the inverse of a Toeplitz matrix.  Besides the intrinsic physical
importance of the problem, the mathematical result is of interest in its own right.   The asymptotic analysis of Toeplitz determinants has a long history and is still an area of active research. The ultimate goal would be to compute rigorously the entanglement entropy of two disjoint 
gaps separated by an arbitrary number of lattice sites.  Unfortunately, this is still beyond our present ability.

\end{document}